\newtheorem{assumption}{Assumption}
\newtheorem{lemma}{Lemma}
\DeclareMathOperator*{\argmin}{argmin}
\def \b1{{\bf 1}}
\def \ba{{\bf a}}
\def \bX{{\bf X}}
\def \bD{{\bf D}}
\def \mO{{\mathcal O}}
\def \bbeta{{\boldsymbol \beta}}
\def \P{{\mathbb{P}}}
\def \mE{{\mathbb{E}}}
\def \mI{{\mathbb{I}}}
\def \mR{{\mathbb{R}}}
\def \I{{\mathbb{I}}}
\def \tr{{\rm tr}}
\def \d{{\rm d}}
\def \Var{\mathbb{V}{\rm ar}}
\def \Cor{\mathbb{C}{\rm or}}
\def \F{{\mathcal F}}
\def \H{{\mathcal H}}
\theoremstyle{thmstyleone}%
\newtheorem{theorem}{Theorem}
\newtheorem{proposition}{Proposition}%
\theoremstyle{thmstyletwo}%
\theoremstyle{thmstylethree}%
\begin{document}

\title[Online robust estimation and bootstrap inference for function-on-scalar
regression]{Online robust estimation and bootstrap inference for function-on-scalar
regression}


\author[1]{\fnm{Guanghui} \sur{Cheng}}\email{chenggh845@nenu.edu.cn}

\author*[2]{\fnm{Wenjuan} \sur{Hu}}\email{huwj183@cueb.edu.cn}

\author*[3]{\fnm{Ruitao} \sur{Lin}}\email{rlin@mdanderson.org}

\author[4]{\fnm{Chen} \sur{Wang}}\email{rstacw@hku.hk}

\affil[1]{ \orgname{Guangzhou Institute of International Finance, Guangzhou University}, \city{Guangzhou}, \postcode{500016}, \country{China}}

\affil[2]{\orgname{Department of Statistics, Capital University of Economics and Business}, \city{Beijing}, \postcode{100071}, \country{China}}

\affil[3]{\orgname{Department of Biostatistics, The University of Texas MD Anderson Cancer Center}, \city{Houston}, \postcode{77030}, \country{USA}}

\affil[4]{\orgname{Department of Statistics and Actuarial Science, The University of Hong Kong}, \city{Hong Kong}, \postcode{999077}, \country{Hong Kong}}


\abstract{We propose a novel and robust online function-on-scalar regression technique via geometric median to learn associations between functional responses and scalar covariates based on massive or streaming datasets. The online estimation procedure, developed using the average stochastic gradient descent algorithm, offers an efficient and cost-effective method for analyzing sequentially augmented datasets, eliminating the need to store large volumes of data in memory. We establish the almost sure consistency, $L_p$ convergence, and asymptotic normality of the online estimator. To enable efficient and fast inference of the parameters of interest, including the derivation of confidence intervals, we also develop an innovative two-step online bootstrap procedure to approximate the limiting error distribution of the robust online estimator. Numerical studies under a variety of scenarios demonstrate the effectiveness and efficiency of the proposed online learning method. A real application analyzing  PM$_{2.5}$ air-quality data is also included to exemplify the proposed online approach.}

\keywords{Bootstrap approximation,  functional regression, geometric median, online learning, stochastic gradient descent }



\maketitle

\section{Introduction}\label{Introduction}
\subsection{Function-on-scalar regression}
Functional data analysis (FDA) is an effective way for modeling  high-dimensional or  potentially infinite-dimensional datasets and can find applications in may real-life situations. 
One of the most widely studied approaches in the field of FDA, as seen in the literature, is the scalar-on-function regression, where the response is a scalar and the covariates are functional data \citep{Ramsay2005,Cai2006, Morris2015, Wang2017}. In the analysis of gene expression and imaging data, the function-on-scalar regression, which involves regressing a functional response on scalar predictors, has been gaining popularity \citep{Reiss2010,Goldsmith2015,Goldsmith2016,Fan2017,Bauer2018,Ghosal2023}.   
 Formally, the function-on-scalar regression model (FSRM) can be expressed as  
\begin{align}\label{functional_model}
Y(t)=\bX^{T}\bbeta(t)+ U(t)
\end{align}
where $t$ is the functional index, $Y(t)$ is a functional response on a compact support ${\mathcal T}$, and $\bX=(X_1, \ldots, X_d)^{T}$ are scalar covariates in $\mR^d$,  $\bbeta(t)=(\beta_1(t), \ldots, \beta_d(t))^T$ is the slope function that of primary interest, 
and $U(t)$ is an residual process on ${\mathcal T}$ that is independent of the predictors $\bX$. 
Without loss of generality, we assume ${\mathcal T}=[0,1]$. If the support ${\mathcal T}=\{1:p\}$,  model (\ref{functional_model}) reduces to the conventional multivariate linear regression with a $p$-dimensional response, and $\bbeta$ is thus a $d \times p$ matrix. Throughout the paper, with a slight abuse of notation, we will omit ``$(t)$'' when referring to functional processes. For instance, we will use $\bbeta$ to denote the entire slope function $\bbeta(t)$ over $t \in \mathcal{T}$.

As a typical motivating application, \citet{Zhang2017} established an air pollution monitoring network in Beijing as part of the national monitoring effort, beginning in January 2013. This network can collect more than 4 million hourly observations from various air-quality monitoring sites across the city.   
In this particular example, the functional responses $Y(t)$ can be a particular measurement of interest quantifying the air quality at different time points $t$ (say hourly), and the scalar covariates can be some daily environmental indicators such as temperature and wind speed. Hence, by performing the function-on-scalar regression model  using model (\ref{functional_model}), one can study the functional association between the hourly air-quality measurement and the daily environmental indicators. 

In the era of big data, massive datasets and streaming datasets, where data is continually updated, are commonly encountered, presenting challenges in terms of data storage, computation, and analysis. Our paper contributes to addressing two major practical challenges involved in analyzing massive functional datasets in real-world applications:

 First, the real-world data distributions are usually irregular and may have heavy tails. Using methods developed under Gaussian assumptions may be suboptimal. In the literature, the majority of existing efforts have predominantly focused on functional mean regression, where the residual term  $U(t)$ is assumed to be a zero-mean stochastic process on $\mathcal{T}$. Under such an assumption, the least squares approach can be applied to estimate the slope function $\bbeta$. However, such a method may not be efficient for heavy-tailed or irregular data.   To address the irregularities in the distributions of high-dimensional or functional data, function-on-scalar quantile regression has been developed \citep{Yang2019,Liu2020,Zhang2022}. The quantile-based methods estimate conditional quantiles and enable statistical inference on the entire conditional distribution of the response. 
 
Second, as with many real-world applications, the amount of data is enormous. For instance, in the motivating example, the dataset contains 4 million observations. Loading such a large dataset into memory all at once might be infeasible, especially when the data is collected from a large number of wearable devices. Traditional approaches typically require halting data collection at a specific time point to analyze a static dataset, which is highly inefficient. Furthermore, analyzing the entire dataset simultaneously using a standard computer is often impossible due to limited storage and computing capacity.
 

\subsection{Our contributions}
We address the aforementioned challenges by (1) employing the {\it geometric median} \citep{Cardot2013}, for the first time in the literature, to estimate the unknown function $\bbeta$ for the FSRM given by Equation (\ref{functional_model}), and (2) developing an online estimation and inference procedure to eliminate the need for storing the entire dataset. 
 Furthermore,  for an unobserved  location $t\in {\mathcal T}$ that is not included in the observed sampling points, we introduce an interpolation approach based on the discrete location observations, and derive the   convergence rate. Under scenarios with dense observations, it can be demonstrated that the sampling frequency of the functional data has minimal impact on the convergence rate.

The geometric median is widely recognized as a robust central location parameter in the analysis of high-dimensional data \citep{Minsker2015,Godichon2016,God2019, Li2022}. \cite{Vardi2000} developed an iterative algorithm to compute the geometric median. For a random element in Hilbert space, \cite{Cardot2013} 
developed a faster online gradient algorithm to estimate the geometric median, both demonstrating its almost sure consistency as well as asymptotic normality. However, they did not provide an online inference framework for the geometric median. As an application, 
 \cite{Roberts2017} employed the geometric median on high-quality, large-scale Earth imaging data to produce Earth observation composites. 
 This approach can effectively reduce spatial noises while preserving spectral relationships. However, to the best of our knowledge, there is currently no existing research on applying the geometric median to the FSRM.

  Instead of assuming $U(t)$ has a zero mean, we assume $U(t)$ on $\mathcal{T}$ follows a stochastic process with a  zero {\it geometric median}. Specifically, let ${\mathcal H}$ be a separable Hilbert space
and ${\mathcal H}^d$ be the product space of ${\mathcal H}$.  A commonly encountered space for ${\mathcal H}$ is $L^2[0,1]$, i.e. the outcome $Y(t)$ represents a function of time.
The conditional geometric median of $Y(t)$ given  scalar covariates $\bX$ for a fixed location $t$ is assumed to be   $m(t)=\bX^{T}\bbeta(t)$, where $m(t)$ is defined as 
\begin{equation}\label{gm-definition}
m(t):=\argmin_{\alpha \in {\mathcal H}} \mE_{Y(t)|\bX}\{\|Y(t)-\alpha\|-\|Y(t)\|\}\,,
\end{equation}
with $\|\cdot\|$ and $\langle \cdot \rangle$ denoting the associated norm and inner product in ${\mathcal H}$, respectively. As shown in the subsequent sections, our definition of $m(t)$ via  geometric median  makes use of the
spatial correlation between different locations, and 
enhances the efficiency and robustness of estimation and inference for $\bbeta$ in the FSRM, especially when dealing with massive or streaming datasets. 
Moreover, if $m^v(t)=\bX^{T}\bbeta(t)$ is the geometric quantile which minimizes $\mE_{Y|\bX}\{\|Y-m^v(t)\|-\langle Y-m^v(t), v \rangle\}$, then $m^v(t)$ generalizes (\ref{functional_model}) to the geometric quantile regression. This approach was studied in \cite{Padilla2022} as a special case of ReLU Networks for multivariate responses.

To deal with massive functional data, we  propose an online procedure for estimating the slope function $\bbeta$ based on the functional average stochastic gradient descent (ASGD) algorithm. We establish the $L_p$ and almost sure convergence for the proposed online estimator.
We also investigate the asymptotic behavior of the online estimator, which essentially resembles a Gaussian process. This asymptotic result is general and includes the asymptotic normality of the estimator at a given location $t$ as a special case. Subsequently, we also propose a novel online bootstrap resampling procedure to approximate the limiting error distribution of our proposed estimator, enabling online inference about the slope function $\bbeta$.

Unlike classical batch learning methods, online learning dynamically updates model estimates using only the newly added    data, enabling real-time fast decision-making \citep{Chen2020, Zhu2021, Lee2022, Li2022, Liu2022}. Consequently, online learning presents a crucial and efficient approach for handling and making inferences from sequentially augmented datasets.  In a related work, \cite{Xie2023}  generalized the perturbed SGD in  \cite{Fang2018}  to functional linear regression in reproducing kernel Hilbert spaces, and also established the asymptotic distribution of the proposed point-wise estimator. In fact, our online bootstrap approach possesses unique features that distinguish it from   the perturbed stochastic gradient descent  (SGD) algorithm  for the finite-dimensional setting
in \cite{Fang2018} as well as the infinite-dimensional setting in \cite{Xie2023}.  The resulting
inferential procedure maintains the computational efficiency of the functional ASGD
method.


The remainder of this paper is structured as follows. Section \ref{sec:02} introduces the online estimation algorithm for the slope function $\bbeta$ and establishes its almost sure and $L_p$ convergence.
Section \ref{sec:03} proposes a novel online bootstrap algorithm for approximating the limiting error distribution, enabling online inference.
Section \ref{sec:04} extends the online estimation algorithm to an interpolation-based estimator for the entire slope function and demonstrates its convergence properties.
Section \ref{sec:05} presents a numerical study to evaluate the empirical performance of our proposed procedure.
Section \ref{sec:06} illustrates the feasibility and usefulness of our method through a real-world application.
All technical details are provided in the Appendix.

{\bf Notation.}
Denote the Euclidean norm of a vector $\mathbf{x}=(x_1,\ldots,x_p)^{\top}$ of length $p$ as $\|\mathbf{x}\|=\left(\sum_{j=1}^{p}|x_j|^{2}\right)^{1/2}$. Let $\lambda_{i}(\cdot)$ and $\tr(\cdot)$ represent the $i$th largest  eigenvalue and the trace of a square matrix, respectively. 
For a $d \times d $ symmetric matrix $\mathbf{A}$, we define the operator norm as 
$\|\mathbf{A}\|=\lambda_{1}(\mathbf{A})$, and 
the Frobenius norm of $\mathbf{A}$ as $\|\mathbf{A}\|_{F}=\left\{\tr(\mathbf{A}^2)\right\}^{1/2}$. For any $u\in \H$,
denote $\|u\|$ as the norm in ${\mathcal H}$ and $\|u\otimes v\|$ is operator norm for linear operator for any $u,v\in\H$.
Finally, $a_n\lesssim b_n$  means that $a_n\leq Cb_n$ for a positive constant $C$; and $a_n\asymp b_n$ indicates $a_n\lesssim b_n$ and $b_n\lesssim a_n$. We use notations in the format of $C$ (with or without subscripts)  to represent constants throughout the paper. 

\section{Online estimation of the FSRM}\label{sec:02}
Suppose that for each sample $i$, $i=1,\ldots,n$, we observe $d$ covariates $\bX_i=(X_{i1}, \ldots, X_{id})^{T} \in \mR^d$ and a set of functional responses   $\{Y_i(t), t\in \mathcal{T}\} \in {\mathcal H}$, where the relationship between $\bX_i$ and $Y_i(t)$ can be characterized by model (\ref{functional_model}).
 Based on the definition (\ref{gm-definition}), the geometric median-based estimator of $\bbeta$ can be derived by minimizing  the following empirical convex loss function 
$$
\hat{\bbeta}  =  \argmin_{\bbeta}\sum_{i=1}^{n}\|Y_{i}-\bX_{i}^{T}\bbeta\|.
$$
By applying the functional SGD approach, which can be written as $ \boldsymbol{\beta}_{n+1} = \boldsymbol{\beta} _{n} - \gamma_n \nabla || Y_{n+1} - {\bf X}_{n+1} \beta_n ||$,   the geometric median-based estimator $\hat{\bbeta} $ can be recursively obtained via 
 \begin{align}\label{RM1}
 \bbeta_{n+1}=\bbeta_n+\gamma_n \frac{\bX_{n+1}(Y_{n+1}- \bX_{n+1}^{T}\bbeta_n )} {\|Y_{n+1}-\bX_{n+1}^{T}\bbeta_n \|}\,,
 \end{align}
where $\gamma_n$ is the descent step size. Therefore,   $\bbeta_{n+1}$ is an estimate based on the  Robbins–Monro algorithm.  We then average all the sequential estimates to obtain our final ASGD estimate, that is, ${\bar \bbeta}_n=\frac{1}{n}\sum_{i=1}^n\bbeta_i$. Hence, we have
\begin{align}\label{sq0}
\bar{\bbeta}_{n+1}={\bar\bbeta}_{n}+\frac{1}{n+1}(\bbeta_{n+1}-{\bar\bbeta}_n)\,.
\end{align}
To derive the consistency and asymptotic  results for $\bar{\bbeta}_n$,  we make the following assumptions throughout the paper,
\begin{assumption}\label{as1}
There exist constant $C>0$ such that 
$1/C \leq \lambda_{d}(\Sigma)\leq 
\lambda_{1}(\Sigma) \leq C$ almost surely, where $\Sigma=\mE(\bX \bX^{T})$. 
\end{assumption}
\begin{assumption}\label{as2}
The error process  $U(t)$ in the FSRM (\ref{functional_model}) has a unique geometric median at zero.
\end{assumption}
\begin{assumption}\label{as3}	
The error process $U(t)$ is not strongly  concentrated around a single point; that is, 
There  is a constant $C >0$ such that 
$
\mE\{\|U(t)-h\|^{-2}\}\leq C\,
$
for all $h \in {\mathcal H}$.
\end{assumption}
Assumption \ref{as1}  is a mild condition on the covariates in linear regression, and has been imposed in \citet{Zhang2022} and \cite{Liu2020}.   Assumption \ref{as2} implies that the error process $U(t)$ is not concentrated on a straight line \citep{Cardot2013}; that is,  for all $h \in {\mathcal H}$, there is a $h^{'} \in {\mathcal H}$ such that $\langle h, h^{'}\rangle=0$ and ${\rm Var}(\langle U(t), h^{'}\rangle) > 0$. Condition \ref{as3} 
implicitly forces no atoms  exist in the distribution of $U(t)$, and it is naturally satisfied in ${\mathbb R}^d$ whenever $d \geq 3$, see \cite{Cardot2017}.  Note that $\mE\{\|U(t)-h\|^{-2}\}\leq C$ implies $\mE\{\|U(t)-h\|^{-1}\}\leq \sqrt{C}$ by  H\"{o}lder's inequality. 

In addition, by taking  $\gamma_n=\gamma n^{-\alpha}$, where $\gamma$ is a positive constant, we impose the following assumption for the step size function $\gamma_n$:
\begin{assumption}\label{as4}	
The sequence $\{\gamma_n\}_{n=1}^\infty$  satisfies the standard conditions required for the Robbins--Monro algorithm; that is, $\sum {\gamma_n}=\infty$ 
and $\sum {\gamma^2_n}<\infty$.
\end{assumption}

\subsection{$L^p$ and almost sure convergence of the online estimator}
We first derive  the  convergence  results of the SGD estimator (\ref{RM1}) in $L^2$ and $L^4$ norms.
\begin{theorem}\label{th1}
Under Assumptions \ref{as1}-\ref{as4},   it holds that for all $\theta\in (\alpha, 2\alpha)$,
\begin{align*}
\mE\{\|\bbeta_n-\bbeta\|^2\}={\mathcal O}\bigg(\frac{1}{n^{\alpha}}\bigg)\,,~~
\mE\{\|\bbeta_n-\bbeta\|^4\}=\mO\bigg(\frac{1}{n^{\theta}}\bigg)\,.
\end{align*}
\end{theorem}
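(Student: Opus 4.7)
The plan is to treat the recursion (\ref{RM1}) as a stochastic gradient descent on the population convex loss
\[
G(\bbeta') = \mE\{\|Y - \bX^T\bbeta'\| - \|Y\|\},
\]
whose unique minimizer is the true slope $\bbeta$ by Assumption~\ref{as2} together with the positive definiteness of $\Sigma$ guaranteed by Assumption~\ref{as1}. Set
\[
H_{n+1}(\bbeta') = -\frac{\bX_{n+1}(Y_{n+1}-\bX_{n+1}^T\bbeta')}{\|Y_{n+1}-\bX_{n+1}^T\bbeta'\|}, \qquad \Phi(\bbeta') = \mE\{H_{n+1}(\bbeta')\},
\]
so that (\ref{RM1}) reads $\bbeta_{n+1} = \bbeta_n - \gamma_n H_{n+1}(\bbeta_n)$, and conditional on the past $\F_n = \sigma(\bX_i,Y_i:i\leq n)$ one has $\mE\{H_{n+1}(\bbeta_n)\mid\F_n\} = \Phi(\bbeta_n)$. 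Two elementary facts are crucial: first, $\Phi(\bbeta)=0$ because the geometric median of $U$ is zero; second, a componentwise computation gives $\|H_{n+1}(\bbeta')\|_{\H^d}^2 = \|\bX_{n+1}\|^2$, so $\mE\{\|H_{n+1}(\bbeta')\|^2\}\leq \tr(\Sigma)<\infty$ uniformly in $\bbeta'$.

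For the $L^2$ bound, I would expand
\[
\|\bbeta_{n+1}-\bbeta\|^2 = \|\bbeta_n-\bbeta\|^2 - 2\gamma_n\langle\bbeta_n-\bbeta,\,H_{n+1}(\bbeta_n)\rangle + \gamma_n^2\|H_{n+1}(\bbeta_n)\|^2,
\]
condition on $\F_n$, and invoke a quadratic lower bound $\langle\bbeta_n-\bbeta,\,\Phi(\bbeta_n)\rangle \geq c\|\bbeta_n-\bbeta\|^2$ for some $c>0$. This bound rests on the Hessian operator of $G$ at $\bbeta$, whose strict positivity follows from $\lambda_{\min}(\Sigma)>0$ (Assumption~\ref{as1}) together with the finiteness of $\mE\{\|U\|^{-1}\}$ implied by Assumption~\ref{as3}. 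Combined with the uniform stochastic-gradient bound, this yields
\[
\mE\{\|\bbeta_{n+1}-\bbeta\|^2\} \leq (1-2c\gamma_n)\,\mE\{\|\bbeta_n-\bbeta\|^2\} + C\gamma_n^2,
\]
and a standard lemma for Robbins--Monro recursions with $\gamma_n=\gamma n^{-\alpha}$ delivers $\mE\{\|\bbeta_n-\bbeta\|^2\}=\mathcal{O}(\gamma_n)=\mathcal{O}(n^{-\alpha})$.

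For the $L^4$ bound, I would square the identity above, take conditional expectations, and control the cross terms via Cauchy--Schwarz to obtain a schematic recursion of the form
\[
\mE\{\|\bbeta_{n+1}-\bbeta\|^4\} \leq \left(1 - 4c\gamma_n + O(\gamma_n^2)\right)\mE\{\|\bbeta_n-\bbeta\|^4\} + C_1\gamma_n^2\,\mE\{\|\bbeta_n-\bbeta\|^2\} + C_2\gamma_n^4.
\]
Substituting the $L^2$ rate makes the dominant driving term of order $\gamma_n^2\cdot n^{-\alpha}\asymp n^{-3\alpha}$, and iterating the recursion yields $\mathcal{O}(n^{-\theta})$ for every $\theta\in(\alpha,2\alpha)$. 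The open-interval constraint reflects the Jensen lower bound $\mE\|\cdot\|^4\geq (\mE\|\cdot\|^2)^2$ on one side and the usual logarithmic slack inherent in induction-style Robbins--Monro arguments on the other.

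The principal technical obstacle is upgrading the \emph{local} strong monotonicity of $\Phi$ into a bound usable globally, since in the infinite-dimensional space $\H^d$ compactness is unavailable. A clean workaround is to first establish a.s.\ convergence $\bbeta_n\to\bbeta$ using only the monotonicity of $\Phi$ (via a Robbins--Siegmund argument exploiting the martingale structure and the bounded increments $\|\bbeta_{n+1}-\bbeta_n\|\leq\gamma_n\|\bX_{n+1}\|$), and then to localize: on the event that $\|\bbeta_n-\bbeta\|$ is sufficiently small, the Hessian expansion $\Phi(\bbeta_n)=\Gamma(\bbeta_n-\bbeta)+o(\|\bbeta_n-\bbeta\|)$ supplies the quadratic lower bound, while the complementary event contributes negligibly thanks to the uniform second-moment bound on $H_{n+1}$. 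Beyond this two-stage bootstrap from qualitative to quantitative convergence, the main bookkeeping task is the careful Cauchy--Schwarz control of the higher-order cross terms in the $L^4$ expansion.
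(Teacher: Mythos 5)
There is a genuine gap, and it is exactly at the point you flag as the ``principal technical obstacle.'' The quadratic lower bound $\langle \bbeta_n-\bbeta,\Phi(\bbeta_n)\rangle\geq c\|\bbeta_n-\bbeta\|^2$ that drives both of your displayed recursions cannot hold globally: the stochastic gradient of the geometric-median loss is bounded ($\|\Phi(h)\|\leq \mE\|\bX\|\leq C$), so $\langle \bbeta_n-\bbeta,\Phi(\bbeta_n)\rangle$ grows at most \emph{linearly} in $\|\bbeta_n-\bbeta\|$. Your proposed repair --- prove a.s.\ convergence by Robbins--Siegmund and then argue the complementary (non-localized) event ``contributes negligibly thanks to the uniform second-moment bound on $H_{n+1}$'' --- is qualitative where a quantitative bound is required: almost sure convergence carries no rate, and on the bad event the error can be as large as the deterministic envelope $\|\bbeta_n-\bbeta\|\leq\|\bbeta_1-\bbeta\|+\sum_{k<n}\gamma_k\asymp n^{1-\alpha}$, so you must show that $\P(\|\bbeta_n-\bbeta\|\ \text{large})$ decays fast enough to beat $n^{4(1-\alpha)}$ and still leave a remainder of order $n^{-3\alpha}$. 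The paper does this with the events $S_{n,q}=\{\|\bbeta_n-\bbeta\|\leq Mn^{(1-\alpha)/q}\}$, Markov's inequality applied to uniformly bounded moments of arbitrarily high order, and --- crucially --- an \emph{effective} strong-convexity constant that decays like $n^{-(1-\alpha)/q}$ inside $S_{n,q}$ when $\|\bbeta_n-\bbeta\|\geq 1$. That decaying constant, not Jensen or ``logarithmic slack,'' is the real reason the fourth-moment rate is only $\theta<2\alpha$ rather than $2\alpha$; if your fixed-$c$ recursions were valid, one would in fact get $\theta=2\alpha$, which should have signalled the problem.

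A second structural difference follows from this. Because the usable contraction in the fourth-moment recursion is weaker than $\gamma_n$, the paper does not (and cannot) prove the $L^2$ rate first and then feed it into the $L^4$ recursion, as you propose. Instead the $L^2$ bound is obtained from the linearized decomposition $\bbeta_{n+1}-\bbeta=\nu_n(\bbeta_1-\bbeta)+\nu_n M_{n+1}-\nu_n R_{n+1}$ with $\nu_n=\prod_{k\leq n}(\I_{\H^d}-\gamma_k\Gamma_\bbeta)$, where the remainder $r_k=\Phi(\bbeta_k)-\Gamma_\bbeta(\bbeta_k-\bbeta)$ is controlled by $\|\bbeta_k-\bbeta\|^2$ (Proposition~\ref{pr3}), so that the second moment is bounded by $Ce^{-C'n^{1-\alpha}}+Cn^{-\alpha}+C\sup_k\mE\|\bbeta_k-\bbeta\|^4$ (Lemma~\ref{lem1}); conversely the fourth-moment recursion of Lemma~\ref{lem2} is driven by the second moment. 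The two rates are then established \emph{simultaneously} by a joint induction in the proof of Theorem~\ref{th1}. Without this interleaving (or an equivalent quantitative device replacing global strong monotonicity), your sequential argument does not close.
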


 Note that when $\alpha \in(1/2,1)$, it holds that $\sum {\gamma_n}=\infty$  and $\sum {\gamma^2_n}<\infty$ for any given $\gamma>0$. Since the convergence rate is not enough to control the remaining error terms under 
$\alpha<\theta<3\alpha-1$ in \cite{Cardot2017} with $\bX=\I_d$, we impose $\theta \in (\alpha, 2\alpha)$ because   $2\alpha>3\alpha-1$ when $\alpha \in(1/2,1)$.  We also conjecture that $\theta=2\alpha$ is the optimal rate for  the $L^4$ convergence and defer its proof to future work.

Directly applying the Borel--Cantelli lemma, we then establish the  almost sure consistency of the SGD estimator as well as the averaged estimator.
\begin{proposition} \label{pr1}
Under Assumptions \ref{as1}-\ref{as4},  
 the SGD estimator (\ref{RM1}) and the  ASGD estimator  (\ref{sq0})satisfy
\[
\lim_{n\rightarrow \infty}\|\bbeta_n-\bbeta\|=0, ~~a.s.\,,~~\lim_{n\rightarrow \infty}\|\bar{\bbeta}_n-\bbeta\|=0, ~~a.s.
\]
respectively.
\end{proposition}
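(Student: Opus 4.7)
The plan is to exploit the $L^4$ convergence rate from Theorem \ref{th1} together with Markov's inequality and the Borel--Cantelli lemma, precisely as the sentence preceding the statement suggests. Assumption \ref{as4}, combined with the power-law choice $\gamma_n = \gamma n^{-\alpha}$, forces $\alpha \in (1/2, 1)$, so $2\alpha > 1$ and the admissible interval $(\alpha, 2\alpha)$ for $\theta$ in Theorem \ref{th1} contains values strictly greater than $1$. I would fix any such $\theta \in (1, 2\alpha)$ at the outset.

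For the Robbins--Monro iterate, Markov's inequality combined with Theorem \ref{th1} gives, for each fixed $\epsilon > 0$,
\begin{equation*}
\P\bigl(\|\bbeta_n - \bbeta\| > \epsilon\bigr) \le \epsilon^{-4}\, \mE\|\bbeta_n - \bbeta\|^4 = \mO\bigl(n^{-\theta}\bigr).
\end{equation*}
Since $\theta > 1$, the series $\sum_n n^{-\theta}$ is summable, and Borel--Cantelli yields $\P\bigl(\|\bbeta_n - \bbeta\| > \epsilon \text{ infinitely often}\bigr) = 0$. Intersecting the associated null sets over a countable dense sequence $\epsilon_k \downarrow 0$ (e.g.\ $\epsilon_k = 1/k$) produces a single full-measure set on which $\|\bbeta_n - \bbeta\| \to 0$, which is the first claim.

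For the averaged estimator, I would simply decompose $\bar{\bbeta}_n - \bbeta = \frac{1}{n}\sum_{i=1}^{n}(\bbeta_i - \bbeta)$ and use the triangle inequality in $\mathcal{H}^d$,
\begin{equation*}
\|\bar{\bbeta}_n - \bbeta\| \;\le\; \frac{1}{n}\sum_{i=1}^{n}\|\bbeta_i - \bbeta\|.
\end{equation*}
On the full-measure event where $\|\bbeta_i - \bbeta\| \to 0$ established above, Ces\`aro's lemma applied pathwise to this nonnegative real sequence shows that the right-hand side tends to zero, giving the second claim.

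There is no deep obstacle here; the technical work is absorbed by Theorem \ref{th1}. The only point deserving a moment of attention is the summability of the $L^4$ rate, which relies on $2\alpha > 1$ and hence on $\alpha > 1/2$ — a restriction already built into the Robbins--Monro step-size choice via Assumption \ref{as4}.
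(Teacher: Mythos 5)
Your proposal is correct and follows essentially the same route as the paper: Markov's inequality applied to the $L^4$ bound of Theorem \ref{th1} with $\theta\in(1,2\alpha)$, Borel--Cantelli for the SGD iterate, and a Ces\`aro/Toeplitz argument for the averaged estimator. The only cosmetic difference is that the paper uses shrinking thresholds $n^{-\delta}$ with $\delta<(\theta-1)/4$ (which also yields an almost sure rate), whereas you use fixed $\epsilon$ and intersect over a countable sequence, which suffices for the stated claim.
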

 
\subsection{Weak convergence to a Gaussian process }
 Define the random operators $A_{\bbeta}$ and $B_{\bbeta}$ as 
\begin{align*}
A_{\bbeta}=&\frac{1}{\|Y-\bX^{T}\bbeta\|}\bigg(\mI_{\mathcal H}- \frac{(Y-\bX^{T}\bbeta) \otimes (Y-\bX^{T}\bbeta)}{\|Y-\bX^{T}\bbeta\|^2}\bigg)\,,\\
B_{\bbeta}=& \frac{ \{(Y-\bX\bbeta)\otimes (Y-\bX\bbeta)\}}{\|Y-\bX\bbeta\|^2}\,,
\end{align*}
respectively, and denote $A_0=\mE\{A_{\bbeta}\}$ and $B_0=\mE\{B_{\bbeta}\}$,
where $\mI_{\mathcal H}$ is the identity operator in $\mathcal H$ and $a\otimes b(h)=\langle a, h\rangle b$ can be understood as a linear operator
from $\mathcal H$ to $\mathcal H$ with any $a, b, h\in {\mathcal H}$.  Note that $\|a\otimes b\|$ is an operator norm for the linear operator.

The following  theorem gives the asymptotic normality of the ASGD estimator $\bar{\bbeta}_n$, providing a foundation for statistical inference about 
 $\bbeta$.
\begin{theorem}\label{th3}
Under Assumptions \ref{as1}-\ref{as4}, it follows that
\[
\sqrt{n}(\bar{\bbeta}_n-\bbeta)\xrightarrow{{\mathcal L}} N(0, \Sigma^{-1}\otimes A_{0}^{-1}B_{0}A_{0}^{-1})\,,
\]
where $\xrightarrow{{\mathcal L}}$ denotes the weak convergence.
\end{theorem}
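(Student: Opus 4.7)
The plan is to cast the averaged recursion in the Polyak--Ruppert framework and derive the functional CLT via a Hilbert-space martingale central limit theorem on $\mathcal{H}^d$. Write $\Delta_n := \bbeta_n - \bbeta$, $\F_n = \sigma(\bX_i,Y_i : i \le n)$, and
\[
\xi_{n+1}(\bbeta_n) \;=\; \frac{\bX_{n+1}(Y_{n+1}-\bX_{n+1}^T\bbeta_n)}{\|Y_{n+1}-\bX_{n+1}^T\bbeta_n\|},
\]
so that (\ref{RM1}) reads $\bbeta_{n+1} = \bbeta_n + \gamma_n\xi_{n+1}(\bbeta_n)$. The first step is an operator-valued Taylor expansion of $\bbeta \mapsto -\mE[\xi(\bbeta)]$ around the true $\bbeta$. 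By the first-order optimality of the geometric median $\mE[\xi(\bbeta)] = 0$, and differentiating under the expectation (legitimate by Assumption \ref{as3}) together with the independence of $\bX$ and $U$ gives
\[
-\mE[\xi_{n+1}(\bbeta_n)\mid\F_n] \;=\; (\Sigma\otimes A_0)\,\Delta_n \;+\; r_n,
\]
where the Hessian operator factorises thanks to the product structure $\bX \perp U$ and $r_n$ is a second-order remainder controlled by $\|\Delta_n\|^2$ after Assumption \ref{as3} is used to tame the $\|U-\bX^T\Delta_n\|^{-1}$ factors produced by the expansion.

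Substituting yields the centred recursion
\[
\Delta_{n+1} \;=\; \Delta_n \;-\; \gamma_n(\Sigma\otimes A_0)\Delta_n \;+\; \gamma_n\eta_{n+1} \;-\; \gamma_n r_n,
\]
with $\eta_{n+1} := \xi_{n+1}(\bbeta_n) - \mE[\xi_{n+1}(\bbeta_n)\mid\F_n]$ an $\F_{n+1}$-martingale increment. Rearranging as $(\Sigma\otimes A_0)\Delta_n = \gamma_n^{-1}(\Delta_n-\Delta_{n+1}) + \eta_{n+1} - r_n$, summing from $i=1$ to $n$, and applying Abel summation to the telescoping part, one obtains
\[
\sqrt{n}\,(\Sigma\otimes A_0)(\bar\bbeta_n-\bbeta) \;=\; \frac{1}{\sqrt{n}}\sum_{i=1}^{n}\eta_{i+1} \;-\; \frac{1}{\sqrt{n}}\sum_{i=1}^{n} r_i \;+\; T_n,
\]
where $T_n$ collects the boundary piece $n^{-1/2}(\gamma_1^{-1}\Delta_1 - \gamma_n^{-1}\Delta_{n+1})$ and the discrete-derivative contribution $n^{-1/2}\sum_{i=2}^{n}(\gamma_i^{-1}-\gamma_{i-1}^{-1})\Delta_i$. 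Plugging in $\gamma_i \asymp i^{-\alpha}$ and the $L^2$ rate $\|\Delta_i\|_{L^2} = \mO(i^{-\alpha/2})$ from Theorem \ref{th1}, each piece of $T_n$ has $L^2$ norm $\mO(n^{(\alpha-1)/2}) = o(1)$ under Assumption \ref{as4} with $\alpha\in(1/2,1)$.

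Next I would apply a Hilbert-space martingale CLT in the spirit of \cite{Cardot2013} to the triangular array $\{n^{-1/2}\eta_{i+1}\}_{i \le n}$. The conditional covariance operator $n^{-1}\sum_{i=1}^{n}\mE[\eta_{i+1}\otimes\eta_{i+1}\mid\F_i]$ converges in probability to $\mE[\xi(\bbeta)\otimes\xi(\bbeta)]$, and the independence of $\bX$ and $U$ factorises this limit as $\Sigma\otimes B_0$. The Lindeberg condition is automatic from the uniform bound $\|\eta_{i+1}\| \le 2\|\bX_{i+1}\|$ combined with $\mE\|\bX\|^2 \le \tr\Sigma < \infty$ under Assumption \ref{as1}. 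Hence $n^{-1/2}\sum_{i\le n}\eta_{i+1}\xrightarrow{{\mathcal L}} N(0,\Sigma\otimes B_0)$, and since $\Sigma\otimes A_0$ is invertible (Assumptions \ref{as1}--\ref{as3} ensure $\inf_{\|h\|=1}\langle A_0 h, h\rangle > 0$), the continuous mapping theorem delivers the stated limit $N(0,\Sigma^{-1}\otimes A_0^{-1}B_0 A_0^{-1})$.

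The principal difficulty lies in controlling $n^{-1/2}\sum_i r_i$. A careful second-order expansion produces terms of the form $\|\bX\|^2\|\Delta\|^2/\|U-\bX^T\Delta\|$, whose $L^1$ norms demand both Assumption \ref{as3} and a moment of $\|\Delta\|$ strictly stronger than $L^2$. Applying Cauchy--Schwarz with the $L^4$ rate $\|\Delta_i\|_{L^4}^2 = \mO(i^{-\theta/2})$ for some $\theta\in(\alpha,2\alpha)$ together with the $L^2$ rate from Theorem \ref{th1}, I expect $\mE\|r_i\| = \mO(i^{-\alpha})$, so that $n^{-1/2}\sum_{i\le n}\mE\|r_i\| = \mO(n^{1/2-\alpha}) = o(1)$ precisely when $\alpha > 1/2$. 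This is the step at which both the $L^4$ bound established in Theorem \ref{th1} and the constraint $\alpha \in (1/2,1)$ on the step size are indispensable.
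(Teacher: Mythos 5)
Your proposal is correct and follows essentially the same route as the paper: the Polyak--Ruppert decomposition with $r_n=\Phi(\bbeta_n)-\Gamma_{\bbeta}(\bbeta_n-\bbeta)$ bounded by $\|\bbeta_n-\bbeta\|^2$, Abel summation of the telescoping part, control of the boundary and discrete-derivative terms via the rates of Theorem \ref{th1}, a Hilbert-space martingale CLT giving the limit $N(0,\Sigma\otimes B_0)$, and inversion of $\Gamma_{\bbeta}=\Sigma\otimes A_0$. The only cosmetic difference is that you control $n^{-1/2}\sum_i r_i$ through a first-moment bound $\mE\|r_i\|=\mO(i^{-\alpha})$ (for which the $L^2$ rate already suffices), whereas the paper bounds the second moment of the sum using the $L^4$ rate with $\theta\in(1,2\alpha)$; both work under $\alpha\in(1/2,1)$.
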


Theorem \ref{th3} establishes the limiting distribution for the averaged  estimator for the slope function $\bbeta$. It provides theoretical support for hypothesis testing as well as  constructing point-wise confidence intervals for $\bbeta$ by the plug-in approach in an offline setting. 
In the multivariate response situation, the averaged estimator $\bar \bbeta_n$ shares the same asymptotic distribution as the classical estimator in \cite{Bai1990}. However, the literature has not yet explored the asymptotic distribution in Hilbert spaces.

\section{Online bootstrap inference for $\beta(t)$ }\label{sec:03}
To apply Theorem \ref{th3} for inference, it is necessary to estimate the asymptotic variance from the data. However, its computation invariably demands substantial computational resources and may become unstable or even infeasible when handling massive or streaming data. Therefore, we introduce an online bootstrap inference procedure by resampling the residuals to approximate the limiting distribution of $\sqrt{n}(\bar{\bbeta}_n-\bbeta)$ given in Theorem \ref{th3}. 
 
Specifically, we consider the wild-type bootstrap based on the Rademacher weight \citep{Canay2021}. Let $W_1, \ldots ,W_n$ be independent and identically distributed random samples of the Rademacher random variable $W$ with  probability mass $\P(W=1)=\P(W=-1)=1/2$. Define the function
\[
G_{\bbeta}(h)=\mE\{\|W(Y-\bX^{T}\bbeta)-\bX^{T}h\|-\|Y-\bX^{T}\bbeta\|\}\,,
\]
and   the gradient of $G_{\bbeta}(h)$ can be calculated as
\[
\nabla G_{\bbeta}(h)=: -\mE\frac{\bX\{W(Y-\bX^{T}\bbeta)-\bX^{T}h\}}{\|W(Y-\bX^{T}\bbeta)-\bX^{T}h\|}\,.
\]
Obviously, $\nabla G_{\bbeta}(h)$ has a unique minimizer at $h=0$ due to  $ -\mE_{X}\frac{X{W(Y-X^{T} \boldsymbol{\beta} )}}{||W(Y-X^{T} \boldsymbol{\beta} )||}=0.$  In practice, the true slope function $\bbeta$ is typically unknown {\it a priori}, thus we can use the ASGD estimator $\bar{\bbeta}_n$ as a plug-in estimator. 
 It is feasible to simultaneously estimate  $\bbeta$ and $h$ by running  two consecutive ASGD algorithms at each recursive iteration. 
 
 As a result,
at the $n$th iteration with a new observed data point $(\bX_{n+1}, Y_{n+1})$, we first update the ASGD estimate for $\bbeta$ based on equations (\ref{RM1}) and  (\ref{sq0}).
Next, we use bootstrap to obtain a large number of replicates for the residual. More specifically, given a large number of $B$ perturbations (say $B=500$), for $b=1, \ldots, B$, we carry out the following steps: 
 \begin{align}
 \varepsilon_{n+1}^{b}=&W_{n+1}^b(Y_{n+1}- \bX_{n+1}^{T} \bar{\bbeta}_{n}), \label{eqq31} \\
\Upsilon_{n+1}^{b}=&\Upsilon_n^{b}+\gamma_n \frac{\bX_{n+1}(\varepsilon_{n+1}^{b}- \bX_{n+1}^{T}\Upsilon_n^{b} )} {\|\varepsilon_{n+1}^{b}-\bX_{n+1}^{T}\Upsilon_n^{b} \|}\,,  \label{eqq32}  \\
	{\bar \Upsilon}_{n+1}^b=&\bar{\Upsilon}_n^b+\frac{1}{n+1}(\Upsilon_{n+1}^b-\bar{\Upsilon}_n^b)\,. \label{eqq33}
\end{align}
This double recursive algorithm finally generates an averaged estimate $\bar{\bbeta}_{n+1}$ and $B$ bootstrap error samples ${\bar \Upsilon}_{n+1}^1, \ldots, {\bar \Upsilon}_{n+1}^B$, which allows us to estimate the limiting  distribution of $\sqrt{n}(\bar{\bbeta}_n-\bbeta)$. This, in turn, facilitates online inference about $\bbeta$, enabling the assessment of estimation efficiency and the construction of confidence intervals or hypothesis testing regarding model parameters.

The following theorem validates this double recursive algorithm via bootstrapping. 

\begin{theorem} \label{th4}
Under Assumptions \ref{as1}-\ref{as4}, for any non-random element $u \in {\mathcal H}^d$, we have
\[
\sup_{t \in \mR}\bigg|\P(\langle u, {\bar \bbeta}_n-\bbeta \rangle \leq t)- \P^{*}( \langle u, {\bar \Upsilon}_{n}^b) \rangle \leq t)\bigg| \rightarrow 0
\] 
in probability,
where $\P^{*}$ stands for the conditional probability  given obvservations.
\end{theorem}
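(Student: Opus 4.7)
The plan is to mirror the proof of Theorem~\ref{th3} conditionally on the observed data, establishing that $\sqrt{n}\,\bar{\Upsilon}_n^b$ and $\sqrt{n}(\bar{\bbeta}_n-\bbeta)$ share the same Gaussian limit. The pivotal structural observation is that the Rademacher weight satisfies $(W_i^b)^2 = 1$, so $\|W_i^b v\| = \|v\|$ and $(W_i^b v)\otimes(W_i^b v) = v\otimes v$ for every $v\in\H$. Consequently, conditional on the sample, the perturbed residual $\varepsilon_i^b = W_i^b(Y_i - \bX_i^T \bar{\bbeta}_n)$ has a unique geometric median at the origin (by Assumption~\ref{as2} together with the consistency from Proposition~\ref{pr1}), while the $A$- and $B$-operators associated with $\varepsilon_i^b$ converge to $A_0$ and $B_0$ by continuity. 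Hence the bootstrap recursion (\ref{eqq32})--(\ref{eqq33}) is a bona fide ASGD scheme targeting zero, governed by the same limiting operators as (\ref{RM1})--(\ref{sq0}).

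First, I would derive a Bahadur-type expansion for $\bar{\Upsilon}_n^b$ that parallels the one underlying Theorem~\ref{th3}. Expanding the recursion (\ref{eqq32}) around the conditional target $0$, splitting its gradient into a drift $-A_0 \Upsilon_n^b$ and a zero-mean increment, applying Abel summation to the averaging step (\ref{eqq33}), and invoking $L^2$ and $L^4$ bounds on $\Upsilon_n^b$ (transplanted from Theorem~\ref{th1}, whose argument carries over since Assumptions~\ref{as1}--\ref{as4} hold conditionally for the $\varepsilon_i^b$), should yield
\begin{align*}
\sqrt{n}\,\bar{\Upsilon}_n^b \;=\; (\Sigma^{-1}\otimes A_0^{-1})\cdot\frac{1}{\sqrt{n}}\sum_{i=1}^{n} \bX_i\cdot\frac{\varepsilon_i^b}{\|\varepsilon_i^b\|} + o_{\P^{*}}(1).
\end{align*}

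Second, I would replace $\varepsilon_i^b/\|\varepsilon_i^b\|$ by $W_i^b U_i/\|U_i\|$ via a first-order expansion of the normalization map $v\mapsto v/\|v\|$, controlling the error with the $L^p$ rates of Theorem~\ref{th1} and the no-atom bound $\mE\|U-h\|^{-2}\le C$ in Assumption~\ref{as3}. After this substitution the leading sum becomes, conditional on the data, an i.i.d.\ sum of mean-zero Hilbert-space vectors (zero mean because the Rademacher signs are symmetric) whose conditional covariance converges to $\Sigma\otimes B_0$. Projecting onto any fixed $u\in\H^d$ via $\langle u,\cdot\rangle$ and applying a conditional Lindeberg central limit theorem yields
\[
\langle u,\sqrt{n}\,\bar{\Upsilon}_n^b\rangle \;\xrightarrow{\;\mathcal{L}^{*}\;}\; N\!\bigl(0,\,\langle u,(\Sigma^{-1}\otimes A_0^{-1}B_0 A_0^{-1}) u\rangle\bigr)
\]
in probability, which matches the distributional limit of $\langle u,\sqrt{n}(\bar{\bbeta}_n-\bbeta)\rangle$ from Theorem~\ref{th3}. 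The uniform convergence over $t\in\mR$ asserted in the theorem then follows from Pólya's theorem, since the limiting Gaussian c.d.f.\ is continuous.

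The main obstacle will be controlling the bootstrap Bahadur representation under the data-dependent plug-in $\bar{\bbeta}_n$ rather than the true $\bbeta$. One must verify that the $O_{\P}(n^{-1/2})$-perturbation to the target driving the recursion (\ref{eqq32}) does not inflate the remainder. This calls for a coupling between $\Upsilon_n^b$ and an ``ideal'' bootstrap sequence built with $\bbeta$ in place of $\bar{\bbeta}_n$, exploiting local Lipschitzness of $v\mapsto v/\|v\|$ away from the origin (guarded by Assumption~\ref{as3}) together with the $L^p$ and asymptotic bounds on $\bar{\bbeta}_n-\bbeta$ from Theorems~\ref{th1} and~\ref{th3}. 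Showing that the resulting coupling error is $o_{\P^{*}}(n^{-1/2})$ uniformly along the recursion is the delicate technical step on which the whole argument rests.
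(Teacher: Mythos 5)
Your overall architecture matches the paper's: condition on the data, redo the Theorem \ref{th3} argument for the bootstrap recursion (\ref{eqq32})--(\ref{eqq33}) to get a conditional Bahadur-type expansion, identify the same limit $N(0,\Sigma^{-1}\otimes A_0^{-1}B_0A_0^{-1})$, and pass to the sup over $t$ by continuity of the Gaussian limit. The implementation differs in two places, and one of them hides a step that, as you describe it, would fail. You propose to replace $\varepsilon_i^b/\|\varepsilon_i^b\|$ by $W_i^b U_i/\|U_i\|$ ``via a first-order expansion of the normalization map, controlling the error with the $L^p$ rates of Theorem \ref{th1}.'' A norm bound on each difference is of order $\|\bar\bbeta_{i}-\bbeta\|\asymp i^{-1/2}$, so the accumulated error at the $\sqrt n$ scale is $n^{-1/2}\sum_{i\le n} i^{-1/2}=O_{\P}(1)$, not $o_{\P}(1)$; the triangle inequality plus $L^p$ rates is not enough. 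The rescue is exactly the structure you mention only for the leading sum: the differences carry the Rademacher signs, hence are conditionally centered martingale differences, and one must bound their conditional second moments (giving $O_{\P}(\log n/n)$). The paper does this in two pieces: it applies the Hilbert-valued martingale CLT of Lavrentyev--Nazarov directly to the plug-in score $\eta^*_k$ and then compares conditional covariance operators, showing $\|\Omega_n-\Omega_n^*\|\xrightarrow{P}0$, and separately treats $r_k^*=\eta_k-\eta_k^*$ as martingale differences with vanishing conditional second moments. Either route works, but you need to make the conditional-centering argument explicit; otherwise the substitution step is a genuine gap.

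On the plug-in issue you flag as the delicate point, the paper's treatment is simpler than the coupling you anticipate: it never builds an ``ideal'' bootstrap sequence with $\bbeta$ in place of $\bar\bbeta_n$. Instead it absorbs the data-dependence into two remainder terms in the same expansion, $r_{n1}=\nabla G_{\bar\bbeta_n}(\Upsilon_n)-\hat\Gamma_{\bar\bbeta_n}\Upsilon_n$ with $\|r_{n1}\|\le C\|\Upsilon_n\|^2$ (Lemma \ref{lem3}) and $r_{n2}=(\hat\Gamma_{\bar\bbeta_n}-\Gamma_{\bbeta})\Upsilon_n$ with $\|r_{n2}\|\le C\|\bar\bbeta_n-\bbeta\|\,\|\Upsilon_n\|$ (Lemma \ref{lem4}), and then kills them using the $L^2$/$L^4$ rates for $\Upsilon_n$ (Lemma \ref{lem5}, proved as in Theorem \ref{th1}, as you correctly anticipated). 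If you rewrite your proof with the operator-perturbation bounds in place of the coupling, and with the conditional variance computation for the substitution step, it aligns with the paper's argument.
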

Based on the above  asymptotic normality result, the point-wise $(1-\tau)100\%$  confidence interval for $\beta_j(t)$ at a given location $t$ and $j=1\ldots,d$
can be constructed as
\begin{align}\label{SCI1}
{\mathcal C}^{I}_{n,j}(t)\coloneqq [\bar{\beta}_{n,j}(t)- n^{-1/2}q^B_{1-\tau/2,j}(t),~ \bar{\beta}_{n,j}(t)-
n^{-1/2}q^B_{\tau/2,j}(t) ] 
\end{align}
where $q^B_{\tau/2,j}(t)$ and $q^B_{1-\tau/2,j}(t)$ denote the lower and upper $\tau/2$th percentiles of $\sqrt{n}{\bar \Upsilon}_{n,j}^1(t), \ldots, \sqrt{n}{\bar \Upsilon}_{n,j}^B(t)$.  In Algorithm \ref{al1}, we describe the detailed steps to generate the bootstrap percentile-based point-wise confidence intervals ${\mathcal C}^{I}_{n,j}(t)$ of  $\bbeta(t)$. Note that the bootstrap loop for $b=1, \ldots B$ can be computed in parallel.

 \RestyleAlgo{ruled}
 \begin{algorithm}[t]\label{al1}
 	\caption{Online algorithm for updating the point-wise confidence interval of $\bbeta(t)$ at the confidence level $1-\tau$.}
 	\SetAlgoLined
 	\KwData{Data $\bX_1, \ldots, \bX_{n+1}$; the confidence level $1-\tau$; the number of bootstrap iterations $B$; the initial value $\hat{\bbeta}_0$=$\bar{\bbeta}_0$, where $\bar{\bbeta}_0$ can be a bounded random vector.  }
 	\KwResult{Updated point-wise confidence intervals $[\beta_{n+1,j}^{-}(t),\beta_{n+1,j}^{+}(t)]$ of $\beta_{j}(t)$, for $j=1,\ldots,d$.}
 	\BlankLine
 	
 	\For{{$k\leftarrow 1$ \KwTo $n$}}{
 		
		 	 		Compute $\bar{\bbeta}_{k+1}$ by 
 	 		
 	 		 $$
 	 			\bbeta_{k+1}=\bbeta_k+\gamma_k \frac{\bX_{k+1}(Y_{k+1}- \bX_{k+1}^{T}\bbeta_k )} {\|Y_{k+1}-\bX_{k+1}^{T}\bbeta_k \|}\,,\quad 
 	 			\bar{\bbeta}_{k+1}=\bar{\bbeta}_k+\frac{1}{k+1}(\bbeta_{k+1}-\bar{\bbeta}_k)\,. $$
 		
 		\For{{$b\leftarrow 1$ \KwTo $B$}}{
 			
 			Generate a Rademacher random sample $W_{k+1}^{(b)}$, and compute $ \varepsilon_{k+1}^{b}$ and $\Upsilon_{k+1}^b$ by (\ref{eqq31}) and (\ref{eqq32}).

 			
 			Update ${\bar \Upsilon}_{k+1}^b$ by 
 			
 			
 			$${\bar \Upsilon}_{k+1}^b=\bar{\Upsilon}_k^b+\frac{1}{k+1}(\Upsilon_{k+1}^b-\bar{\Upsilon}_k^b)\,.$$
 		}
 	}

 	
 	\For{{$j\leftarrow 1$ \KwTo $d$}}{
	 	Find the $\tau/2$th and $(1-\tau/2)$th sample quantiles of $\{\sqrt{n+1}{\bar \Upsilon}_{n+1, j}^{(b)}(t)\}_{b=1}^B$, denoted by $q^{B}_{\tau/2,j}$ and $q^{B}_{1-\tau/2,j}$.
		
 		Get $\beta_{n+1,j}^{-}(t)=\bar{\beta}_{n+1,j}(t)-(n+1)^{-1/2}q^{B}_{1-\tau/2,j}(t)$.
 		
 		Get $\beta_{n+1,j}^{+}(t)=\bar{\beta}_{n+1,j}(t)-(n+1)^{-1/2}q^{B}_{\tau/2,j}(t)$.
 	}
 \end{algorithm}

Alternatively, we can also estimate the confidence interval based on the bootstrap variance, as given by 
\begin{align}\label{SCI2}
{\mathcal C}^{II}_{n,j}(t)\coloneqq [\bar{\beta}_{n,j}(t)- z_{1-\tau/2}\sqrt{\hat{\sigma}_{j}^2(t)/n}, \bar{\beta}_{n,j}+
z_{1-\tau/2}\sqrt{\hat{\sigma}_{j}^2(t)/n}] 
\end{align}
where $\hat{\sigma}_{j}^2(t)$ is the sample variance of $\sqrt{n}{\bar \Upsilon}_{n,j}^1(t), \ldots, \sqrt{n}{\bar \Upsilon}_{n,j}^B(t)$, and  $z_{1-\tau/2}$ is the $(1-\tau/2)$th percentile of the standard normal distribution.

\section{Spline interpolation}\label{sec:04}
In practice,  the sample functional responses \{$Y_i(t)\}_{\i=1}^n$ are observed on a common discrete grid ${\bf t} = (t_1, \ldots , t_m )$ in $[0,1]$, where ${\bf t}$ can be a collection of different locations from all individuals.
As a result, no data are observed for $t\in \mathcal{T}/{\bf t}$.  To estimate the entire slope function, we  first consider estimating $\bbeta$ at locations  
$\{t_1, \ldots , t_m \}$  by minimizing the following $L_2$ norm-based loss function using discrete data points, 
\begin{align} \label{eq11}
{\tilde L}_n(\bbeta)=\sum_{i=1}^n \sqrt{\sum_{j=1}^m\bigg\{Y_j(t_j)-\bX_i^{T}\bbeta(t_j)\bigg\}^2}\,.
\end{align}
Let $(\hat{\beta}(t_1), \ldots, \hat{\beta}(t_m))$ denote the minimizer to (\ref{eq11}). 
One can then impose smoothness on $\hat{\bbeta}$ by regressing $\hat{\bbeta}(t_j)$ against $t_j$ using a nonparametric kernel or spline smoothing approach \citep{Rice1991, Hall2006}.
We adopt the $r$-th order spline interpolation approach by solving $\hat{\bbeta}^{r}$ from 
\[
\hat{\bbeta}^{r}=\argmin_{g\in W_2^r} \int \{g^{(r)}(t)\}^2 dt,   
\]
subject to
$$g(t_{l})=\hat{\bbeta}(t_{l}),\quad l=1,\ldots, m.$$
Here, $W_2^r$ denotes the $r$-th order Sobolev space, which is also a Hilbert space. Under the common design, \cite{Cai2011} showed that the minimax rate is of  order $m^{-2r}+n^{-1}$, which is jointly determined by the sampling frequency $m$ and the number of curves $n$. Then we give the convergence rate for $\hat{\bbeta}^{r}$ as follows.
\begin{theorem} \label{th5}
Under Assumptions (\ref{as1}) --  (\ref{as4}), suppose that $\max\|t_{l}-t_{l-1}\|\leq C m^{-1}$ holds,   it follows that
\[
\lim_{D\rightarrow \infty}\P(\|\hat{\bbeta}^{r}-\bbeta\|_2^2\geq D(m^{-2r}+n^{-1}))\rightarrow 0,
\]
as $n\rightarrow \infty$, where $\|\cdot\|_2$ denotes the $L_2$ norm.
\end{theorem}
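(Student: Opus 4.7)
The plan is to decompose $\hat\bbeta^{r}-\bbeta$ into a deterministic spline-interpolation bias and a stochastic grid-point estimation error, and to control each piece separately. Let $I_m$ denote the linear $r$-th order natural-spline interpolation operator that sends grid data $\{v(t_l)\}_{l=1}^m$ to the unique element of $W_2^r$ solving the constrained minimization in the definition of $\hat\bbeta^r$. By linearity of $I_m$, we have $\hat\bbeta^r = I_m\hat\bbeta$, so writing $I_m\bbeta$ for the spline interpolant of the true slope function at the grid,
\[
\|\hat\bbeta^{r}-\bbeta\|_2^2 \;\leq\; 2\|I_m(\hat\bbeta-\bbeta)\|_2^2 \,+\, 2\|I_m\bbeta-\bbeta\|_2^2.
\]

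For the deterministic interpolation bias $\|I_m\bbeta-\bbeta\|_2^2$, I would invoke the classical spline-approximation estimate: provided $\bbeta$ lies in the Sobolev class $W_2^r$, on a quasi-uniform grid with $\max_l|t_l-t_{l-1}|\lesssim m^{-1}$ the natural spline interpolant satisfies $\|I_m\bbeta-\bbeta\|_2 \lesssim m^{-r}\|\bbeta^{(r)}\|_2$. This is a piecewise Peano-kernel / Taylor-remainder estimate on each subinterval applied coordinate by coordinate in $\mathcal{H}^d$, and yields an $\mO(m^{-2r})$ contribution to the squared $L_2$ norm.

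For the stochastic term $\|I_m(\hat\bbeta-\bbeta)\|_2^2$, I would use a Marcinkiewicz--Zygmund type inequality for natural splines on a quasi-uniform grid, namely $\|I_m v\|_2^2 \lesssim m^{-1}\sum_{l=1}^m \|v(t_l)\|^2$ with constant independent of $m$, applied to each coordinate. This reduces the bound to the discrete $\ell^2$ error at the grid. The minimizer of (\ref{eq11}) is a finite-dimensional geometric-median M-estimator in $\mR^{dm}$, and by the same curvature/score-variance arguments used in Theorems~\ref{th1} and~\ref{th3} under Assumptions~\ref{as1}--\ref{as3}, one obtains $\mE\|\hat\bbeta(t_l)-\bbeta(t_l)\|^2 = \mO(n^{-1})$ uniformly in $l$. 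Summing then gives $\mE\|I_m(\hat\bbeta-\bbeta)\|_2^2 \lesssim m^{-1}\sum_{l=1}^m \mE\|\hat\bbeta(t_l)-\bbeta(t_l)\|^2 = \mO(n^{-1})$.

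Combining these two bounds yields $\mE\|\hat\bbeta^r-\bbeta\|_2^2 = \mO(m^{-2r}+n^{-1})$, and the theorem follows from Markov's inequality, since $\P(\|\hat\bbeta^r-\bbeta\|_2^2 \geq D(m^{-2r}+n^{-1})) \leq C/D \to 0$ as $D\to\infty$ uniformly in $n$. The main obstacle I foresee is justifying the $\mO(n^{-1})$ mean-squared rate for the joint $\mR^{dm}$-valued M-estimator uniformly in the grid location when $m$ is allowed to grow with $n$ (balanced at $m\asymp n^{1/(2r)}$). A quadratic expansion of the discretized geometric-median loss around the truth, combined with Assumption~\ref{as3} to control $\mE\{\|U-h\|^{-2}\}$ and thereby secure a uniformly positive-definite Hessian, should supply this, but the constants must be tracked carefully to ensure they do not blow up with $m$. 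A secondary technical subtlety is verifying the Marcinkiewicz--Zygmund constant for the natural-spline interpolation operator in the $W_2^r$ setting on a merely quasi-uniform (not necessarily uniform) grid.
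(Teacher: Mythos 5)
Your decomposition is exactly the paper's: write $\hat\bbeta^r$ as (spline interpolant of the true $\bbeta$) plus (spline interpolant of the grid errors), bound the first piece by the DeVore--Lorentz approximation rate $m^{-2r}$, and bound the second via stability of the interpolation operator, $\|I_m v\|_2^2 \lesssim m^{-1}\sum_{l=1}^m\|v(t_l)\|^2$ (the paper routes this through a piecewise-linear interpolant $h$ of the errors and the bound $\|Q_r(h)\|_2\lesssim\|h\|_2$, which is the same estimate). Where you diverge is the stochastic term, and that is where your proposal has a genuine gap: you propose to treat the minimizer of the discretized loss as an $\mR^{dm}$-valued geometric-median M-estimator and claim $\mE\|\hat\bbeta(t_l)-\bbeta(t_l)\|^2=\mO(n^{-1})$ uniformly in $l$, but you do not establish this, and you yourself flag that the constants (curvature of the Hessian, the bound from Assumption \ref{as3}) could degrade as $m$ grows with $n$ at the critical scaling $m\asymp n^{1/(2r)}$. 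Since controlling exactly this $m$-dependence is the whole content of the stochastic bound, the argument as written does not close.

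The paper avoids the issue entirely by never re-analyzing a finite-dimensional estimator: it identifies the grid values $\hat\bbeta(t_l)$ with the functional ASGD estimator $\bar\bbeta_n(t_l)$ and invokes the already-proved Hilbert-space rate (Theorem \ref{th1} / the proof of Theorem \ref{th3} with $\mathcal H=L^2[0,1]$), giving $\|\bar\bbeta_n-\bbeta\|_2^2=\mO_p(1/n)$ with constants that do not involve $m$. The discrete quantity you need is then only the \emph{average} over $l$ (not a bound uniform in $l$): $m^{-1}\sum_{l=1}^m\|\bar\bbeta_n(t_l)-\bbeta(t_l)\|^2$ is a Riemann sum for $\int_0^1\|\bar\bbeta_n(t)-\bbeta(t)\|^2\,\d t$ and is bounded by it up to a $\{1+o_p(1)\}$ factor under the quasi-uniform grid condition $\max_l|t_l-t_{l-1}|\leq Cm^{-1}$, so the stochastic term is $\mO_p(1/n)$ and the conclusion follows without any Markov/expectation step. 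If you want to repair your version rather than adopt this route, you must either prove that the Hessian lower bound and the variance bounds in your quadratic expansion are uniform in $m$ (Assumption \ref{as3} gives $\mE\|U(t)-h\|^{-2}\leq C$ pointwise, which does not immediately yield this for the rescaled $\mR^m$ norm), or replace the finite-dimensional analysis with the functional one as the paper does.
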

Theorem \ref{th5} indicates that, when the functional data are observed on a relatively dense grid, such that $m \gg n^{1/(2r)}$, the sampling frequency $m$ does not affect the rate of convergence. This rate is of order $\mathcal{O}(1/n)$, which is solely determined by the sample size $n$.

\section{Numerical studies}\label{sec:05}
To evaluate the finite-sample performance of our proposed online geometric median-based estimation, we assume $\bX=(X_1, X_2, X_3)$ follows a multivariate normal distribution with $\mE(X_i)=0$, $\Var(X_i)=0.5\times 2^{i-1}$, and $\Cor(X_i,X_j)=0.5^{|i-j|}, i,j=1,2,3$. We then generate the functional response $Y(t)$ from model (\ref{functional_model}) by taking $\beta_1(t)=2t^2$,
$\beta_2(t)=\cos(3\pi t/2+\pi/2)$, and $\beta_3(t)=\sin(\pi t/2)+\sqrt{2} (3\pi t/2)$, and the residual process as 
$U(t)=\sum_{l=1}^2\xi_{l}\phi_{l}(t)+\varepsilon_i(t)$, where  $\varepsilon_i(t_k)$ follows a normal distribution $N(0, 0.5)$. We take the basis $\phi_{l}(t)$ as $-\cos\{\pi(t-0.5)\}$ and $\sin\{(t-0.5)\}$ for $l=1,2$, and generate $(\xi_{1},\xi_{2}) $  from either a bivariate normal distribution and a bivariate $t$-distribution (3 degrees of freedom) with mean zero and covariance matrix $0.5\I_2$. The functional responses are observed at $m=50$ locations equally spaced on the interval $[0,1]$.
To construct massive datasets, we set the sample size $n$ as $10000$, $20000$, or $40000$.

To implement our online algorithm, we use the loss function (\ref{eq11}) and evaluate 
different settings for the step size function $\gamma_n=\gamma n^{-\alpha}$ by  fixing $\alpha=0.75$ and varying 
$\gamma \in \{1,1.5,2,3,4,6,10,20\}$. 
Under 1000 simulation replications, we compute the root mean integrated squared error (RMISE) of our online estimator, which is defined as 
\[
{\rm RMISE}(k)=\sqrt{m^{-1}\sum_{j=1}^m\big(\bar\beta_{n,k}(t_j)-\beta_k(t_j)\big)^2 }\,,
\]
for $k=1, 2,3.$

As shown in Table \ref{tab1},  the proposed online estimator demonstrates robust performance when the tuning parameter $\gamma$ is selected from the range $[2,20]$, with a larger sample size yielding a reduced RMISE, This, in turn, further numerically substantiates our theoretical convergence results. 
Upon comparing results under different distributional assumptions for the residual process, it is observed that the RMISE is relatively larger for the heavy-tailed $t$ distribution compared to the normal distribution, aligning with our expectations.

\begin{table}[htp]
\caption{Root mean integrated squared errors for the estimates of $\bbeta=(\beta_1,\beta_2,\beta_3)$ based on the proposed online approach across different values of the step size $ \gamma$ and the sample size $n$. Mean ($\times 10^{-2}$) and standard deviation ($\times 10^{-2}$, in parentheses) are given.}\label{tab1}
\begin{tabular*}{\textwidth}{@{\extracolsep\fill}lcccccc}
\toprule%
& \multicolumn{3}{@{}c@{}}{$(\xi_{1},\xi_{2}) \sim $ bivariate normal distribution} & \multicolumn{3}{@{}c@{}}{$(\xi_{1},\xi_{2}) \sim $  bivariate $t$-distribution} \\\cmidrule{2-4}\cmidrule{5-7}%
$\gamma$ & $\beta_1$ & $\beta_2$ & $\beta_3$ & $\beta_1$ & $\beta_2$ & $\beta_3$ \\
\midrule
\multicolumn{7}{@{}c@{}}{$n=10000$}\\
1.0   & 1.55(0.49) & 1.11(0.33) & 0.64(0.15)  & 1.70(0.57) & 1.41(0.49) & 0.88(0.30) \\
1.5   & 1.28(0.30) & 1.10(0.26) & 0.64(0.16)   & 1.74(0.58) & 1.37(0.46) & 0.88(0.30)\\
2.0   & 1.30(0.33) & 1.03(0.25) & 0.64(0.15)& 1.77(0.62) & 1.40(0.48) & 0.89(0.31) \cr
  3.0   & 1.28(0.31) & 1.04(0.24) & 0.64(0.15) &1.73(0.59) & 1.39(0.45) & 0.90(0.31) \cr
  4.0   & 1.31(0.32) & 1.04(0.24) & 0.63(0.15) &  1.72(0.58) & 1.41(0.49) & 0.89(0.30)  \cr
   6.0   & 1.30(0.29) & 1.04(0.25) & 0.65(0.16) &1.74(0.60) & 1.41(0.48) & 0.90(0.30)  \cr
   10    & 1.31(0.30) & 1.06(0.26) & 0.65(0.16) & 1.78(0.61) & 1.41(0.46) & 0.90(0.32)  \cr
   20    & 1.35(0.34) & 1.09(0.27) & 0.67(0.17) &1.78(0.61) & 1.42(0.47) &  0.91(0.31)   \cr
   \multicolumn{7}{@{}c@{}}{$n=20000$}\\
   1.0   & 0.87(0.21) & 0.68(0.16) & 0.43(0.10)  & 1.17(0.40) & 0.91(0.31) & 0.58(0.19)  \\
1.5   & 0.87(0.20) & 0.68(0.17) & 0.43(0.11)   & 1.19(0.41) & 0.94(0.32) & 0.58(0.19)  \\
2.0   & 0.87(0.22) & 0.69(0.17) & 0.43(0.10) &  1.17(0.39) & 0.93(0.31) & 0.58(0.19)  \cr
  3.0   & 0.86(0.21) & 0.69(0.17) & 0.43(0.11)  &1.18(0.40) & 0.93(0.31) & 0.58(0.19) \cr
  4.0   & 0.85(0.21) & 0.68(0.16) & 0.43(0.10) &  1.16(0.39) & 0.93(0.31) & 0.59(0.20)  \cr
   6.0   & 0.85(0.19) & 0.68(0.17) & 0.44(0.10) &1.15(0.38) & 0.91(0.30) & 0.57(0.19)  \cr
   10    & 0.86(0.21) & 0.68(0.16) & 0.43(0.10) & 1.16(0.40) & 0.92(0.31) & 0.59(0.20)  \cr
   20    & 0.89(0.22) & 0.69(0.16) & 0.45(0.10) &1.20(0.42) & 0.93(0.31) & 0.60(0.21)   \cr
      \multicolumn{7}{@{}c@{}}{$n=40000$}\\
   1.0   &  0.60(0.14) & 0.48(0.12) & 0.30(0.07) &  0.86(0.52) & 0.68(0.37) & 0.41(0.15)   \\
1.5   & 0.59(0.14) & 0.47(0.11) & 0.30(0.07)   & 0.86(0.84) & 0.67(0.59) & 0.41(0.18)  \\
2.0   & 0.60(0.15) & 0.48(0.12) & 0.30(0.07) &  0.81(0.26) & 0.65(0.23) & 0.39(0.13)   \cr
  3.0   & 0.60(0.14) & 0.47(0.11) & 0.29(0.07)  &0.82(0.29) & 0.65(0.22) & 0.40(0.14) \cr
  4.0   & 0.59(0.14) & 0.47(0.12) & 0.30(0.07)&   0.79(0.27) & 0.64(0.23) & 0.40(0.14)  \cr
   6.0   &0.59(0.14) & 0.47(0.12) & 0.30(0.08)&0.81(0.29) & 0.64(0.21) & 0.40(0.14)  \cr
   10    & 0.60(0.14) & 0.47(0.12) & 0.30(0.08)& 0.81(0.28) & 0.64(0.21) & 0.40(0.14)   \cr
   20    &  0.60(0.15) & 0.48(0.12) & 0.30(0.07) &0.81(0.28) & 0.64(0.21) & 0.41(0.14)  \cr   
\botrule
\end{tabular*}
\end{table}

To assess the efficiency of our proposed online estimation algorithm, we also implemented the following three competing methods for comparison: (a) The offline geometric median-based algorithm for solving equation (\ref{eq11}) based on the full ``static'' data, which can be implemented using the R package ``MNM,'' and (b) The offline pointwise median-based estimator of \cite{Liu2020}, and (c) The offline least squares estimator for the FSRM. We compared the results of these offline methods to those of our online geometric median-based approach with a tuning parameter $\gamma$ set to 3. Notably, the offline geometric median-based algorithm (method (a)) serves as an oracle benchmark for our proposed online approach. This is because the offline method utilizes the full dataset, while our online method relies only on the most recent data point at each iteration.

\begin{figure}[hbt]
	\centering
	(a) $(\xi_{1},\xi_{2}) \sim $ bivariate normal distribution\\
		\includegraphics[width=13cm, height=6.5 cm]{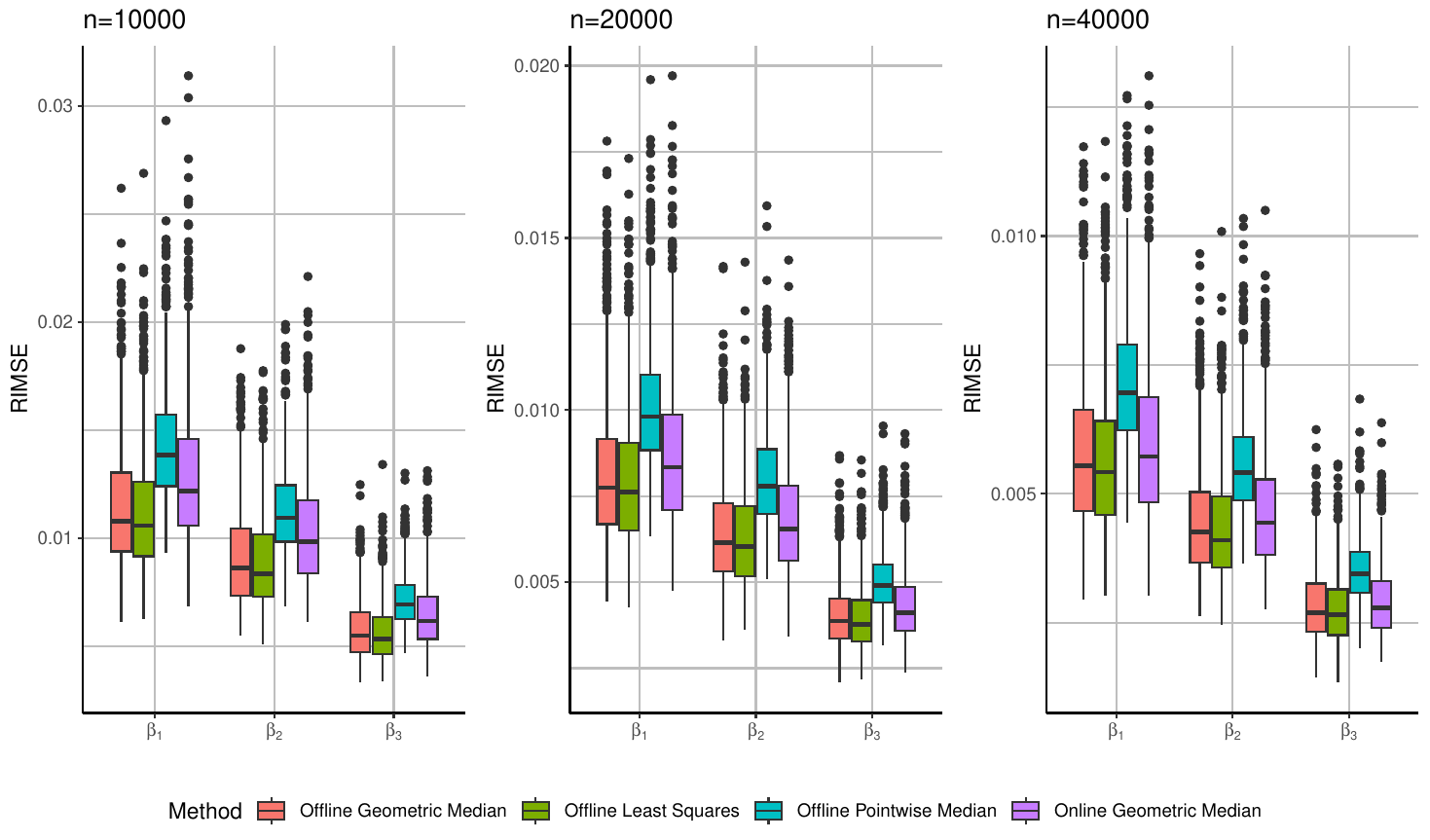}\\
		
		\vspace{0.1in}
		
		(b) $(\xi_{1},\xi_{2}) \sim $ bivariate $t$-distribution\\
		\includegraphics[width=13cm, height=6.5 cm]{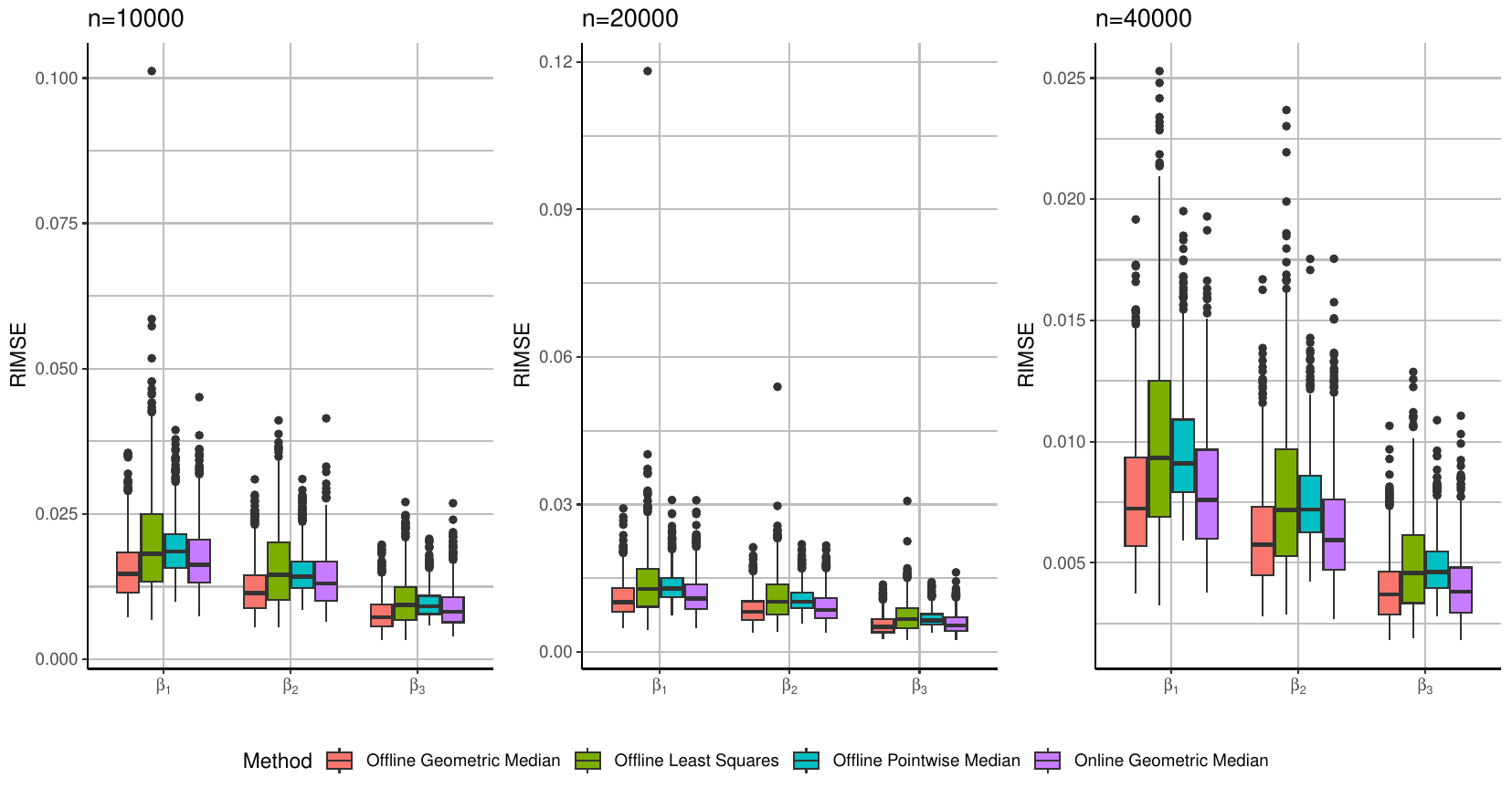}
	
		\caption{Boxplots of the root mean integrated squared errors (RMSIE) for the  offline geometric median-based estimator, the offline pointwise median-based estimator,  the offline least square estimator, and the proposed proposed  online geometric median-based estimator  according to 1000 simulation replications under (a) $(\xi_{1},\xi_{2}) \sim $ bivariate normal distribution and (b) $(\xi_{1},\xi_{2}) \sim $ bivariate $t$-distribution.}
	\label{Fig1}
\end{figure}

Figure \ref{Fig1} demonstrates that the offline least squares estimator achieves the smallest average root mean integrated squared error (RMISE) under the Gaussian setting. Interestingly, both the offline and online geometric median methods perform competitively with the offline least squares estimator in this setting. As anticipated, under the heavy-tailed setting, the geometric median-based methods outperform the others in terms of RMISE. Across all scenarios and settings, our online geometric median-based algorithm exhibits nearly identical performance to its offline counterpart. This simulation study highlights the robustness of the offline geometric median-based estimation and shows that the use of the recursive algorithm does not lead to a significant loss in efficiency. Moreover, as the sample size $n$ increases, the online algorithm offers significant advantages in terms of storage and computational time. Notably, when running 100 simulation replications with $n=10000$ on a MacBook with an Intel Core i5 processor, our online method required only 0.16 seconds, while the offline method took 305 seconds to complete.

Next, we evaluate the performance of the point-wise confidence intervals derived from our proposed online bootstrap method. Using the same settings from the above simulation with a normal residual process and a sample size of $n=10000$, Figure \ref{Fig3} displays the empirical coverage probabilities for the two proposed point-wise confidence intervals, the bootstrap percentile-based ${\mathcal C}_{n,j}^{I}$ and bootstrap variance-based ${\mathcal C}_{n,j}^{II}$, $j=1,2,3$, at the 90\% and 95\% confidence levels. The results demonstrate that both ${\mathcal C}_{n,j}^{I}$ and ${\mathcal C}_{n,j}^{II}$ are capable of achieving the target coverage probabilities across various location points.

\begin{figure}[hbt]
	\centerline{
		\includegraphics[width=13cm, height=6cm]{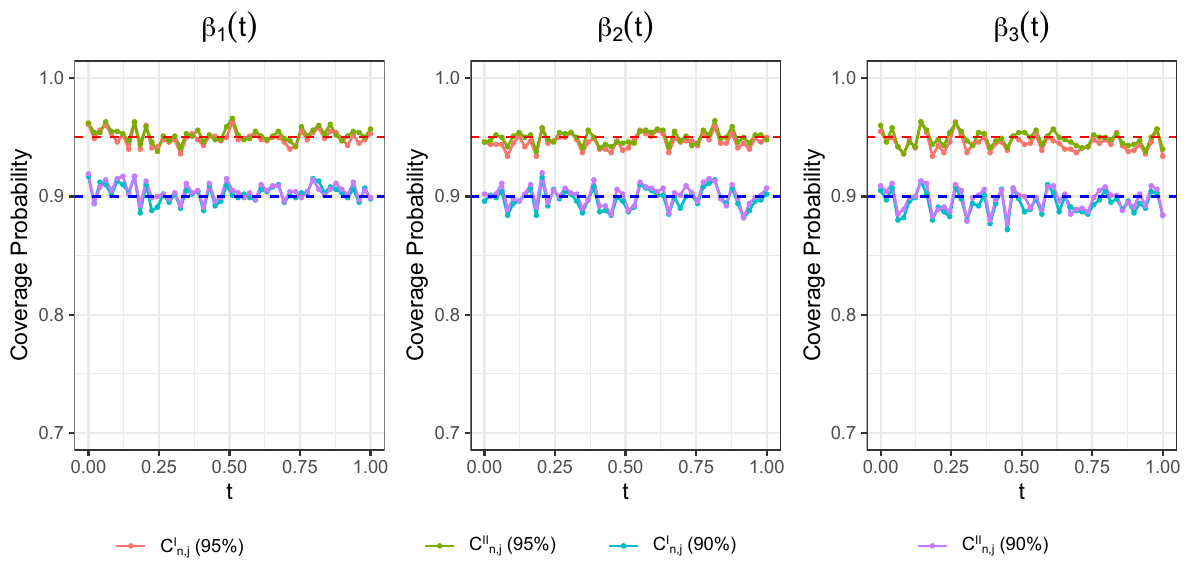}
	}
		\caption{Point-wise confidence intervals based on our proposed online bootstrap procedure. ${\mathcal C}_{n,j}^{I}$ and ${\mathcal C}_{n,j}^{II}$  denote the bootstrap percentile- and variance-based confidence intervals, respectively. The number in the parentheses represents the  confidence level.}
	\label{Fig3}
\end{figure}

\section{Real data application}\label{sec:06}
We applied our proposed online geometric median-based approach for the FSRM (\ref{functional_model}) to analyze the Beijing multi-site air-quality data.\cite{Zhang2017}. This dataset,  publicly available on the UCI machine learning repository \url{https://archive.ics.uci.edu}, comprises hourly air pollution readings from a total of 12 air quality monitoring stations, collected between March 1, 2013, and February 28, 2017. The air pollution readings encompass hourly concentrations of  PM$_{{2.5}}$, O$_{3}$,  SO$_2$, NO$_2$ and CO, along with other environmental indicators such as temperature (TEMP), atmospheric pressure (PRES), dew point temperature (DEWP), and wind speed (WSPM). In this application, we were particularly interested in $Y(t)=$ time-varying PM$_{2.5}$, a functional measure of fine inhalable particles, with its levels indicating the potential risk of health problems associated with inhalation.
The covariates $\bX$ were taken as the  daily average  values of O$_{3}$,  SO$_2$, NO$_2$ , CO  and TEMP, PRES, DEWP, WSPM.

In our analysis, we excluded hours with missing values for any of the selected variables and pooled the data from the 12 air quality monitoring sites. To mitigate site-specific effects, we centered and standardized the data at each site individually. We also standardized the time variable 
$t$ to fall within the interval $[0,1]$.

Figure \ref{Fig4} demonstrates the trajectories of estimated slope functions corresponding to each air pollution covariate at time points $t=0, 0.304, 0.652, 1$.  It shows that the estimation stabilizes as more data accumulate. Evidently,  a sample size of 2500 is sufficient to yield stable estimates based on our proposed online estimation method.

 \begin{figure}[t!]
	\centerline{
		\includegraphics[width=13cm, height=9cm]{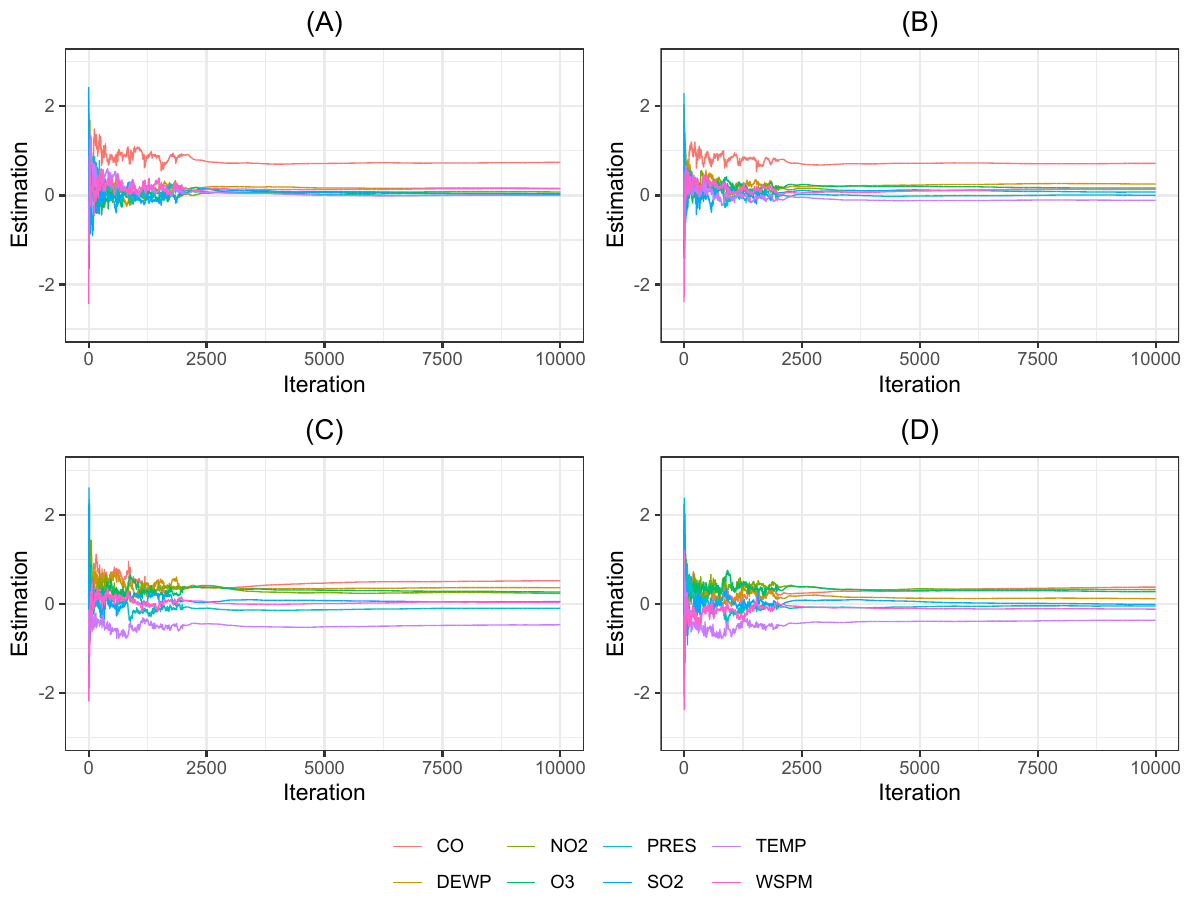}
	}
		\caption{Convergence trajectories of estimated slope functions corresponding to each
air pollution covariate at  time points: (A) $t=0.1$,  (B)  $t=0.304$, (C) $t=0.652$, and (D)  $t=1$.}
	\label{Fig4}
\end{figure}

\begin{figure}[t!]
	\centerline{
		\includegraphics[width=13cm, height=7cm]{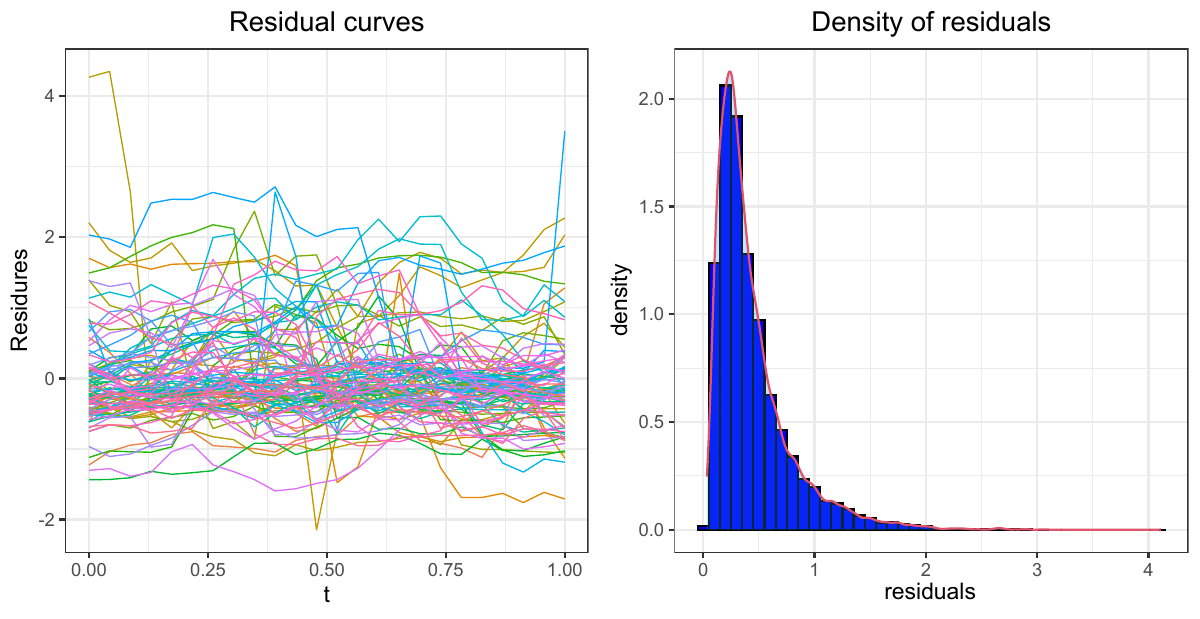}
	}
		\caption{The left panel plots residual curves at 24 grid points using spline interpolation from 100 randomly selected. The right panel estimated density of the integrated absolute residuals obtained from our proposed algorithms.}
	\label{Fig5}
\end{figure}

Figure \ref{Fig5} displays the residual curves at 24 grid time points, interpolated using the R function \texttt{spline}, along with the estimated density of the integrated absolute residuals. The residuals predominantly fluctuate around the zero line, and the density exhibits a slight right skewness with potential heavy tails. This pattern suggests that a robust geometric median-based functional regression, as proposed, may be more appropriate for analyzing the data than a mean-based regression approach.

Figure \ref{Fig6} presents the estimated slope functions and the 90\% point-wise confidence intervals for the eight air pollution covariates. It is evident that CO, NO$_2$,  O$_3$, and DEWP  exert significantly positive effects on the hourly PM$_{{2.5}}$. Notably, the impact of CO on PM$_{{2.5}}$
 appears to diminish over time, whereas the effects of NO$_2$ and O$_3$ seem to intensify.  The confidence intervals for the slope functions of  SO$2$ and PRES include zero at most time points, suggesting their negligible influence on  PM$_{{2.5}}$ levels.

\begin{figure}[t!]
	\centerline{
		\includegraphics[width=13cm, height=7cm]{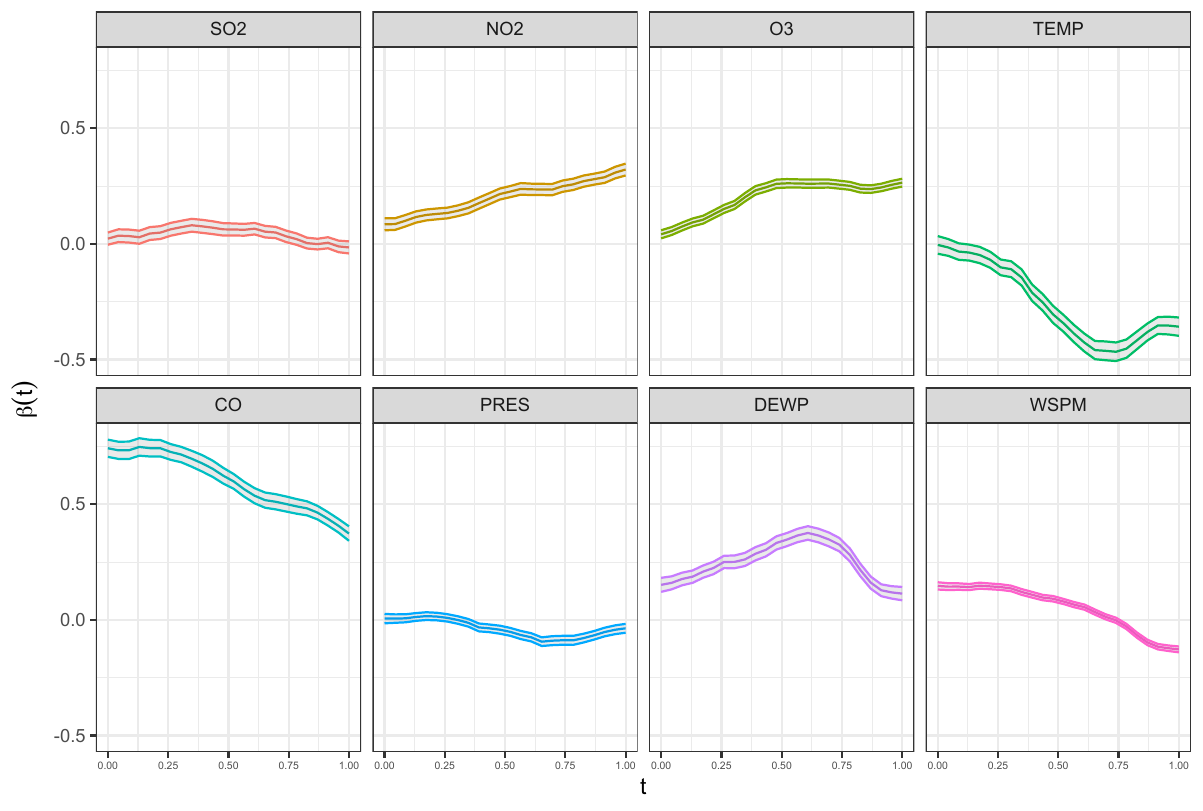}
	}
		\caption{ Estimated slope functions for the eight air pollution covariates as well as the  90\% point-wise confidence
intervals, based on the proposed online boostrap method via geometric median. The point-wise confidence band is shown in light grey.}
	\label{Fig6}
\end{figure}

\section{Conclusion}
We have introduced a novel function-on-scalar regression methodology using the geometric median, complemented by online estimation and bootstrap inference techniques suited for massive or streaming datasets. Our theoretical and numerical analyses validate the effectiveness of these proposed online methods. Notably, our geometric median-based regression demonstrates superior performance with heavy-tailed distributions compared to traditional mean-based approaches. Moreover, even when dealing with data from a Gaussian distribution, our method maintains efficiency comparable to that of mean-based techniques. Supporting theoretical evidence for these claims is detailed in \cite{Cheng2023}.

One interesting direction for future work involves extending our method to accommodate high-dimensional covariates $\bX$, as suggested by \cite{Barber2017} and \cite{Fan2017}. Employing a geometric median-based approach in such settings could potentially enhance robustness in feature selection and inference. Additionally, refining the current methodology by developing an optimal interpolation technique for discretely sampled functional data represents another valuable direction for improvement.

\begin{appendices}

\section{Some useful Propositions and Lemmas}

In the following, we provide  some  propositions and lemmas that are necessary for the proofs of Theorem \ref{th1} and \ref{th3}.
Denote the gradient function and the corresponding hessian function as
\[
\Phi(h)=-\mE \left\{\frac{ \bX(Y-\bX^{T}h)}{\|Y-\bX^{T}h\|}\right\}\,,~ \Gamma_{h}=\mE\bigg( \bX\bX^{T} \otimes A_h \bigg )\,,
\]
where 
\[
A_h=\frac{1}{\|Y-\bX^{T}h\|}\bigg(\mI_{\H}- \frac{(Y-\bX^{T}h) \otimes (Y-\bX^{T}h)}{\|Y-\bX^{T}h\|^2}\bigg)\,,
\]
and note that when $h=\bbeta$, $\Gamma_{\bbeta}=\Sigma \otimes A_0$ with $A_0=\mE(A_\bbeta)$, and $\I_{\H}$ is the identity operator in $\H$.  

The following proposition gives an important result for $\Gamma_\bbeta$.
\begin{proposition} \label{pr2}
Let $\{\lambda_{i,\Sigma}\}_{i=1}^d$ be the eigenvalues of $\Sigma=\mE(\bX \bX^{T})$, we have $1/C \leq \lambda_{i,\Sigma} \leq C$. Moreover, for any 
 $\phi \in {\mathcal H}^d$, there is a positive constant $C_{\bbeta}$ such that,
 \[
C_{\bbeta}\|\phi\|^2 \leq \langle \phi, \Gamma_{\bbeta} \phi\rangle  \leq C\|\phi\|^2\,,
\]
which indicates $C_{\bbeta}\leq \lambda_{i,\bbeta}\leq C$, with $\lambda_{i,\bbeta}$ being  the eigenvalues of $\Gamma_\bbeta$.
\begin{proof}
Let $(\lambda_{i, \Sigma})$ be the eigenvalues $\Sigma=\mE(\bX \bX^{T})=\{\sigma_{ij}\}_{i,j=1}^d$,  and $\Gamma_{\beta}$ can be viewed as a $d\times d$ matrix
with elements being covariance operators. For any $\phi=(\phi_1, \ldots, \phi_d)^T\in {\mathcal H}^d$,  it holds that
\[
\langle \phi, \Gamma_{\bbeta}\phi \rangle =\sum_{i=1}^d\sum_{j=1}^d\sigma_{ij} \langle \phi_i, A_0 \phi_j\rangle=
\sum_{k}\lambda_{k, A_{0}} \sum_{i=1}^d\sum_{j=1}^d\sigma_{ij}\langle e_k,  \phi_i\rangle \langle e_k,  \phi_j \rangle\,,
\]
where $e_k$ is the eigenfunction corresponding to the $k$th eigenvalue $\lambda_{k,A_0}$ of $A_{0}=\mE(A_{\bbeta})$, and the last equality is implied by  the eigenvalue decomposition 
for $A_{0}$.
Together with   Assumption \ref{as1}, $\Sigma$ is positive definite and $1/C \leq \lambda_{i,\Sigma}\leq C$, then it holds that
\[
\sum_{k}\lambda_{k, A_{0}} \sum_{i=1}^d\sum_{j=1}^d\sigma_{ij}\langle e_k,  \phi_i\rangle \langle e_k,  \phi_j \rangle=\sum_{k}\lambda_{k, A_{0}}
\langle \ba_k, \Sigma \ba_k \rangle\,,
\]
where $\ba_k=(a_{k,1}, \ldots, a_{k,d})$ and $a_{k,j}=\langle e_k,  h_j \rangle$. Hence, we have
\[
1/C \sum_{k}\lambda_{k, A_0} \|\langle e_k,  h \rangle\|^2\leq \sum_{k}\lambda_{k, A_0} \langle \ba_k, \Sigma \ba_k \rangle  \leq C \sum_{k}\lambda_{k, A_0} \|\langle e_k,  h \rangle\|^2\,.
\]
By the assumption that $\mE\|U-h\|^{-2}\leq C$ and  Proposition 2.1 in \cite{Cardot2017},  there exist two positive constants $c_{\bbeta}$ and  $C$,  such that for all $\psi \in {\mathcal H}$, 
\[
c_{\bbeta}\|\psi\|^2 \leq \langle \psi, A_{0}\psi \rangle \leq C \|\psi\|^2\,.
\]
By taking    $C_{\bbeta}=c_{\bbeta}/C$, it holds that
\[
C_{\bbeta}\|h\|^2 \leq \langle h, \Gamma_{\bbeta}h \rangle  \leq C\|h\|^2\,,
\]
which completes the proof.
\end{proof}
\end{proposition}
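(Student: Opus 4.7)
The bound on the eigenvalues of $\Sigma$ is immediate from Assumption 1, so the real content is the two-sided spectral bound on $\Gamma_{\bbeta}$ regarded as an operator on $\mathcal{H}^d$. The plan is to reduce the problem to a bound on the scalar covariance structure of $\bX$ in $\mR^d$ together with a bound on $A_0=\mE(A_{\bbeta})$ as an operator on $\mathcal{H}$, using the fact that under model (\ref{functional_model}) the residual $U=Y-\bX^{T}\bbeta$ is independent of $\bX$, so $A_{\bbeta}$ and $\bX\bX^{T}$ are independent and $\Gamma_{\bbeta}$ factorizes as $\Sigma\otimes A_0$.

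Concretely, for $\phi=(\phi_1,\dots,\phi_d)^{T}\in\mathcal{H}^d$ I would expand
\[
\langle \phi,\Gamma_{\bbeta}\phi\rangle
=\sum_{i,j=1}^{d}\sigma_{ij}\,\langle \phi_i,A_0\phi_j\rangle,
\]
and then use a spectral decomposition of the self-adjoint operator $A_0$ with eigenpairs $(\lambda_{k,A_0},e_k)$ to rewrite this as $\sum_k \lambda_{k,A_0}\,\langle a_k,\Sigma a_k\rangle$ where $a_k=(\langle e_k,\phi_1\rangle,\dots,\langle e_k,\phi_d\rangle)^{T}\in\mR^d$. Since $\sum_k\|a_k\|^2=\sum_{j=1}^d\|\phi_j\|^2=\|\phi\|^2$, applying Assumption~\ref{as1} coordinate-wise gives
\[
\tfrac{1}{C}\sum_{k}\lambda_{k,A_0}\|a_k\|^2
\;\le\;\langle \phi,\Gamma_{\bbeta}\phi\rangle
\;\le\;C\sum_{k}\lambda_{k,A_0}\|a_k\|^2,
\]
so the task is reduced to a uniform two-sided bound on the eigenvalues of $A_0$.

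For the upper bound on $A_0$, the operator norm of $A_{\bbeta}$ is at most $\|U\|^{-1}$, and Assumption~\ref{as3} plus H\"older yield $\mE\|U\|^{-1}\le\sqrt{C}$, which controls $\langle\psi,A_0\psi\rangle\le\sqrt{C}\|\psi\|^2$. For the lower bound, the pointwise identity
\[
\langle \psi,A_{\bbeta}\psi\rangle
=\frac{1}{\|U\|}\Bigl(\|\psi\|^{2}-\frac{\langle\psi,U\rangle^{2}}{\|U\|^{2}}\Bigr)
\]
is nonnegative and vanishes only when $\psi$ is aligned with $U$. Assumptions~\ref{as2} and \ref{as3} say that the distribution of $U$ is not supported on any one-dimensional subspace and is not atomic, so for every nonzero $\psi$ the random quantity above is strictly positive in expectation, which plugged into the formula for $\langle\phi,\Gamma_{\bbeta}\phi\rangle$ furnishes the desired $C_{\bbeta}\|\phi\|^{2}$ lower bound.

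The one nontrivial step is the uniform, quantitative lower bound on $\inf_{\|\psi\|=1}\langle\psi,A_0\psi\rangle$; the pointwise argument only gives strict positivity for each fixed $\psi$. I would circumvent this by invoking the known spectral estimate for the averaged Hessian of the geometric median functional in Hilbert space (Proposition~2.1 of Cardot et~al.\ 2017), which under exactly Assumptions~\ref{as2}--\ref{as3} produces a constant $c_{\bbeta}>0$ with $\langle\psi,A_0\psi\rangle\ge c_{\bbeta}\|\psi\|^{2}$ for all $\psi\in\mathcal{H}$. Combining this with the upper bound and the covariance bounds then yields $C_{\bbeta}=c_{\bbeta}/C$ and completes the proof; the statement about eigenvalues of $\Gamma_{\bbeta}$ is then just the Rayleigh--Ritz characterization.
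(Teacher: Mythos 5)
Your proposal is correct and follows essentially the same route as the paper: expand $\langle\phi,\Gamma_{\bbeta}\phi\rangle$ via the spectral decomposition of $A_0$, reduce to the eigenvalue bounds on $\Sigma$ from Assumption \ref{as1}, and obtain the uniform two-sided bound on $A_0$ by citing Proposition 2.1 of Cardot et al.\ (2017) under Assumptions \ref{as2}--\ref{as3}, yielding $C_{\bbeta}=c_{\bbeta}/C$. Your explicit observation that $\sum_k\|a_k\|^2=\|\phi\|^2$ is a welcome clarification of a step the paper leaves implicit, but the argument is the same.
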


\begin{proposition}\label{pr3}
Under Assumptions \ref{as1} -- \ref{as3}, there exists a constant $C$, such that  
\[
\|\Phi(\bbeta_k)-\Gamma_{\bbeta}(\bbeta_k-\bbeta)\| \leq C\|\bbeta_k-\bbeta\|^2\,,
\]
for all $k\geq 1$.
\end{proposition}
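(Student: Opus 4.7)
My plan is a second-order Taylor expansion of the population score $\Phi$ around $\bbeta$, reducing the claim to a pointwise bound on the Hessian-like bilinear form $D^2\Phi$ and then invoking Assumption~\ref{as3} to control the singular integrand that appears.

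To set up the expansion, I first verify $\Phi(\bbeta)=0$. Substituting $Y-\bX^T\bbeta=U$ and using the independence of $\bX$ and $U$ imposed by the model, each coordinate satisfies $(\Phi(\bbeta))_i=-\mE[X_i]\,\mE[U/\|U\|]$; Assumption~\ref{as2} forces $\mE[U/\|U\|]=0$ via the first-order optimality of the population loss $\alpha\mapsto\mE\{\|U-\alpha\|-\|U\|\}$ at its minimizer, so $\Phi(\bbeta)=0$. Writing $g(u)=-u/\|u\|$ so that $\Phi(h)=\mE[\bX\,g(Y-\bX^T h)]$, routine differentiation gives $D\Phi(h)=\Gamma_h$ and the bilinear second derivative
\[
D^2\Phi(h)[\phi_1,\phi_2]=\mE\bigl[\bX\,g''(Y-\bX^T h)[\bX^T\phi_1,\bX^T\phi_2]\bigr],
\]
together with the elementary pointwise estimate $\|g''(u)[v,w]\|\le C\|v\|\|w\|/\|u\|^2$ obtained by direct differentiation of $u\mapsto -u/\|u\|$. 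Taylor's theorem with integral remainder then reads
\[
\Phi(\bbeta_k)-\Gamma_{\bbeta}(\bbeta_k-\bbeta)=\int_0^1(1-s)\,D^2\Phi\bigl(\bbeta+s(\bbeta_k-\bbeta)\bigr)[\bbeta_k-\bbeta,\bbeta_k-\bbeta]\,ds.
\]

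The quantitative core is a uniform bound on $\|D^2\Phi(h)\|$ for $h$ on the segment joining $\bbeta$ and $\bbeta_k$. Using $\|\bX^T\phi_j\|\le\|\bX\|\|\phi_j\|$ and $\|\bX\,\psi\|_{\H^d}=\|\bX\|\|\psi\|_{\H}$ together with Jensen's inequality,
\[
\bigl\|D^2\Phi(h)[\phi_1,\phi_2]\bigr\|\le C\|\phi_1\|\|\phi_2\|\,\mE\bigl[\|\bX\|^3\,\|U-\bX^T(h-\bbeta)\|^{-2}\bigr].
\]
Conditioning on $\bX$ and invoking independence of $\bX$ and $U$, the inner conditional expectation $\mE[\|U-\eta\|^{-2}\mid\bX]$ is bounded uniformly in $\eta=\bX^T(h-\bbeta)\in\H$ by Assumption~\ref{as3}, leaving $C\,\mE\|\bX\|^3<\infty$. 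Inserting this uniform bound into the Taylor identity completes the argument and gives $\|\Phi(\bbeta_k)-\Gamma_{\bbeta}(\bbeta_k-\bbeta)\|\le C\|\bbeta_k-\bbeta\|^2$.

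\textbf{Main obstacle.} The delicate point is controlling the singular factor $\|U-\bX^T(h-\bbeta)\|^{-2}$ generated by $g''$; Assumption~\ref{as3} is exactly tailored to this, and the argument hinges on using the independence of $\bX$ and $U$ to condition on $\bX$ so that the uniform-in-$\eta$ inverse-moment bound can be applied with $\eta=\bX^T(h-\bbeta)$. A secondary subtlety is that the estimate requires $\mE\|\bX\|^3<\infty$ rather than the second moment supplied by Assumption~\ref{as1}; if only the second moment is available, one recovers the bound by decomposing $A_h-A_{\bbeta}$ into its scalar prefactor and rank-one-projection parts and applying Cauchy--Schwarz to trade powers of $\|\bX\|$ against higher inverse moments of $\|U-\bX^T(h-\bbeta)\|$, which remain controlled by Assumption~\ref{as3}.
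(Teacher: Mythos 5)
Your argument is correct and essentially the same as the paper's: the paper also Taylor-expands $\Phi$ around $\bbeta$ (writing $\Phi(\bbeta_k)=\int_0^1\Gamma_{\bbeta+t(\bbeta_k-\bbeta)}(\bbeta_k-\bbeta)\,\d t$ and bounding the integrand's $t$-derivative, which is exactly your directional second derivative $D^2\Phi$), and controls the resulting singular factor $\|U-t\bX^{T}h\|^{-2}$ by Assumption~\ref{as3}. Your worry about $\mE\|\bX\|^3$ is moot under the paper's reading of Assumption~\ref{as1}, since its proof explicitly invokes almost-sure boundedness of $\bX$ at the same step, so your flagged fallback is unnecessary but harmless.
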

\begin{proof}
Since $\Phi(\bbeta)=0$,  applying Taylor’s expansion, we have
\[
\Phi(\bbeta_k)=\int_0^1\Gamma_{\bbeta+t(\bbeta_k-\bbeta)}(\bbeta_n-\bbeta) \d t\,.
\]
 Let $r_k=\Phi(\bbeta_k)-\Gamma_{\bbeta}(\bbeta_k-\bbeta)=\int_0^1(\Gamma_{\bbeta+t(\bbeta_k-\bbeta)}-\Gamma_{\bbeta})(\bbeta_k-\bbeta)\d t$. Following   \citet{Cardot2017}, we define the function $\varphi_{h, h^{'}}(\cdot)$  from $[0,1]\rightarrow  {\mathcal H}^d$ as
\[
\varphi_{h,h^{'}}(t):=\Gamma_{\bbeta+th}(h^{'})\,,
\]
for all $h, h^{'} \in {\mathcal H}^d$. Then, 
 we have 
\begin{align*}
\|\Phi(\bbeta+h)-\Gamma_{\bbeta}(h)\|=&\|\int_0^1(\Gamma_{\bbeta+th}-\Gamma_{\bbeta})(h)\d t\| \leq \int_0^1
\|(\Gamma_{\bbeta+th}(h)-\Gamma_{\bbeta}(h))\|\d t\\
=& \int_0^1 \|\varphi_{h,h}(t)-\varphi_{h,h}(0)\| \d t \leq \sup_{t\in [0,1]}\|\varphi^{'}_{h,h}(t)\|\,.
\end{align*}
Next we  give an upper bound for $\sup_{t\in [0,1]}\|\varphi^{'}_{h,h}(t)\|$.  Let
\[
W_h(t)=\frac{1}{\|Y-\bX^{T}\bbeta-t\bX^{T}h\|}, ~V_{h, h^{'}}(t)=h^{'}-\frac{\langle Y-\bX^{T}\bbeta-t\bX^{T}h, h^{'}\rangle(Y-\bX^{T}\bbeta-t\bX^{T}h)}{\|Y-\bX^{T}\bbeta-t\bX^{T}h\|^2}\,,
\]
then we have $\varphi_{h,h^{'}}^{'}(t)=\mE\{\bX\bX^{T}\otimes (W^{'}_h(t)V_{h,h^{'}}(t)+W_h(t)V^{'}_{h,h^{'}}(t))\}$. By algebra, it holds that
\[
\|V_{h,h^{'}}(t)\|\leq 2\|h^{'}\|,~~~ \|W^{'}_{h}(t)\|\leq \frac{\|\bX^{T}h\|}{\|Y-\bX^{T}\bbeta-t\bX^{T}h\|}\,,
\]
and
\[
\|V^{'}_{h,h^{'}}(t)\| \leq \frac{4\|\bX^{T}h\|\|h^{'}\|}{\|Y-\bX^{T}\bbeta-t\bX^{T}h\|}\,.
\]
Since $\bX$ is bounded almost surely, it holds that $\|\varphi_{h,h^{'}}^{'}(t)\| \leq 6\|h\|\|h^{'}\|\mE\{\|U-t\bX^{T}h\|^{-2}\}\leq C\|h\|\|h^{'}\|$. Thus, we complete the proof by choosing $h=\bbeta_k-\bbeta$.
\end{proof}

\begin{lemma} \label{lem1}
Under Assumptions \ref{as1} -- \ref{as4},  there exist  some constants $C$ and  $C^{'}$ such that for all $n\geq 1$, 
\[
\mE\{\|\bbeta_n-\bbeta\|^2\} \leq Ce^{-C^{'}n^{1-\alpha}}+\frac{C}{n^{\alpha}}+C\sup_{n/2-1\leq k \leq n}\mE\{\|\bbeta_{k}-\bbeta\|^4\}\,.
\]

\end{lemma}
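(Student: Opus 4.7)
The plan is to decompose the stochastic recursion into a contracting deterministic part plus a quadratic remainder plus a martingale noise, and then iterate a one-step Gronwall bound.  First, combining (\ref{RM1}) with the identity $\mE[\bX(Y-\bX^{T}\bbeta_n)/\|Y-\bX^{T}\bbeta_n\| \mid \F_n] = -\Phi(\bbeta_n)$ and with Proposition \ref{pr3} yields
\[
\bbeta_{n+1}-\bbeta = (\mI - \gamma_n \Gamma_\bbeta)(\bbeta_n-\bbeta) \;-\; \gamma_n r_n \;+\; \gamma_n M_{n+1},
\]
where $\|r_n\| \le C\|\bbeta_n-\bbeta\|^2$ and $M_{n+1}$ is an $\F_{n+1}$-martingale-difference increment whose conditional second moment is bounded almost surely (the SGD direction has norm equal to $\|\bX_{n+1}\|$ and $\bX$ is bounded).

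Squaring in $\H^d$ and taking expectations eliminates the cross term with $M_{n+1}$.  Using Proposition \ref{pr2} to bound the operator norm of $\mI-\gamma_n\Gamma_\bbeta$ by $1-\gamma_n C_\bbeta$ once $\gamma_n$ is small enough, controlling $\mE[\|\bbeta_n-\bbeta\|\,\|r_n\|]$ and $\mE\|r_n\|^2$ by Cauchy--Schwarz, and applying Young's inequality to the resulting $(a_n b_n)^{1/2}$ term lead to the scalar recursion
\[
a_{n+1} \;\le\; \bigl(1 - (C_\bbeta/2)\gamma_n\bigr)\, a_n \;+\; C\gamma_n b_n \;+\; C\gamma_n^2,
\]
where $a_n = \mE\|\bbeta_n-\bbeta\|^2$ and $b_n = \mE\|\bbeta_n-\bbeta\|^4$.

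Unrolling this recursion from $0$ to $n-1$ gives
\[
a_n \;\le\; a_0 \prod_{j=0}^{n-1}\bigl(1-(C_\bbeta/2)\gamma_j\bigr) \;+\; \sum_{k=0}^{n-1}\Bigl(\prod_{j=k+1}^{n-1}\bigl(1-(C_\bbeta/2)\gamma_j\bigr)\Bigr)\bigl(C\gamma_k b_k + C\gamma_k^2\bigr).
\]
Each product is dominated by $\exp(-C'(n^{1-\alpha}-(k+1)^{1-\alpha}))$ thanks to $\gamma_j=\gamma j^{-\alpha}$ and $\sum_{j=k+1}^{n-1}\gamma_j \asymp (n^{1-\alpha}-k^{1-\alpha})/(1-\alpha)$.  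I split the remaining sum at $k=n/2$: for $k<n/2$ the exponent dominates $n^{1-\alpha}$, and combining with the crude deterministic bound $\|\bbeta_k-\bbeta\| \lesssim 1 + \sum_{j\le k}\gamma_j = O(k^{1-\alpha})$ (hence $b_k = O(k^{4(1-\alpha)})$) shows that this part is absorbed into $Ce^{-C'n^{1-\alpha}}$.  For $k\ge n/2$, the concavity-based linearization $n^{1-\alpha}-k^{1-\alpha} \asymp (1-\alpha)(n-k)n^{-\alpha}$ turns the two tails into geometric-type sums
\[
\sum_{k=n/2}^{n-1} e^{-C'(n-k)n^{-\alpha}}\gamma_k^2 \;=\; O(n^{-\alpha}), \qquad \sum_{k=n/2}^{n-1} e^{-C'(n-k)n^{-\alpha}}\gamma_k \;=\; O(1),
\]
and bounding $b_k$ in the second sum by $\sup_{n/2-1\le k\le n}b_k$ gives exactly the inequality stated in the lemma.

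The main obstacle is the precise Riemann-sum comparison in the last step --- in particular justifying the linearization $n^{1-\alpha}-k^{1-\alpha} \asymp (1-\alpha)(n-k)n^{-\alpha}$ uniformly on the dyadic window $[n/2,n]$, summing the resulting geometric series cleanly, and checking that the constants $C,C'$ are independent of $n$.  The rest is routine bookkeeping: the martingale cancellation, the operator-norm bound for $\mI-\gamma_n\Gamma_\bbeta$, and the Young-type decoupling of the cross term $\gamma_n(a_n b_n)^{1/2}$.
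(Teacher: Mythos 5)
Your proposal is correct and takes essentially the same route as the paper: the identical decomposition into a $\Gamma_{\bbeta}$-driven contraction (via Proposition \ref{pr2}), the Taylor remainder $r_n$ with $\|r_n\|\leq C\|\bbeta_n-\bbeta\|^2$ (Proposition \ref{pr3}), and a bounded martingale increment, followed by the same split of the unrolled sum at $n/2$ so that the early portion is absorbed into $Ce^{-C'n^{1-\alpha}}$ and the late portion yields the $Cn^{-\alpha}$ and $C\sup_{n/2-1\leq k\leq n}\mE\{\|\bbeta_k-\bbeta\|^4\}$ terms. The only difference is bookkeeping: you reduce to a scalar Gronwall-type recursion for $\mE\{\|\bbeta_n-\bbeta\|^2\}$ (decoupling the cross term by Young's inequality) and carry out the Riemann-sum estimates directly, whereas the paper keeps the operator products $\nu_{n-1}\nu_k^{-1}$, exploits martingale orthogonality, and invokes Lemma 5.5 of Cardot et al.\ (2017) for the same weighted-sum bounds.
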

\begin{proof}
  Denote that $\Gamma_\bbeta=\mE(\bX\bX^{T} \otimes A_{\bbeta} )$,  then we decompose the online SGD estimator as 
\begin{align}\label{lm1eq1}
\bbeta_{n+1}-\bbeta=\bbeta_n-\bbeta-\gamma_n \Gamma_\bbeta(\bbeta_n-\bbeta)+\gamma_n \xi_{n+1}-\gamma_n r_n\,,
\end{align}
where
\[
\xi_{n+1}=\frac{\bX_{n+1}(Y_{n+1}- \bX_{n+1}^{T}\bbeta_n )} {\|Y_{n+1}- \bX_{n+1}^{T}\bbeta_n \|}+\Phi(\bbeta_n)\,.
\]
Denote $H=\H^d$ and define 
$
\kappa_k=\I_{H} -\gamma_k \Gamma_{\bbeta}, ~~ \nu_n=\prod_{k=1}^n\kappa_k$. Note that $\nu_0=\I_{H}\,,
$
we then rewrite equation (\ref{lm1eq1}) as
\[
\bbeta_{n+1}-\bbeta=\nu_n(\bbeta_1-\bbeta)+\nu_n M_{n+1}-\nu_n R_{n+1}\,,
\]
where 
$
R_{n+1}=\sum_{k=1}^n\gamma_k \nu_k^{-1}r_k$,  and $M_{n+1}=\sum_{k=1}^n\gamma_k \nu_k^{-1}\xi_{k+1}$.
Applying the inequality of arithmetic means, we have
\[
\mE\|\bbeta_n-\bbeta\|^2 \leq  3\mE(\|\nu_{n-1}(\bbeta_1-\bbeta)\|^2)+3\mE(\|\nu_{n-1}M_n\|^2)+3\mE(\|\nu_{n-1}R_n\|^2)\,.
\] 
In the following steps, we subsequently bound $\mE(\|\nu_{n-1}(\bbeta_1-\bbeta)\|^2)$, $\mE(\|\nu_{n-1}M_n\|^2)$ and $\mE(\|\nu_{n-1}R_n\|^2)$. Using the results of   Proposition \ref{pr2}, it holds that $C_{\bbeta} \leq \lambda_{i,\bbeta}\leq C$.  Next, due to fact
$
\|\nu_{n-1}\| \leq  C_1\exp\{-C_{\bbeta}\sum_{k=1}^{n-1}\gamma_k\}
$
for some constant $C_1$,   it holds that
\begin{align*}
\mE(\|\nu_{n-1}(\bbeta_1-\bbeta)\|^2)\leq C_1^2\bigg\{\exp\bigg(-C_{\bbeta}\sum_{k=1}^{n-1}\gamma_k\bigg)\bigg\}^2\mE\|\bbeta_1-\bbeta\|^2\leq C\exp(-C^{'}n^{1-\alpha})\,.
\end{align*}
To bound $\mE\{\|\nu_{n-1}M_n\|^2\}$, we define $U_{n+1}:=- \frac{ \bX_{n+1}(Y_{n+1}-\bX_{n+1}^{T}\bbeta_n)}{\|Y_{n+1}-\bX_{n+1}^{T}\bbeta_n\|}$\,. Then $\xi_n$ can be represented as $\xi_n=\Phi(\bbeta_n)-U_{n+1}$, and $\mathcal F_n$ is a $\sigma-$algebra  generated by $\sigma(\bbeta_1,\ldots,\bbeta_n)=\sigma(Z_1,\ldots,Z_n)$, where $Z_i=(\bX_i, Y_i)$. Apparently, for all integers $n\geq 1$,
\[
  \mE(U_{n+1}|\mathcal F_n)=\Phi(\bbeta_n).
\] 
This leads to $\{\xi_n\}_{n=1}$ being a sequence of martingale differences. Due to the orthogonality of the martingale differences, we have $\mE\langle \xi_{k}, \xi_{k^{'}}\rangle=0$ for any $k\neq k^{'}$. Hence it follows that
\[
\mE\{\|\nu_{n-1}M_n\|^2\}=\sum_{k=1}^{n-1}\gamma_k^2\mE\{\|\nu_{n-1}\nu_k^{-1}\xi_{k+1}\|^2\}\,.
\]
Since $\|\xi_{k+1}\|^2\leq 4\|\bX_{n+1}\|^2\leq C$ almost surely,   we have $\mE\{\|\nu_{n-1}M_n\|^2\}\leq C\sum_{k=1}^{n-1}\gamma_k^2\|\nu_{n-1}\nu_k^{-1}\|^2$.  Using  similar calculations in \cite{Cardot2017}, we have $\sum_{k=1}^{n-1}\gamma_k^2\|\nu_{n-1}\nu_k^{-1}\|^2\leq Cn^{-\alpha}$. Then it holds that 
$
\mE\|\nu_{n-1}M_n\|^2\leq  Cn^{-\alpha}
$ for all $n\geq 1$.

To bound $\mE(\|\nu_{n-1}R_n\|^2)$,  we write $\bbeta_{n+1}-\bbeta=\bbeta_n-\bbeta-\gamma_n \Phi(\bbeta_n)+\gamma_n\xi_{n+1}$,  then it holds that
\begin{align*}
\mE(\|\bbeta_{n+1}-\bbeta\|^2|\F_n)=&\|\bbeta_n-\bbeta\|^2+\gamma_n^2\{\|\Phi(\bbeta_n)\|^2+\mE(\|\xi_{n+1}\|^2|\F_n)\}\\
&~~~~~~-2\gamma_n \langle \bbeta_n-\bbeta, \Phi(\bbeta_n)\rangle\,.
\end{align*}
As $\|\Phi(\bbeta_n)\|^2+\mE(\|\xi_{n+1}\|^2|\F_n)\leq 2\|\bX_{n+1}\|^2$ and $\langle \bbeta_n-\bbeta, \Phi(\bbeta_n)\rangle \geq 0$, by taking the expectation of the two sides, we have $\mE(\|\bbeta_{n+1}-\bbeta\|^2)\leq \mE(\|\bbeta_{n}-\bbeta\|^2)+C\gamma_n^2$, which leads to 
$\mE(\|\bbeta_{n}-\bbeta\|^2)\leq C\sum_{k=1}^{\infty}\gamma_n^2\lesssim C$. Similarly, since $\|\bbeta_{n+1}-\bbeta\|^2\leq \|\bbeta_n-\bbeta\|^2+C\gamma_n^2+2\gamma_n
\langle \xi_{n+1}, \bbeta_n-\bbeta \rangle$,   it holds that
\[
\mE(\|\bbeta_{n+1}-\bbeta\|^4|\F_n)\leq \|\bbeta_n-\bbeta\|^4+C\gamma_n^4+4\gamma_n^2 \mE(\|\langle \xi_{n+1}, \bbeta_n-\bbeta \rangle\|^2|\F_n)+C\gamma_n^2\|\bbeta_n-\bbeta\|^2\,.
\]
Together with $\|\xi_{n+1}\|\leq 2\|\bX_{n+1}\|^2$,   $\mE(\|\bbeta_{n+1}-\bbeta\|^4)\leq \mE(\|\bbeta_{n}-\bbeta\|^4)+C\gamma_n^2$, we can show that $\mE(\|\bbeta_{n}-\bbeta\|^4)\leq C$ for any $n\geq 1$.  Because    $\|r_{n}\|\leq C\|\bbeta_n -\bbeta\|^2$ by   Proposition \ref{pr3},  it holds that
\begin{align*}
\mE(\|\nu_{n-1}R_n\|^2)\leq &\mE\left\{\left(\sum_{k=1}^{n-1}\gamma_k\|\nu_{n-1}\nu_k^{-1}\|_2\|r_{k}\|\right)^2\right\}\\
&~~~~~~~~~~\leq C\mE\left\{\left(\sum_{k=1}^{n-1}\gamma_k\|\nu_{n-1}\nu_k^{-1}\|_2 \|\bbeta_n -\bbeta\|^2\right)^2\right\}\,.
\end{align*}
By Lemma 5.5 in \cite{Cardot2017}, it thus holds that
\begin{align*}
\mE(\|\nu_{n-1}R_n\|^2)\leq &~C \left(\sum_{k=1}^{n-1}\gamma_k\|\nu_{n-1}\nu_k^{-1}\|_2\mE\{\|{\bbeta}_k-\bbeta\|^4\}^{1/2}\right)^2 \\
&~~~~~~~\leq  C\left(\sum_{k=1}^{E(n/2)} \gamma_k\|\nu_{n-1}\nu_k^{-1}\|_2\right)^2\\
&~~~~~~+ C \sup_{E(n/2)-1 \leq k\leq n} \mE\{\|\bbeta_n -\bbeta\|^4\}\left(\sum_{k=E(n/2)+1}^{n} \gamma_k\|\nu_{n-1}\nu_k^{-1}\|_2\right)^2\,,
\end{align*}
where $E(\cdot)$ denotes the integer function.  It follows that
\begin{align*}
\mE\|\nu_{n-1} R_{n}\|^2 \leq C\left(\sum_{k=1}^{E(n/2)}\gamma_k e^{-C^{'}\sum_{j=k+1}^{n-1}\gamma_k}\right)^2
+C\sup_{E(n/2)-1 \leq k\leq n-1} \mE\|\bbeta_{k}-\bbeta\|^4\,.
\end{align*}
Therefore, we can obtain 
\[
\mE\|\nu_{n-1} R_{n}\|^2 \leq Ce^{-C^{'}n^{1-\alpha}}+\frac{C}{n^{\alpha}}+C\sup_{E(n/2)-1\leq k \leq n}\mE\{\|\bbeta_{k}-\bbeta\|^4\}\,,
\]
which completes the proof
\end{proof}
\begin{lemma} \label{lem2}
Under Assumptions \ref{as1} -- \ref{as3},   there exists a  positive constant $C$ which depends on $q$ such that
\begin{align*}
\mE\{\|\bbeta_{n+1}-\bbeta\|^4\} \leq \bigg(1-\frac{1}{n^\frac{1+(q-1)\alpha}{q}}\bigg)^2\mE\{\|\bbeta_n-\bbeta\|^4\}+\frac{C}{n^{3\alpha}}
+\frac{C}{n^{2\alpha}}\mE\{\|\bbeta_n-\bbeta\|^2\}\,,
\end{align*}
 for    all $n\geq n_{0}$ dependent on $\alpha$, some $n_{0}$, and $q\geq 1$.
\begin{proof}
First of all, we write $\|\bbeta_{n+1}-\bbeta\|^2$ as
\begin{align*}
\|\bbeta_{n+1}-\bbeta\|^2=&\|\bbeta_n-\bbeta-\gamma_n \Phi(\bbeta_n)\|^2+\gamma_n^2\|\xi_{n+1}\|^2
+2\gamma_n\langle \bbeta_n-\bbeta-\gamma_n \Phi(\bbeta_n), \xi_{n+1} \rangle \\
\leq &\|\bbeta_n-\bbeta-\gamma_n \Phi(\bbeta_n)\|^2+2\gamma_n\langle \bbeta_n-\bbeta, \xi_{n+1} \rangle
+C\gamma_n^2\,.
\end{align*}
 Applying Cauchy-Schwarz’s inequality and the law of iterated expectation   with the fact $\mE\{\langle \xi_{n+1}, \bbeta_n-\bbeta-\gamma_n \Phi(\bbeta_n)\rangle \|\bbeta_n-\bbeta\||\F_n\}=0$, we have
\[
\mE\{\|\bbeta_{n+1}-\bbeta\|^4\}\leq \mE\{\|\bbeta_n-\bbeta-\gamma_n \Phi(\bbeta_n)\|^4\}+
C\gamma_n^2\mE\{\|\bbeta_{n+1}-\bbeta\|^2\}+C\gamma_n^4\,.
\]
Since $\gamma_n^4=o(1/n^{3\alpha})$,   there exists a positive constant $C$ and $n_0$ such that 
\[
\mE\{\|\bbeta_{n+1}-\bbeta\|^4\}\leq \mE\{\|\bbeta_n-\bbeta-\gamma_n \Phi(\bbeta_n)\|^4\}+
C\frac{1}{n^{2\alpha}}\mE\{\|\bbeta_{n+1}-\bbeta\|^2\}+C\frac{1}{n^{3\alpha}}\,,
\]
for all $n\geq n_0$.
Then we aim at bounding $\mE\{\|\bbeta_n-\bbeta-\gamma_n \Phi(\bbeta_n)\|^4\}$. Define the set of the sequence of events 
for some $q\geq 1$ as,
\[
S_{n,q}:=\bigg\{\omega: \|\bbeta_n(\omega)-\bbeta\|\leq Mn^{\frac{1-\alpha}{q}}\bigg\}\,.
\]
If $\|\bbeta_n-\bbeta\|\leq 1$, then $\|\bbeta_n\|\leq \|\bbeta\|+1$. By Proposition \ref{pr2}, it holds that
$\langle \Phi(\bbeta_n)-\Phi(\bbeta), \bbeta_n-\bbeta\rangle \geq C_{\bbeta}\|\bbeta_n-\bbeta\|^2$. As a result, there exists $n_0$, 
for all $n\geq n_0$, $C_{\bbeta}\geq M n^{-\frac{1-\alpha}{q}}$,    we have
$\langle \Phi(\bbeta_n), \bbeta_n-\bbeta\rangle \geq M n^{-\frac{1-\alpha}{q}}$. If $\|\bbeta_n-\bbeta\|\geq 1$, then
\begin{align*}
\langle \Phi(\bbeta_n), \bbeta_n-\bbeta\rangle =&\int_0^1\langle \Gamma_{\bbeta+t(\bbeta_n-\bbeta)}(\bbeta_n-\bbeta)
, \bbeta_n-\bbeta\rangle \d t\\
&\geq \int_0^{1/\|\bbeta_n-\bbeta\|}\langle \Gamma_{\bbeta+t(\bbeta_n-\bbeta)}(\bbeta_n-\bbeta), \bbeta_n-\bbeta\rangle \d t\\
&\geq \int_0^{1/\|\bbeta_n-\bbeta\|}  C_{\beta} \|\bbeta_n-\bbeta\|^2\d t \geq \frac{C_{\bbeta}}{Mn^{(1-\alpha)/q}}\|\bbeta_n-\bbeta\|^2.
\end{align*}
It thus holds that   $\|\bbeta_n-\bbeta-\gamma_n \Phi(\bbeta_n)\|^2 \leq \|\bbeta_n-\bbeta\|^2+\gamma_n^2\|\bbeta_n-\bbeta\|^2-\frac{2C_{\bbeta}}{Mn^{(1-\alpha)/q}}\frac{\gamma}{n^{\alpha}}\|\bbeta_n-\bbeta\|^2$. Moreover, by choosing
$M=2\gamma C_{\bbeta}$,  we have $\|\bbeta_n-\bbeta-\gamma_n \Phi(\bbeta_n)\|^2\leq (1-n^{-\frac{1+(q-1)\alpha}{q}})\|\bbeta_n-\bbeta\|^2$. Since $\|\bbeta_n-\bbeta\|\leq \|\bbeta_{n-1}-\bbeta\|+\gamma_{n-1}\leq \|\bbeta_1-\bbeta\|+\sum_{i=k}^{n-1}\gamma_k$ for  all $n\geq n_0$. Then it follows that   $\|\bbeta_n-\bbeta\|\leq Cn^{1-\alpha}$ for all $n\geq 1$, and thus 
\begin{align*}
\mE\{\|\bbeta_n-\bbeta-\gamma_n \Phi(\bbeta_n)\|^4\}=&~\mE\{\|\bbeta_n-\bbeta-\gamma_n \Phi(\bbeta_n)\|^4\I_{S_{n,q}}\}\\
&~~~~~+\mE\{\|\bbeta_n-\bbeta-\gamma_n \Phi(\bbeta_n)\|^4\I_{S_{n,q}^c}\}\\
&\leq \bigg(1-n^{-\frac{1+(q-1)\alpha}{q}} \bigg)^2\mE\{\|\bbeta_n-\bbeta\|^4\}+C\frac{n^{4-4\alpha}}{n^{2p(1-\alpha)/q}}\,,
\end{align*}
where the last term is implied by $\mE\{\|\bbeta_n-\bbeta-\gamma_n \Phi(\bbeta_n)\|^4\I_{S_{n,q}^c}\}\leq Cn^{4-4\alpha}\P(\|\bbeta_n-\bbeta\|\geq Mn^{(1-\alpha)/q})$.   Using Markov's inequality with the fact $\mE(\|\bbeta_{n}-\bbeta\|^p)\leq C$ and mathematical induction,  we have 
\[
\P(\|\bbeta_n-\bbeta\|\geq Mn^{\frac{1-\alpha}{q}})\leq \frac{\mE(\|\bbeta_n-\bbeta\|^{2p})}{\{Mn^{\frac{1-\alpha}{q}}\}^{2p}}\leq C\frac{1}{n^{2p\frac{1-\alpha}{q}}}\,.
\]
Then by choosing $p>q\frac{4-\alpha}{2(1-\alpha)}$ it holds that 
\begin{align*}
\mE\{\|\bbeta_n-\bbeta-\gamma_n \Phi(\bbeta_n)\|^4\}\leq \bigg(1-n^{-\frac{1+(q-1)\alpha}{q}} \bigg)^2\mE\{\|\bbeta_n-\bbeta\|^4\}+C\frac{1}{n^{3\alpha}}\,,
\end{align*}
which completes the proof.
\end{proof}
\end{lemma}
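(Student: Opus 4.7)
The plan is to start from the one-step SGD recursion
$\bbeta_{n+1}-\bbeta = (\bbeta_n-\bbeta) - \gamma_n\Phi(\bbeta_n) + \gamma_n\xi_{n+1}$,
where $\xi_{n+1}$ is a bounded martingale increment with respect to $\F_n$. Expanding $\|\bbeta_{n+1}-\bbeta\|^2$ and then squaring, I would take the conditional expectation $\mE(\cdot\mid\F_n)$ so that the odd powers of $\xi_{n+1}$ vanish. After using Cauchy--Schwarz on the mixed second-order term $\langle \bbeta_n-\bbeta-\gamma_n\Phi(\bbeta_n),\xi_{n+1}\rangle\cdot\|\bbeta_n-\bbeta\|$ and the uniform bound $\|\xi_{n+1}\|\lesssim\|\bX_{n+1}\|$, this yields
\[
\mE\|\bbeta_{n+1}-\bbeta\|^4 \;\le\; \mE\|\bbeta_n-\bbeta-\gamma_n\Phi(\bbeta_n)\|^4 + C\gamma_n^2\mE\|\bbeta_{n+1}-\bbeta\|^2 + C\gamma_n^4,
\]
and since $\gamma_n^4 = o(n^{-3\alpha})$, the last term is absorbed. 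Hence the problem reduces to producing a contraction bound for $\mE\|\bbeta_n-\bbeta-\gamma_n\Phi(\bbeta_n)\|^4$.

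The obstacle here is that the strong-convexity-type lower bound from Proposition~\ref{pr2} is only valid in a neighborhood of $\bbeta$: the lower eigenvalue of $\Gamma_h$ depends on $h$ through $\mE\|U-\bX^T h\|^{-1}$, which deteriorates as $\|h-\bbeta\|$ grows. To circumvent this I would introduce the event $S_{n,q}:=\{\|\bbeta_n-\bbeta\|\le Mn^{(1-\alpha)/q}\}$ and treat the two regimes separately. On $S_{n,q}$, writing $\Phi(\bbeta_n)=\int_0^1\Gamma_{\bbeta+t(\bbeta_n-\bbeta)}(\bbeta_n-\bbeta)\,dt$ and restricting the integration to $t\in[0,1/\|\bbeta_n-\bbeta\|]$ gives
$\langle\Phi(\bbeta_n),\bbeta_n-\bbeta\rangle \gtrsim n^{-(1-\alpha)/q}\|\bbeta_n-\bbeta\|^2$,
from which an elementary expansion of $\|\bbeta_n-\bbeta-\gamma_n\Phi(\bbeta_n)\|^2$ and a choice $M=2\gamma C_\bbeta$ delivers the contraction factor $(1-n^{-(1+(q-1)\alpha)/q})$.

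On the complementary event $S_{n,q}^c$, I would rely on the deterministic bound $\|\bbeta_n-\bbeta\|\lesssim n^{1-\alpha}$ (inherited from $\|\bbeta_{k+1}-\bbeta\|\le\|\bbeta_k-\bbeta\|+C\gamma_k$ and $\sum_{k\le n}\gamma_k\lesssim n^{1-\alpha}$), so that $\|\bbeta_n-\bbeta-\gamma_n\Phi(\bbeta_n)\|^4\lesssim n^{4(1-\alpha)}$ pointwise. The remaining issue is to show that $\P(S_{n,q}^c)$ is small enough that this crude bound is absorbed into the $C/n^{3\alpha}$ term. Markov's inequality at order $2p$ gives $\P(\|\bbeta_n-\bbeta\|>Mn^{(1-\alpha)/q})\le C\,\mE\|\bbeta_n-\bbeta\|^{2p}/n^{2p(1-\alpha)/q}$, so I need a uniform higher-moment bound $\mE\|\bbeta_n-\bbeta\|^{2p}\le C$ for $p$ large. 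This is the main technical bottleneck: I would establish it by induction, recycling the same one-step inequality applied to $\|\cdot\|^{2p}$ together with binomial expansion and the boundedness of $\xi_{n+1}$, analogous to the $L^2$ and $L^4$ base cases used in Lemma~\ref{lem1}.

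Choosing $p>q(4-\alpha)/\{2(1-\alpha)\}$ forces $n^{4(1-\alpha)}\P(S_{n,q}^c)=O(n^{-3\alpha})$, and combining the two regimes yields the contraction inequality for $\mE\|\bbeta_n-\bbeta-\gamma_n\Phi(\bbeta_n)\|^4$. Plugging this into the decomposition from the first paragraph, where the Cauchy--Schwarz term contributes exactly the $C n^{-2\alpha}\mE\|\bbeta_n-\bbeta\|^2$ piece, produces the stated recursion. The one place I expect to be subtle is balancing the exponents $p$, $q$, and $\alpha$ so that $S_{n,q}^c$ is negligible uniformly in $n\ge n_0$; everything else is relatively mechanical bookkeeping on the SGD one-step update together with the local convexity supplied by Propositions~\ref{pr2} and~\ref{pr3}.
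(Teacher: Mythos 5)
Your proposal follows essentially the same route as the paper's proof: the same fourth-moment expansion of the one-step recursion with the martingale term killed by conditioning and Cauchy--Schwarz, the same truncation event $S_{n,q}$ with the integral representation of $\Phi$ restricted to $[0,1/\|\bbeta_n-\bbeta\|]$ yielding the contraction factor $(1-n^{-(1+(q-1)\alpha)/q})$ via $M=2\gamma C_{\bbeta}$, and the same treatment of $S_{n,q}^c$ through the deterministic bound $\|\bbeta_n-\bbeta\|\lesssim n^{1-\alpha}$, Markov at order $2p$ with uniform moments, and the choice $p>q(4-\alpha)/\{2(1-\alpha)\}$. The higher-moment bound you flag as the bottleneck is handled in the paper in the same way you propose (induction on the one-step inequality), so there is no substantive difference.
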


\section{Main Proofs}
\subsection{Proof of Theorem \ref{th1}}

\begin{proof}[Proof of Theorem \ref{th1}]
Let   $\theta \in(\alpha, 2\alpha)$ and $q>\frac{1-\alpha}{2\alpha-\theta}$,  we have $3\alpha-\theta>\frac{1-(q-1)\alpha}{q}$.
There exists  $n_0 \geq 1$, which depends on $\alpha , \theta$, such that  for all $n\geq n_0$,
\[
\bigg(1-n^{-\frac{1+(q-1)\alpha}{q}} \bigg)^2\bigg(\frac{n+1}{n}\bigg)^{\theta}+\frac{C}{(n+1)^{3\alpha-\theta}}=1-2
n^{-\frac{1+(q-1)\alpha}{q}}+o(n^{-\frac{1+(q-1)\alpha}{q}})\leq 1\,.
\]
We show by mathematical  induction that there are positive constants $C_{\alpha}$ and $C_{\theta}$  ($C_{\alpha}\leq C_{\theta}\leq 2C_{\alpha}$) such that 
\[
\mE\{\|\bbeta_n-\bbeta\|^2\}\leq \frac{C_{\alpha}}{n^{\alpha}}, ~~~\mE\{\|\bbeta_n-\bbeta\|^4\}\leq \frac{C_{\theta}}{n^{\theta}}\,,
\]
for any $n\geq n_0$. Because   $\max\{\mE\{\|\bbeta_n-\bbeta\|^2\}, \mE\{\|\bbeta_n-\bbeta\|^4\}\}\leq C$,   we then choose $C_{\alpha}\geq n_0\{\mE\{\|\bbeta_n-\bbeta\|^2\}\}$ and $C_{\theta}\geq n_0\{\mE\{\|\bbeta_n-\bbeta\|^4\}\}$.  Using  Lemma \ref{lem2} by induction, it holds that with $C_{\theta}\geq 1$,
\begin{align*}
\mE\{\|\bbeta_{n+1}-\bbeta\|^4\}\leq &~ \bigg(1-n^{-\frac{1+(q-1)\alpha}{q}} \bigg)^2\frac{C_{\theta}}{n^{\theta}}+\frac{CC_{\alpha}}{n^{3\alpha}}\\
&~~~~\leq \bigg(1-n^{-\frac{1+(q-1)\alpha}{q}} \bigg)^2\bigg(\frac{n+1}{n}\bigg)^{\theta} \frac{C_{\theta}}{(n+1)^{\theta}}\\
&~~~~~~~~+C_{\alpha}\bigg(\frac{n+1}{n}\bigg)^{3\alpha}\frac{1}{(n+1)^{3\alpha-\theta}}\frac{C_{\theta}}{(n+1)^{\theta}}
\leq \frac{C_{\theta}}{(n+1)^{\theta}}\,.
\end{align*}
With   Lemma \ref{lem1}, it also holds that 
\[
\mE\{\|\bbeta_{n+1}-\bbeta\|^2\} \leq \frac{C}{(n+1)^{\alpha}}+C2^{\theta+1}\frac{C_{\alpha}}{(n+1)^{\theta}}\,.
\]
By choosing $C_{\alpha}\geq C$ and $\frac{C2^{\theta+1}}{(n+1)^{\theta-\alpha}}\leq 1$ for all $n\geq n_0$,  we have $\mE\{\|\bbeta_{n+1}-\bbeta\|^2\} \leq \frac{C_{\alpha}}{(n+1)^{\alpha}}$. As a result, we complete the proof by taking $C_{\alpha}\geq \max_{n\leq n_0}\{n_0^{\alpha} \mE\{\|\bbeta_n-\bbeta\|^2\}\}$  and $C_{\theta}\geq \max_{n\leq n_0}\{n_0^{\theta} \mE\{\|\bbeta_n-\bbeta\|^4\}\}$.
\end{proof}

\subsection{Proof of Proposition  \ref{pr1}}
\begin{proof}[Proof of Proposition \ref{pr1}]
In the proof of  Theorem \ref{th1},  we have shown that  $\mE\{\|\bbeta_n-\bbeta\|^2\}=\mO(n^{-\alpha})$ and $\mE\{\|\bbeta_n-\bbeta\|^4\}=\mO(n^{-\theta})$,  To apply Borel-Cantelli’s Lemma, we only need to show that, there exists some $\delta>0$, 
\begin{align}\label{th2eq1}
\sum_{n\geq 1}\P\bigg(\|\bbeta_n-\bbeta\|\geq \frac{1}{n^{\delta}}\bigg)< \infty\,.
\end{align}
According to Markov's inequality, it holds that
\[
\sum_{n\geq 1}\P\bigg(\|\bbeta_n-\bbeta\|\geq \frac{1}{n^{\delta}}\bigg)\leq \sum_{n\geq 1}n^{4\delta} \mE(\|\bbeta_n-\bbeta\|^4)
\leq \sum_{n\geq 1}\frac{C}{n^{\theta-4\delta}}\,.
\]
By choosing $1<\theta<2\alpha$ and a constant $\delta$ satisfying $\delta<(\theta-1)/4$, then it holds that $\sum_{n\geq 1}n^{-\theta+4\delta}< \infty$. Therefore, equation (\ref{th2eq1}) holds and  we have
\[
\lim_{n\rightarrow \infty} \|\bbeta_n-\bbeta\|\rightarrow 0~~ a.s\,.
\]
Finally, the almost sure consistency of $\bar{\bbeta}_n$ is obtained by a direct application of Toeplitz’s lemma.
\end{proof}
\subsection{Proof of Theorem \ref{th3}}
\begin{proof}[Proof of Theorem \ref{th3}]
Recall the decomposition 
$
\bbeta_{k+1}=(\I_H-\gamma_k \Gamma_\bbeta)\bbeta_k+\gamma_k\xi_{k+1}-\gamma_kr_{k}\,,
$
then by algebra, it holds that
\[
\Gamma_\bbeta \bbeta_k=\xi_{k+1}-r_k+\frac{1}{\gamma_k}(\bbeta_k-\bbeta_{k+1})\,.
\]
The averaged estimator can be re-written as 
\[
\sqrt{n}\Gamma_\bbeta  (\bar{\bbeta}_n-\bbeta)=\frac{1}{\sqrt{n}}\bigg(\frac{T_1}{\gamma_1}-\frac{T_{n+1}}{\gamma_n}+\sum_{k=2}^nT_k\bigg(\frac{1}{\gamma_k}-\frac{1}{\gamma_{k-1}}\bigg)-\sum_{k=1}^n r_k\bigg)+\frac{1}{\sqrt{n}}\sum_{k=1}^n\xi_{k+1}\,,
\]
where $T_n=\bbeta_n-\bbeta$. As $T_1$ is bounded almost surely,   we have  $n^{-1/2}\gamma_1^{-1}T_1=o(1), a.s$. By Theorem \ref{th1}, $\mE\{\|n^{-1/2}\gamma_n^{-1}T_n\|^2\}=\mO(n^{-1}n^{2\alpha}n^{-\alpha})=\mO(n^{\alpha-1})$,  we have $n^{-1/2}\gamma_n^{-1}T_n=o_p(1)$.
Moreover, as $|\gamma_k^{-1}-\gamma_{k-1}^{-1}|\leq 2\alpha \gamma k^{\alpha-1}$, it holds that
\begin{align*}
\frac{1}{n}\mE\left\|\sum_{k=2}^nT_k\left(\frac{1}{\gamma_k}-\frac{1}{\gamma_{k-1}}\right)\right\|^2 \leq & \frac{1}{n}\bigg(\sum_{k=2}^n |\gamma_k^{-1}-\gamma_{k-1}^{-1}| \sqrt{\mE\|\bbeta_k-\bbeta\|^2}\bigg)^2 \\
&~~~~~~~~~\lesssim \frac{1}{n}4\alpha^2\bigg(\sum_{k=2}^n k^{\alpha/2-1}\bigg)^2=\mO\left(\frac{1}{n^{1-\alpha}}\right)\,.
\end{align*}
Hence, we have $n^{-1/2}\sum_{k=2}^n T_k(\gamma_k^{-1}-\gamma_{k-1}^{-1}) \xrightarrow{P} 0$. Finally, by  choosing $\theta\in (1,2\alpha)$, $n^{-1/2}\sum_{k=1}^nr_k\xrightarrow{P} 0$, it holds that
\begin{align*}
\frac{1}{n}\mE\left\|\sum_{k=1}^n r_k\right\|^2\leq 
& \frac{1}{n}\bigg(\sum_{k=1}^n\sqrt{\mE\|\bbeta_k-\bbeta\|^4}\bigg)^2
\leq \frac{1}{n}\bigg(\sum_{k=1}^n \frac{1}{k^{\theta/2}}\bigg)^2
=\mO\bigg(\frac{1}{n^{\theta-1}}\bigg)\,.
\end{align*}
Let $\Xi_n=\mE(\xi_{n+1}\otimes\xi_{n+1}|\F_n)$, then we have
\[
\Xi_n=\mE\bigg(\frac{\bX_{n+1}\bX_{n+1}^{T}\otimes \{(Y_{n+1}-\bX_{n+1}\bbeta_n)\otimes (Y_{n+1}-\bX_{n+1}\bbeta_n)\}}{\|Y_{n+1}-\bX_{n+1}\bbeta_n\|^2}\bigg| \F_n\bigg)-\Phi(\bbeta_n)\otimes\Phi(\bbeta_n)\,.
\]
By some algebra, it holds that
\[
\|\Phi(\bbeta_n)\|\leq \mE\bigg(\frac{2}{\|Y-\bX^{T}\bbeta\|}\bigg)\|\mE(\bX\bX^{T})\| \|\bbeta_n-\bbeta\|\,.
\]
Under Assumption \ref{as3}, as $\mE\{\|U-h\|^{-1}\}\leq C$,  we have $\|\Phi(\bbeta_n)\|\leq  C\|\bbeta_n-\bbeta\|\rightarrow 0 $ almost surely.  We then define 
\[
B_0=\mE\bigg(\frac{ \{(Y_{n+1}-\bX_{n+1}\bbeta)\otimes (Y_{n+1}-\bX_{n+1}\bbeta)\}}{\|Y_{n+1}-\bX_{n+1}\bbeta\|^2} \bigg)\,.
\]
Based on some  similar computation for $\Phi(\bbeta_n)$, and together with Assumption \ref{as3}, it holds that
\[
\|\Xi_n-\bX_n\bX_n^{T}\otimes B_0\|\leq C\|\bbeta_n-\bbeta\|\rightarrow 0, ~~a.s\,.
\]
To show the asymptotic normality,  we only need to check the conditions of the functional central limit theorem (CLT)  for Hilbert-valued  martingales \citep{Lavrentyev2016}. For any orthonormal basis $e_i, e_j \in {\mathcal H}^d$, it holds that
\[
\bigg|\frac{1}{n}\sum_{k=1}^n\mE\{\langle \xi_k\otimes \xi_k e_i, e_j\rangle|\F_{k-1}\}-\langle (\bD \otimes B_0) e_i, e_j\rangle\bigg|\leq \|\bD\|\frac{1}{n}\sum_{k=1}^n\|\bbeta_k-\bbeta\|
\xrightarrow{P} 0\,,
\]
where $\bD=n^{-1}\sum_{i=1}^n\bX_i\bX_i^{T}$. Similarly, it also can be shown that $|\mE\{\| \xi_k\|^2|\F_{k-1}\}-\tr (\bX_n\bX_n^{T}\otimes B_0)|\leq C\|\bbeta_n-\bbeta\|\xrightarrow{P} 0$. Then we have
\[
\bigg|\frac{1}{n}\sum_{k=1}^n\mE\{\| \xi_k\|^2|\F_{k-1}\}-\tr \{\bD \otimes B_0\}\bigg| \xrightarrow{P} 0\,.
\]
Finally, because of $\max_{k}\|\xi_k\|\leq \max_{k}\|\bX_k\|\leq C$, and $\|\bD-\Sigma \|\rightarrow 0$ almost surely,  
it holds that $\frac{1}{\sqrt{n}}\sum_{k=1}^n\xi_{k+1} \xrightarrow{{\mathcal L}}  N(0, \Sigma\otimes B_0)$ under Assumption \ref{as1}-\ref{as4}. Since  $\Gamma_{\bbeta}$ is positive definite according to Proposition \ref{pr2}, it holds that
\[
\bbeta_n-\bbeta \xrightarrow{{\mathcal L}}  N(0, \Sigma^{-1}\otimes A_{0}^{-1}B_0A_{0}^{-1})\,,
\]
which completes the proof.
\end{proof}

\subsection{Proof of Theorem \ref{th4}}
\begin{lemma} \label{lem3}
Suppose Assumptions \ref{as1}--\ref{as3} are satisfied,  we have for any $n\geq 1$,
$$\|\nabla G_{\bar{\bbeta}_n}( \Upsilon_n)- {\hat \Gamma}_{\bar{\bbeta}_n}( { \Upsilon}_n) \| \leq C\|\Upsilon_n\|^2,$$ where ${\hat \Gamma}_{\bar{\bbeta}_n}=\nabla^2 G_{\bar{\bbeta}_n}(0)$.
\end{lemma}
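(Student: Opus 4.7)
The plan is to mimic the proof of Proposition~\ref{pr3} almost verbatim, with the bootstrap loss $G_{\bar{\bbeta}_n}$ and argument $\Upsilon_n$ playing the roles of $\Phi$ and $\bbeta_k-\bbeta$. The starting observation is that $\nabla G_{\bar{\bbeta}_n}(0)=0$: indeed, the gradient
\[
\nabla G_{\bar\bbeta_n}(h)=-\mE\!\left\{\frac{\bX[W(Y-\bX^T\bar\bbeta_n)-\bX^T h]}{\|W(Y-\bX^T\bar\bbeta_n)-\bX^T h\|}\right\}
\]
vanishes at $h=0$ because $\mE[W\mid \bX,Y,\bar\bbeta_n]=0$ by the Rademacher symmetry. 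Consequently, by the integral form of Taylor's theorem applied to the smooth functional $h\mapsto\nabla G_{\bar\bbeta_n}(h)$,
\[
\nabla G_{\bar\bbeta_n}(\Upsilon_n)-\hat\Gamma_{\bar\bbeta_n}(\Upsilon_n)
=\int_0^1\Bigl[\nabla^2 G_{\bar\bbeta_n}(t\Upsilon_n)-\nabla^2 G_{\bar\bbeta_n}(0)\Bigr](\Upsilon_n)\,\d t,
\]
so the whole task reduces to showing that $t\mapsto \nabla^2 G_{\bar\bbeta_n}(t\Upsilon_n)(\Upsilon_n)$ is Lipschitz in $t$ with constant proportional to $\|\Upsilon_n\|^2$.

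Following the template of Proposition~\ref{pr3}, the plan is to introduce $\psi_{h,h'}(t):=\nabla^2 G_{\bar\bbeta_n}(th)(h')$ and bound $\sup_{t\in[0,1]}\|\psi'_{\Upsilon_n,\Upsilon_n}(t)\|$. A direct differentiation of $\nabla G_{\bar\bbeta_n}$ gives
\[
\nabla^2 G_{\bar\bbeta_n}(h)
=\mE\!\left\{\bX\bX^T\otimes \frac{1}{\|W(Y-\bX^T\bar\bbeta_n)-\bX^T h\|}\Bigl(\mI_\H-\frac{\Delta_h\otimes\Delta_h}{\|\Delta_h\|^2}\Bigr)\right\},
\]
with $\Delta_h:=W(Y-\bX^T\bar\bbeta_n)-\bX^T h$. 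Differentiating in $t$ after substituting $h=th'$, I would split $\psi'_{h,h'}(t)$ into the derivative of the $1/\|\Delta\|$ factor times the projection piece, plus $1/\|\Delta\|$ times the derivative of the projection piece, exactly as in the $W_h(t)$/$V_{h,h'}(t)$ decomposition in the proof of Proposition~\ref{pr3}. Each factor is then bounded pointwise: the projection and its derivative contribute factors at most $2\|h'\|$ and $4\|\bX^T h\|\|h'\|/\|\Delta\|$, and the derivative of $1/\|\Delta\|$ contributes $\|\bX^T h\|/\|\Delta\|$. Collecting terms yields, almost surely,
\[
\|\psi'_{h,h'}(t)\|\;\le\;C\,\|h\|\,\|h'\|\,\mE\bigl\{\|W(Y-\bX^T\bar\bbeta_n)-t\bX^T h\|^{-2}\bigr\}.
\]

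The key technical step, and the main obstacle, is controlling the expectation on the right uniformly in $t\in[0,1]$ and $\bar\bbeta_n$. Since $W\in\{\pm 1\}$, one has $\|W(Y-\bX^T\bar\bbeta_n)-t\bX^T h\|=\|Y-\bX^T\bar\bbeta_n\mp t\bX^T h\|=\|U+\bX^T(\bbeta-\bar\bbeta_n)\mp t\bX^T h\|$, so the inverse second moment can be written in the form $\mE\{\|U-g\|^{-2}\}$ for some $\H$-valued $g$ (depending on $\bX,\bar\bbeta_n,h,t,W$). Assumption~\ref{as3} furnishes the uniform bound $\mE\{\|U-g\|^{-2}\}\le C$ for every deterministic $g$, and conditioning on $(\bX,W,\bar\bbeta_n)$ — which is independent of $U$ — lets us apply that bound pointwise before taking the outer expectation. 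Once this uniform control is in hand, we conclude $\sup_t\|\psi'_{\Upsilon_n,\Upsilon_n}(t)\|\le C\|\Upsilon_n\|^2$, which together with the Taylor identity above yields the claim. This last conditioning step is where one needs to be slightly careful, because $\bar\bbeta_n$ is random and the bootstrap introduces the extra factor $W$; but since neither affects the conditional law of $U$, the argument goes through and the bound $C\|\Upsilon_n\|^2$ follows with $C$ independent of $n$.
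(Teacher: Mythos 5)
Your proposal is correct and follows essentially the same route as the paper: a Taylor expansion with integral remainder around $h=0$ (using $\nabla G_{\bar{\bbeta}_n}(0)=0$), followed by the same $W_h(t)$/$V_{h,h'}(t)$-type differentiation bound from the proof of Proposition~\ref{pr3}, and finally Assumption~\ref{as3} applied after exploiting $W\in\{\pm1\}$ to rewrite the inverse second moment as $\mE\{\|U-g\|^{-2}\}$ conditionally on the past. Your explicit remark about conditioning on $(\bX,W,\bar{\bbeta}_n)$ is exactly the step the paper performs implicitly via the conditional expectation given $\F_n$, so no gap remains.
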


\begin{proof}
Note that $\nabla G_{\bar{\bbeta}_n}(0)=0$,  applying Taylor’s expansion with integral remainders, we have
\begin{align*}
\nabla G_{\bar{\bbeta}_n}( \Upsilon_n)=&\nabla G_{\bar{\bbeta}_n}(0)+\int_0^1\nabla^2 G_{\bar{\bbeta}_n}(t(\Upsilon_n))\Upsilon_n \d t\\
=&\int_0^1\nabla^2 G_{\bar{\bbeta}_n}(t\Upsilon_n)\Upsilon_n \d t\,.
\end{align*}
It then holds that
\begin{align*}
\|\nabla G_{\bar{\bbeta}_n}( \Upsilon_n)- {\hat \Gamma}_{\bar{\bbeta}_n}({\Upsilon}_n)\|=&\left\|\int_0^1\{\nabla^2 G_{\bar{\bbeta}_n}(0)-\nabla ^2 G_{\bar{\bbeta}_n}(t\Upsilon_n)\} \Upsilon_n \d t \right\|\\
& \leq \int_0^1\left\|\{\nabla^2 G_{\bar{\bbeta}_n}(0)-\nabla ^2 G_{\bar{\bbeta}_n}(t\Upsilon_n)\} \Upsilon_n \right\| \d t \,.
\end{align*}
 According to the proof of Proposition \ref{pr2},  we have
\[
\left\|\{\nabla^2 G_{\bar{\bbeta}_n}(0)-\nabla ^2 G_{\bar{\bbeta}_n}(t\Upsilon_n)\} \Upsilon_n \right\|\leq C\mE\{\|W(Y-\bX^{T}\bar{\bbeta}_n)-t\bX^{T}\Upsilon_n\|^{-2}\|\Upsilon_n\|^2|\F_n\}\leq C\|\Upsilon_n\|^2\,.
\]
the last inequality is implied by Assumption \ref{as3} with $\|W(Y-\bX^{T}\bar{\bbeta}_n)-t\Upsilon_n\|^{-2} \leq \|Y-\bX^{T}\bar{\bbeta}_n-t\Upsilon_n\|^{-2}+\|Y-\bX\bar{\bbeta}_n+t\Upsilon_n\|^{-2}$. 
\end{proof}
\begin{lemma} \label{lem4}
Suppose Assumptions \ref{as1}--\ref{as3} are satisfied,  we have $$\|({\hat \Gamma}_{\bar{\bbeta}_n}-\Gamma_{\bbeta}){\Upsilon}_n\|\leq C \|\bar{\bbeta}_n-\bbeta\|\|\Upsilon_n\|,$$ for any $n\geq 1$.
\end{lemma}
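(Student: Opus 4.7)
The plan is first to identify $\hat{\Gamma}_{\bar{\bbeta}_n}$ with the population Hessian $\Gamma_{\bar{\bbeta}_n} = \mE\{\bX\bX^T \otimes A_{\bar{\bbeta}_n}\}$, and then to reduce the statement to a Lipschitz-in-$h$ property of the map $h \mapsto \Gamma_h$ that mirrors the argument already used in Proposition \ref{pr3}.

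For the first step, I would differentiate $\nabla G_{\bar{\bbeta}_n}(h)$ at $h = 0$. Setting $Z(h) = W(Y - \bX^T \bar{\bbeta}_n) - \bX^T h$ and computing
\[
\nabla^2 G_{\bar{\bbeta}_n}(h) \;=\; \mE\!\left\{\bX\bX^T \otimes \frac{1}{\|Z(h)\|}\!\left(\mI_{\H} - \frac{Z(h)\otimes Z(h)}{\|Z(h)\|^2}\right)\right\},
\]
and evaluating at $h=0$ uses the crucial facts $W^2 = 1$, $\|WU\| = \|U\|$, and $(WU)\otimes(WU) = U\otimes U$, so the Rademacher weight cancels and one obtains exactly $\hat{\Gamma}_{\bar{\bbeta}_n} = \Gamma_{\bar{\bbeta}_n}$ in the notation used in Proposition \ref{pr2}.

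Given this identification, the claim becomes $\|(\Gamma_{\bar{\bbeta}_n} - \Gamma_{\bbeta})\Upsilon_n\| \leq C\|\bar{\bbeta}_n - \bbeta\|\|\Upsilon_n\|$. Following Proposition \ref{pr3}, I set $\tilde h = \bar{\bbeta}_n - \bbeta$ and define $\varphi_{\tilde h, \Upsilon_n}(t) := \Gamma_{\bbeta + t\tilde h}(\Upsilon_n)$ for $t \in [0,1]$. By the fundamental theorem of calculus,
\[
\|(\Gamma_{\bar{\bbeta}_n} - \Gamma_{\bbeta})\Upsilon_n\| = \Big\|\int_0^1 \varphi'_{\tilde h, \Upsilon_n}(t)\,\d t\Big\| \leq \sup_{t \in [0,1]} \|\varphi'_{\tilde h, \Upsilon_n}(t)\|,
\]
so it suffices to establish $\|\varphi'_{\tilde h, \Upsilon_n}(t)\| \leq C\|\tilde h\|\|\Upsilon_n\|$ uniformly in $t$. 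This is the exact bound already derived at the end of the proof of Proposition \ref{pr3}: one splits the derivative via the product rule into a $W'_h(t)V_{h,\Upsilon_n}(t)$ term and a $W_h(t)V'_{h,\Upsilon_n}(t)$ term, uses the elementary bounds $\|V_{h,\Upsilon_n}(t)\| \leq 2\|\Upsilon_n\|$, $|W'_h(t)| \leq \|\bX^T\tilde h\|/\|Y-\bX^T\bbeta - t\bX^T\tilde h\|^2$, and $\|V'_{h,\Upsilon_n}(t)\| \leq 4\|\bX^T\tilde h\|\|\Upsilon_n\|/\|Y-\bX^T\bbeta - t\bX^T\tilde h\|$, and then invokes Assumption \ref{as3} to absorb $\mE\|U - t\bX^T\tilde h\|^{-2} \leq C$ together with almost-sure boundedness of $\bX$ under Assumption \ref{as1}.

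The only genuine point that needs checking, and which I expect to be the main potential pitfall, is the cancellation step in the first paragraph: one must verify that the Hessian of $G_{\bar{\bbeta}_n}$ at $0$ really does coincide with $\Gamma_{\bar{\bbeta}_n}$ and not some $W$-dependent object, so that Lemma \ref{lem4} can be reduced to an application of the already-established derivative bound from Proposition \ref{pr3}. Once that identification is in hand, the remainder is a direct transcription of the Proposition \ref{pr3} computation with $h = \tilde h = \bar{\bbeta}_n - \bbeta$ and $h' = \Upsilon_n$.
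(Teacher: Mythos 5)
Your proposal is correct and follows essentially the same route as the paper: identify $\hat{\Gamma}_{\bar{\bbeta}_n}=\nabla^2 G_{\bar{\bbeta}_n}(0)$ with the population Hessian at $\bar{\bbeta}_n$ (the Rademacher weight cancels since $W^2=1$), write $(\hat{\Gamma}_{\bar{\bbeta}_n}-\Gamma_{\bbeta})\Upsilon_n$ as the integral of $\varphi'_{\bar{\bbeta}_n-\bbeta,\Upsilon_n}(t)$ along the segment, and invoke the derivative bound $\|\varphi'_{h,h'}(t)\|\lesssim\|h\|\|h'\|$ from Proposition \ref{pr3} together with Assumption \ref{as3}. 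The only cosmetic difference is that the paper performs this implicitly under conditional expectation given $\F_n$ and does not spell out the $W$-cancellation, which you verify explicitly.
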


\begin{proof}
Recall that $\Gamma_{\bbeta}$ is defined as 
\[
\Gamma_{\bbeta}= \Sigma \otimes \mE\bigg\{\frac{1}{\|Y-\bX^{T}\bbeta\|}\bigg(\I_{H}-\frac{(Y-\bX^{T}\bbeta) \otimes(Y-\bX^{T}\bbeta)}{\|Y-\bX^{T}\bbeta\|^2}\bigg)\bigg\}\,.
\]
We express $\varphi_{h, h^{'}}$, a function $ [0,1]\rightarrow  \H^d$, as
\begin{align*}
\varphi_{h, h^{'}}(t)=&\Gamma_{\bbeta+th} (h^{'})\\
=&\Sigma\otimes\mE\bigg\{\frac{1}{\|Y-\bX^{T}\bbeta-t \bX^{T}h\|}\bigg(\I_{\H}-\frac{(Y-\bX^{T}\bbeta-t\bX^{T}h)\otimes(Y-\bX^{T}\bbeta-t\bX^{T}h)}{\|Y-\bX^{T}\bbeta-t\bX^{T}h\|^2}\bigg) \bigg\}(h^{'})\,.
\end{align*}
Then we can write $({\hat \Gamma}_{\bar{\bbeta}_n}-\Gamma_{\bbeta}){\Upsilon}_n=\mE\{\varphi_{\bar{\bbeta}_n-\bbeta, \Upsilon_n}(1)-\varphi_{\bar{\bbeta}_n-\bbeta, \Upsilon_n}(0)|\F_n\}$, and it thus follows that
\[
\mE\{\varphi_{\bar{\bbeta}_n-\bbeta, \Upsilon_n}(1)-\varphi_{\bar{\bbeta}_n-\bbeta, \Upsilon_n}(0)|\F_n\}=\int_0^1
\mE\{\varphi^{'}_{\bar{\bbeta}_n-\bbeta}(t)|\F_n\} \d t\,.
\]
It also holds that $\|\int_0^1\mE\{\varphi^{'}_{\bar{\bbeta}_n-\bbeta, \Upsilon_n}(t)|\F_n\} \d t\| \leq  \sup_{t \in[0,1]}\mE\{\|\varphi^{'}_{\bar{\bbeta}_n-\bbeta, \Upsilon_n}(t)\||\F_n\}$. Based on the proof of Proposition \ref{pr3}, we have
\[
\mE\{\|\varphi^{'}_{h, h^{'}}(t)\|\}\leq 6C \|h\|\|h^{'}\|\}\,.
\]
Hence, it follows that \[
\mE\{\|\varphi^{'}_{\bar{\bbeta}_n-\bbeta, \Upsilon_n}(t)\||\F_n\}\leq C\mE\{\|Y-\bX^{T}\bbeta-t\bX^{T}(\bar{\bbeta}_n-\bbeta)\|^{-2}\|\bar{\bbeta}_n-\bbeta\|\|\Upsilon_n\||\F_n\}\leq C \|\bar{\bbeta}_n-\bbeta\|\|\Upsilon_n\|\,,
\]
which completes the proof.
\end{proof}
The following lemma gives the convergence rates of $\Upsilon_n$ in $L_2$ and $L_4$. As its proof   
 is  particularly similar to that of Theorem \ref{th1}, we omit it here.
\begin{lemma} \label{lem5}
Under Assumptions \ref{as1} -- \ref{as4}, for all $ \theta\in (\alpha, 2\alpha)$ and $n\geq 1$, it holds that
\begin{align*}
\mE\{\|\Upsilon_n\|^2\}=\mO\bigg(\frac{1}{n^{\alpha}}\bigg),~~~~
\mE\{\|\Upsilon_n\|^4\}=\mO\bigg(\frac{1}{n^{\theta}}\bigg)\,.
\end{align*}
\end{lemma}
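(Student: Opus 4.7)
The plan is to mirror the structure of the proof of Theorem \ref{th1}, treating the bootstrap recursion as a perturbed Robbins--Monro scheme whose target minimum is $h=0$. The two key new ingredients that replace their original counterparts are Lemmas \ref{lem3} and \ref{lem4}, which together take the role that Proposition \ref{pr3} played for $\bbeta_n$.

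First, I would rewrite the recursion \eqref{eqq32} in the decomposition form analogous to \eqref{lm1eq1}. Set $\tilde U_{n+1}=-\bX_{n+1}(\varepsilon_{n+1}^b-\bX_{n+1}^{T}\Upsilon_n^b)/\|\varepsilon_{n+1}^b-\bX_{n+1}^{T}\Upsilon_n^b\|$ and $\tilde\xi_{n+1}=\nabla G_{\bar\bbeta_n}(\Upsilon_n^b)-\tilde U_{n+1}$, so that $\mE(\tilde\xi_{n+1}\mid\F_n)=0$ (since $(W_{n+1}^b,\bX_{n+1},Y_{n+1})$ is independent of $\F_n$ and $\bar\bbeta_n$ is $\F_n$-measurable). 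Then
\begin{equation*}
\Upsilon_{n+1}^b = \Upsilon_n^b - \gamma_n \Gamma_{\bbeta}\Upsilon_n^b + \gamma_n \tilde\xi_{n+1} - \gamma_n \tilde r_n,
\end{equation*}
where the effective remainder is $\tilde r_n = \nabla G_{\bar\bbeta_n}(\Upsilon_n^b)-\Gamma_{\bbeta}\Upsilon_n^b$. Adding and subtracting $\hat\Gamma_{\bar\bbeta_n}\Upsilon_n^b$ and applying the triangle inequality with Lemmas \ref{lem3} and \ref{lem4} gives
\begin{equation*}
\|\tilde r_n\|\leq C\|\Upsilon_n^b\|^2 + C\|\bar\bbeta_n-\bbeta\|\,\|\Upsilon_n^b\|,
\end{equation*}
which is the bootstrap analogue of $\|r_k\|\leq C\|\bbeta_k-\bbeta\|^2$ from Proposition \ref{pr3}.

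Next I would replicate the arguments of Lemmas \ref{lem1} and \ref{lem2} with $\bbeta_n-\bbeta$ replaced by $\Upsilon_n^b$. Boundedness of $\tilde\xi_{n+1}$ follows from $\|\bX_{n+1}\|\leq C$ and $|W_{n+1}^b|=1$, so the same martingale orthogonality and the same estimates $\sum_k \gamma_k^2\|\nu_{n-1}\nu_k^{-1}\|^2\lesssim n^{-\alpha}$ give the martingale contribution. The deterministic exponential-decay operator $\nu_n=\prod_{k=1}^n(\I_H-\gamma_k\Gamma_{\bbeta})$ is identical to that in Lemma \ref{lem1}, since we use the true $\Gamma_{\bbeta}$ in the decomposition and dump all discrepancies into $\tilde r_n$. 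The extra cross term $\|\bar\bbeta_n-\bbeta\|\,\|\Upsilon_n^b\|$ in $\tilde r_n$ is handled by Cauchy--Schwarz: $\mE\{\|\bar\bbeta_n-\bbeta\|^2\|\Upsilon_n^b\|^2\}\le (\mE\|\bar\bbeta_n-\bbeta\|^4)^{1/2}(\mE\|\Upsilon_n^b\|^4)^{1/2}$, and by Theorem \ref{th1} (together with the Toeplitz argument already used for $\bar\bbeta_n$ in the proof of Proposition \ref{pr1}), $\mE\|\bar\bbeta_n-\bbeta\|^4$ decays at least as fast as $n^{-\theta}$, so this contribution is of strictly smaller order than the dominant terms. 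Mirroring Lemma \ref{lem2} then yields
\begin{equation*}
\mE\{\|\Upsilon_{n+1}^b\|^4\}\le\Bigl(1-n^{-\frac{1+(q-1)\alpha}{q}}\Bigr)^2\mE\{\|\Upsilon_n^b\|^4\}+\frac{C}{n^{3\alpha}}+\frac{C}{n^{2\alpha}}\mE\{\|\Upsilon_n^b\|^2\},
\end{equation*}
since the event-splitting argument in Lemma \ref{lem2} uses only $C_{\bbeta}\|h\|^2\le\langle h,\Gamma_{\bbeta}h\rangle$, which is available verbatim from Proposition \ref{pr2}. Finally the joint induction in Theorem \ref{th1} on the $L^2$ and $L^4$ rates, with base case provided by the trivial a priori bound $\mE\|\Upsilon_n^b\|^4\le C$ (obtained as in the proof of Lemma \ref{lem1} from $\|\tilde\xi_{n+1}\|\le C$), delivers the two claimed rates.

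The main obstacle is the extra randomness in $\hat\Gamma_{\bar\bbeta_n}$: unlike the deterministic operator $\Gamma_{\bbeta}$ used throughout the proof of Theorem \ref{th1}, here the effective Hessian is a data-dependent perturbation. The device of absorbing the difference $(\hat\Gamma_{\bar\bbeta_n}-\Gamma_{\bbeta})\Upsilon_n^b$ into $\tilde r_n$ rather than into the linear part circumvents this, and the bilinear bound of Lemma \ref{lem4}, combined with the already-known convergence rate of $\bar\bbeta_n$, ensures the perturbation is negligible relative to the intrinsic $\|\Upsilon_n^b\|^2$ term. Once this reduction is in place, the rest is a faithful transcription of the proofs of Lemmas \ref{lem1}, \ref{lem2} and Theorem \ref{th1}.
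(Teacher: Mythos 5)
The paper does not actually write out a proof of Lemma~\ref{lem5}; it states only that the argument is ``particularly similar to that of Theorem~\ref{th1}'' and omits it, and your proposal is precisely a faithful realization of that route: the same Robbins--Monro decomposition, the same martingale/exponential-decay estimates from Lemmas~\ref{lem1} and~\ref{lem2}, and the same joint $L^2$/$L^4$ induction, with Lemmas~\ref{lem3} and~\ref{lem4} correctly identified as the substitutes for Proposition~\ref{pr3} and the extra cross term $\|\bar\bbeta_n-\bbeta\|\,\|\Upsilon_n\|$ disposed of by Cauchy--Schwarz and the known rate for $\bar\bbeta_n$. This matches the paper's intended approach, so no further comparison is needed.
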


Next, we give the proof of Theorem  \ref{th4}. 

\begin{proof}[Proof of Theorem \ref{th4}]
By rearranging the  decomposing (\ref{eqq32}) and (\ref{eqq33}), it holds that
\begin{align*}
\Upsilon_{n+1}=(\I_H-\gamma_n \Gamma_{\bbeta})\Upsilon_n+\gamma_n\eta_{n+1}-\gamma_n r_{n1}-\gamma_n r_{n2}\,,
\end{align*}
where $\eta_n=\nabla G_{\bar{\bbeta}_n}( \Upsilon_n)+\bX_{n+1}(\Lambda_{n+1}- \bX_{n+1}^{T}\Upsilon_n )\|\Lambda_{n+1}-\bX_{n+1}^{T}\Upsilon_n \|^{-1}$, $r_{n1}=\Phi_{\bar{\bbeta}_n}(\Upsilon_n)- {\hat \Gamma}_{\bar{\bbeta}_n}{\Upsilon}_n $ and $r_{n2}=
({\hat \Gamma}_{\bar{\bbeta}_n}- \Gamma_{\bbeta}){\Upsilon}_n$.
To bound the remainder term $r_{n1}$ and $r_{n2}$, using the results from  Lemma \ref{lem3} and Lemma \ref{lem4}, we have $\|r_{n1}\|\leq C\|\Upsilon_n\|^2$
and $\|r_{n2}\|\leq C\|\bar{\bbeta}_n-\bbeta\|\|\Upsilon_n\|$. By some simple algebra with $r_k=r_{k1}+r_{k2}$,   we can obtain that for any $k\geq 1$
\[
\Gamma_{\bbeta} \Upsilon_k=\eta_{k+1}-r_k+\frac{1}{\gamma_k}(\Upsilon_k-\Upsilon_{k+1})\,.
\]
Summing  up these equalities,  we have
$
n\Gamma_{\bbeta}\bar{\Upsilon}_n=\sum_{k=1}^n\frac{1}{\gamma_k}(\Upsilon_k-\Upsilon_{k+1})-\sum_{k=1}^nr_k+\sum_{k=1}^n\eta_{k+1}\,.
$
Then it holds that with   $S_{n+1}=\sum_{k=1}^n\eta_{k+1}$,
\[
\sqrt{n}\Gamma_{\bbeta}\Upsilon_n=\frac{1}{\sqrt{n}}\left\{\frac{\Upsilon_1}{\gamma_1}-\frac{\Upsilon_{n+1}}{\gamma_n}+
\sum_{k=2}^n\Upsilon_k\left(\frac{1}{\gamma_k}-\frac{1}{\gamma_{k-1}}\right)\right\}-\frac{1}{\sqrt{n}}\sum_{k=1}^nr_k+
\frac{1}{\sqrt{n}}S_{n+1}\,. 
\]
To use the martingale CLT for  $\frac{1}{\sqrt{n}}S_{n+1}$,  we denote 
\[
\eta_n^{*}=\bX_{n+1}\frac{{\Lambda}_{n+1}}{\|{\Lambda}_{n+1}\|}=W_{n+1}\frac{\bX_{n+1}(Y_{n+1}-\bX_{n+1}\bar{\bbeta}_n)}{\|Y_{n+1}-\bX_{n+1}\bar{\bbeta}_n\|}\,,
\]
and write $\frac{1}{\sqrt{n}}S^{*}_{n+1}=\frac{1}{\sqrt{n}}\sum_{k=1}^n\eta_{k+1}^{*}$. According to  Corollary 6 in \citet{Lavrentyev2016}, conditional on observations, we have $\frac{1}{\sqrt{n}}S^{*}_{n+1}\xrightarrow{{\mathcal L}} N(0, \Omega_n)$, 
where $\Omega_n$ is defined as
\begin{align*}
\Omega_n=\frac{1}{n}\sum_{k=1}^n\frac{\bX_{k+1}\bX_{k+1}^{T}\otimes \{(Y_{k+1}-\bX_{k+1}{\bar \bbeta}_k)\otimes (Y_{k+1}-\bX_{k+1}{\bar \bbeta}_k)\}}{\|Y_{k+1}-\bX_{k+1}{\bar \bbeta}_k\|^2} \,.
\end{align*}
Let $\Omega_n^{*}$ denote $\frac{1}{n}\sum_{k=1}^n\frac{\bX_{k+1}\bX_{k+1}^{T}\otimes \{(Y_{k+1}-\bX_{k+1}{\bbeta})\otimes (Y_{k+1}-\bX_{k+1}{\bbeta})\}}{\|Y_{k+1}-\bX_{k+1}\bbeta\|^2}$, then it holds that
\begin{align*}
\|\Omega_n-\Omega_n^{*}\|\leq &\frac{1}{n}\sum_{k=1}^n\frac{C}{\|Y_{k+1}-\bX_{k+1}\bbeta\|}\|\bar{\bbeta}_k-\bbeta\|\\
\leq & C\left\{\frac{1}{n}\sum_{k=1}^n\frac{1}{\|Y_{k+1}-\bX_{k+1}\bbeta\|^2}\right\}^{1/2} \left\{\frac{1}{n}\sum_{k=1}^n\|\bar{\bbeta}_k-\bbeta\|^2\right\}^{1/2}\,,
\end{align*}
where the second inequality is implied by   the Cauchy--Schwarz inequality.  Due to $\mE\{\|U-h\|^{-2}\}\leq C$ imposed in Assumption \ref{as3} and $\|\bar{\bbeta}_n-\bbeta\|\rightarrow 0$ almost surely, it thus holds that $\|\Omega_n-\Omega_n^{*}\|\xrightarrow{P} 0$.
For any $u \in {\mathcal H}^d$,  we have
$\langle u, \Omega_{n}^{*}u \rangle\xrightarrow{P} \langle u, \bD\otimes B_0 u\rangle $ by law of large numbers.  Conditional on observations, it holds that 
\[
\sup_{t \in \mR}\bigg|\P(\langle u, G \rangle \leq t)- \P^{*}( \langle u, \frac{1}{\sqrt{n}}S_n^{*} \rangle \leq t)\bigg| \xrightarrow{p} 0
\] 
where $G \xrightarrow{{\mathcal L}} N(0, \bD\otimes B_{0})$.
Let $r_{k}^{*}=\eta_{k}-\eta_{k}^{*}$, and note that  
$\{r_{n}^{*}\}_{n\geq 1}$ is also a sequence of martingale differences, we have
\begin{align*}
\left\|\mE\left(r_k^{*}\otimes r_{k}^{*}| \F_n\right)\right\|\leq C\mE\{\|Y_{n+1}-\bX_n\bar{\bbeta}_n\|^{-1}| \F_n\}\|\Upsilon_n\|\,,
\end{align*}
Since $\|\Upsilon_n\|\rightarrow 0$ almost surely, 
it follows that $\left\|\mE\left(r_k^{*}\otimes r_{k}^{*}| \F_n\right)\right\| \rightarrow 0$ almost
surely. We then have $\frac{1}{\sqrt{n}}\sum_{k=1}^nr_{k+1}^{*}\xrightarrow{P} 0$.

In the following steps, we deal with the remainder terms.
As $\Upsilon_1$ is bounded almost surely,  it holds that $\frac{1}{\sqrt{n}} \| \frac{\Upsilon_1}{\gamma_1}\|\xrightarrow{p} 0$, and
$ \mE\{\|n^{-1/2} \Upsilon_{n+1} \gamma_n^{-1}\|^2\}\leq C n^{-1}n^{\alpha} \rightarrow 0$, with   Chebyshev's inequality 
$n^{-1/2} \Upsilon_{n+1} \gamma_n^{-1} \xrightarrow{p} 0$. Moreover, $|\gamma_k^{-1}-\gamma_{k-1}^{-1}|\leq 2\alpha \gamma k^{\alpha-1}$.  By   Lemma 5.5 of \cite{Cardot2017},  we have 
\begin{align*}
\frac{1}{n}\mE\left\|\sum_{k=2}^n\Upsilon_k\left(\frac{1}{\gamma_k}-\frac{1}{\gamma_{k-1}}\right)\right\|^2 \leq & \frac{1}{n}\bigg(\sum_{k=2}^n |\gamma_k^{-1}-\gamma_{k-1}^{-1}| \sqrt{\mE\|\Upsilon_k\|^2}\bigg)^2 \\
&~~~~~~~~~\lesssim \frac{1}{n}4\alpha^2\bigg(\sum_{k=2}^n k^{\alpha/2-1}\bigg)^2=\mO\left(\frac{1}{n^{1-\alpha}}\right)\,.
\end{align*}
We also have $n^{-1/2}\sum_{k=2}^n\Upsilon_k(\gamma_k^{-1}-\gamma_{k-1}^{-1}) \xrightarrow{p} 0$. Finally, it holds that
\begin{align*}
\frac{1}{n}\mE\left\|\sum_{k=1}^nr _k\right\|^2\leq & \frac{1}{n}\bigg(\sum_{k=1}^n\sqrt{\mE\|r_k\|^2}\bigg)^2\lesssim 
\frac{1}{n}\bigg(\sum_{k=1}^n\sqrt{\mE\|r_{k1}\|^2+\mE\|r_{k2}\|^2}\bigg)^2\\
&~~~~\leq \frac{1}{n}\bigg(\sum_{k=1}^n\sqrt{\mE\|\Upsilon_k\|^4+\mE\|\bar{\bbeta}_k-\bbeta\|^2\|\Upsilon_k\|^2}\bigg)^2
\leq \frac{1}{n}\bigg(\sum_{k=1}^n \frac{1}{k^{\theta/2}}\bigg)^2\\
&~~~~~~~=\mO\bigg(\frac{1}{n^{\theta-1}}\bigg)\,.
\end{align*}
Note that the third inequality relies on  Lemma \ref{lem3} and Lemma \ref{lem4},  and the last inequality is indicated by Lemma \ref{lem5}. As a result, we have $\frac{1}{\sqrt{n}}\sum_{k=1}^nr_k\xrightarrow{p} 0$, which completes the proof.

\end{proof}

\subsection{Proof of Theorem \ref{th5}}
\begin{proof}[Proof of Theorem \ref{th5}]
Let $\delta_{l}=\bbeta_n^r(t_l)-\bbeta(t_l)=\bar{\bbeta}_n(t_l)-\bbeta(t_l),  l=1,\ldots,m$, and  let $h$ be the linear interpolation of $\{t_{l}, \eta_l\}$, defined as 
\[
h(t)=\begin{cases}
\delta_1, \quad &0\leq t\leq t_1,\\
\frac{t_{l+1}-t}{t_{l+1}-t_l}\delta_{l}+\frac{t-t_l}{t_{l+1}-t_l}\delta_{l+1}, \quad &t_l\leq t\leq t_{l+1},\\
\delta_m,\quad &t_l\leq t\leq 1.
\end{cases}
\]
 Write $\bbeta_n^r(t)=Q_r(\bbeta(t)+h(t))$, where $Q_r$ is the operator associated with the $r$th order spline
interpolation. For a general function $f$,  it holds that $Q_r(f)$ is the solution to
\[
Q_r(f)=\argmin_{g\in W_2^r} \int \{g^{(r)}(t)\}^2 dt ~~~{\rm subject~ to~} g(t_{l})=f(t_{l}) (l=1,\ldots, m)\,.
\]
Since $Q_r$ is a linear operator,  we have $Q_r(\bbeta(t)+h(t))=Q_r(\bbeta)+Q_r(h)$, and 
\[
\|\bbeta_n^r-\bbeta\|_{2}\leq \|Q_r(\bbeta)-\bbeta\|_{2}+\|Q_r(h)\|_{2}\,.
\]
The approximation error led by the $r$th spline interpolation for $\bbeta$ can be bounded by \cite{Devore1993}, that is,
\[
\|Q_r(\bbeta)-\bbeta\|_{2}^2\lesssim m^{-2r}\,.
\]
Based on the  proof of Theorem \ref{th3},  it holds that $\|\bar{\bbeta}_n-\bbeta\|^2=\mO_p(1/n)$. Moreover,   by choosing ${\mathcal H}=L^2[0,1]$, we have $\int_0^1(\bar{\bbeta}_n(t)-\bbeta(t))\d t =O_p(1/n)$. Therefore, it follows that
\begin{align*}
\|Q_r(h)\|^2\lesssim \|h\|_{2}^2 \lesssim & m^{-1}\sum_{l=1}^m\delta^2_{l}=\frac{1}{m}\sum_{l=1}^m(\bar{\bbeta}_n(t_l)-\bbeta(t_l))^2\leq
\int_0^1(\bar{\bbeta}_n(t)-\bbeta(t))^2\d t\{1+o_p(1)\}\\
 &= \|{\bar\bbeta}_n-\bbeta\|_{2}^2\{1+o_p(1)\}=\mO_p\bigg(\frac{1}{n}\bigg)\,.
\end{align*}
\end{proof}





\end{appendices}



\begin{thebibliography}{99}
	

\bibitem[Bai et al.(1990)]{Bai1990} Bai, Z. D., Chen, X. R., Miao, B. Q. and Radhakrishna Rao, C. (1990). Asymptotic theory of least distances estimate in multivariate linear models. \textit{Statistics}, {\bf 21(4)}, 503--519.

\bibitem[Barber et al. (2017)]{Barber2017} Barber, R. F., Reimherr, M. and Schill, T. (2017). The function-on-scalar LASSO with applications to longitudinal GWAS. {\it Electronic Journal of Statistics}, {\bf 11},
1351--1389,

\bibitem[Bauer et al.(2018)]{Bauer2018}Bauer, A., Scheipl, F., K\"{u}chenhoff, H. and Gabriel, A. A. (2018). An introduction to semiparametric function-on-scalar regression. {\it Statistical Modelling}, {\bf 18(3-4)}, 346--364.

\bibitem[Cai and Hall (2006)]{Cai2006} Cai, T. and P. Hall (2006). Prediction in functional linear regression. 
\textit{The Annals of Statistics}, {\bf 34 (5)}, 2159--2179.

\bibitem[Cai and Yuan(2011)]{Cai2011} Cai,  T. and Yuan, M. (2011). Optimal estimation of the mean function based on discretely sampled functional data: phase transition. {\it The Annals of Statistics}, 39(5), 2330--2355.

\bibitem[Canay et al.(2021)]{Canay2021}Canay, I. A., Santos, A. and Shaikh, A. M. (2021). The wild bootstrap with a “small” number of “large” clusters. {\it Review of Economics and Statistics}, {\bf 103(2)}, 346--363.

\bibitem[Cardot et al.(2013)]{Cardot2013}  Cardot, H. and C\'{e}nac, P. and Zitt, P.A. (2013).Efficient and fast estimation of the geometric median in Hilbert spaces with an averaged stochastic gradient algorithm. 
 \textit{Bernoulli}, {\bf 19}, 18--43
 
\bibitem[Cardot and  Godichon-Baggioni(2017)]{CB2017} Cardot, H. and Godichon-Baggioni, A. (2017). Fast estimation of the median covariation matrix with application to online robust principal components analysis. \textit{Test}, {\bf 26(3)}: 461--480.
 
 \bibitem[Cardot et al.(2017)]{Cardot2017} Cardot, H., C\'{e}nac, P. and Godichon-Baggioni, A. (2017). Online estimation of the geometric median in Hilbert spaces: Nonasymptotic confidence balls. \textit{The Annals of Statistics}  {\bf 45}, 591--614.
 
 \bibitem[Chen et al.(2020)]{Chen2020} Chen, X., Lee, J. D. Tong, X. T., and Zhang, Y. (2020). Statistical inference for model
parameters in stochastic gradient descent. \textit{The Annals of Statistics}, {\bf 48(1)}:251--273.

\bibitem[Cheng et al.(2023)]{Cheng2023} Cheng, G., Peng, L. and Zou, C. (2023). Statistical inference for ultrahigh dimensional location parameter based on spatial median. arXiv preprint arXiv:2301.03126.


\bibitem[Devore and Lorentz(1993)]{Devore1993}Devore, R. A. and Lorentz, G. G.(1993) {\it Constructive approximation}, volume 303. Springer Science and Business Media.

\bibitem[Feng et al.(2018)]{Fang2018} Fang, Y., Xu, J. and Yang, L. (2018). Online bootstrap confidence intervals for the stochastic
gradient descent estimator. \textit{J. Mach. Learn. Res.}, {\bf 19(1)}:3053--3073.

\bibitem[Ghosal and Maity(2023)]{Ghosal2023}Ghosal, R. and Maity, A. (2023). Variable selection in nonlinear function‐on‐scalar regression. {\it Biometrics}, {\bf 79(1)}, 292--303.

\bibitem[Goldsmith et al.(2015)]{Goldsmith2015}Goldsmith, J., Zipunnikov, V. and  Schrack, J. (2015). Generalized multilevel function‐on‐scalar regression and principal component analysis. {\it Biometrics}, {\bf 71(2)}, 344--353.

\bibitem[Goldsmith and Kitago (2016)]{Goldsmith2016}Goldsmith, J. and T. Kitago (2016). Assessing systematic effects of stroke on motor control by using hierarchical function-on-scalar regression. {\it Journal of the Royal Statistical Society: Series
C (Applied Statistics)}, {\bf 65 (2)}, 215--236.


\bibitem[Godichon-Baggioni(2016)]{Godichon2016} Godichon-Baggioni, A. (2016). Estimating the geometric median in Hilbert spaces with stochastic gradient algorithms; $L^p$ and almost sure rates of convergence. \textit{J. Multivariate Anal.}, {\bf 146}, 209--222.

\bibitem[Godichon-Baggioni(2019)]{God2019} Godichon-Baggioni, A. (2019). Online estimation of the asymptotic variance for averaged stochastic gradient algorithms. \textit{J. Stat. Plan. Infer.}, {\bf 203}, 1--19.



\bibitem[Fan and Reimherr(2017)]{Fan2017}Fan, Z. and Reimherr, M. (2017). High-dimensional adaptive function-on-scalar regression. {\it Econometrics and statistics}, {\bf} 1, 167--183.

\bibitem[Hall et al.(2006)]{Hall2006} Hall, P. , Müller, H. and Wang, J. (2006). Properties of principal component methods for functional and longitudinal data analysis. {\it The Annals of Statistics}, 34(3).1493--1517. 


\bibitem[Jakubowski(1988)]{Jakubowski1988} Jakubowski, A. (1988). Tightness criteria for random measures with application to the principle of conditioning in Hilbert spaces. \textit{Probab. Math. Statist.}, {\bf 9(1)}:95--114.

\bibitem[Jhun and Choi(2009)]{Jhun2009} Jhun, M. and Choi, I. (2009). Bootstrapping least distance estimator in the multivariate regression model. \textit {Computational Statistics and Data Analysis}, {\bf 53(12)}, 4221--4227.

 

\bibitem[Kraus and Panaretos(2012)]{Kraus2012} Kraus, D. and Panaretos, V. M. (2012). Dispersion operators and resistant second-order functional data analysis. {\it Biometrika}, {\bf 99}, 813--832.

\bibitem[Lavrentyev and Nazarov(2016)]{Lavrentyev2016}Lavrentyev, V. and Nazarov, L.  (2016). A functional central limit theorem for hilbert-valued martingales. {\it Lobachevskii Journal of Mathematics}, {\bf 37},138--145.

\bibitem[Lee et al.(2022)]{Lee2022}
Lee, S., Liao, Y., Seo, M. H., and Shin, Y. (2022). Fast and robust online inference with stochastic gradient descent via random scaling.  {\it Proceedings of the AAAI Conference on Artificial Intelligence} {\bf 36}, 7381--7389).


\bibitem[Li et al.(2022)]{Li2022} Li, X., Liang, J., Chang, X., and Zhang, Z. (2022). Statistical estimation and online inference via local sgd.  {\it Conference on Learning Theory} {\bf 1}, 1613--1661).

\bibitem[Liu et al.(2022)]{Liu2020} Liu, Y., Li, M., and Morris, J. S. (2022). On Function-on-Scalar Quantile Regression. arXiv preprint arXiv:2002.03355.

\bibitem[Liu et al.(2022)]{Liu2022} Liu, R., Yuan, M., and Shang, Z. (2022). Online statistical inference for parameters estimation with linear-equality constraints.  \textit{J. Multivariate Anal.} {\bf 191}, 105017.

\bibitem[Minsker(2015)]{Minsker2015}Minsker, S.(2015). Geometric median and robust estimation in banach spaces. {\it Bernoulli,} {\bf 21(4)}, 2308.

\bibitem[Morris (2015)]{Morris2015} Morris, J. S. (2015). Functional regression. {\it Annual Review of Statistics and Its Application}, {\bf 2}, 321--359.

\bibitem[M\"{o}tt\"{o}nen, Nordhausen, and Oja(2010)]{Mottonen2010} M\"{o}tt\"{o}nen, J., Nordhausen, K., and Oja, H. (2010). Asymptotic theory of the spatial median. {\it Nonparametrics and Robustness in Modern Statistical Inference and Time Series Analysis} {\bf 7}, 182--193.

\bibitem[Nordhausen and  Oja(2011)]{Oja2011}Nordhausen, K. and Oja, H. (2011). Multivariate $L_1$ methods: The package MNM. \textit{Journal of Statistical Software} {\bf 43}, 1--28.

\bibitem[Oja(2010)]{Oja2010}
Oja, H. (2010) \textit{Multivariate nonparametric methods with R: An approach based on spatial signs and ranks}. Lecture Notes in Statistics, Springer, New York.

\bibitem[Padilla et al.(2022)]{Padilla2022}Padilla, O. H. M., W. Tansey, and Y. Chen (2022).  Quantile regression with ReLU Networks: Estimators and minimax rates, \textit{J. Mach. Learn. Res.}, {\bf 23(1)}:11251--11292.

\bibitem[Polyak and Juditsky(1992)]{Juditsky1992}Polyak, B. and Juditsky, A. (1992). Acceleration of stochastic approximation. \textit{SIAM J. Control and Optimization}, {\bf 30}:838--855.

\bibitem[Reiss et al.(2010)]{Reiss2010} Reiss, P. T.,  Huang, L. and Mennes, M. (2010). Fast function-on-scalar regression with penalized basis expansions. \textit{The International Journal of Biostatistics}, {\bf 6 (1)}.

\bibitem[Roberts, Mueller and Mclntyre(2017)]{Roberts2017} Roberts, D., Mueller, N. and McIntyre, A. (2017). High-dimensional pixel composites from earth observation time series. {\it IEEE Transactions on Geoscience and Remote Sensing}, {\bf 55}(11), 6254--6264.

\bibitem[Ramsay and Silverman (2005)]{Ramsay2005} Ramsay, J. and Silverman. B (2005). Functional Data Analysis. Springer-Verlag New York.

\bibitem[Rice and Silverman(1991)]{Rice1991} Rice, J. A. and Silverman, B. W. (1991). Estimating the mean and covariance structure nonparametrically when the data are curves. {\it Journal of the Royal Statistical Society. Series B: Methodological}, 53(1), 233--243.


\bibitem[Vardi and Zhang(2000)]{Vardi2000}Vardi, Y. and Zhang, C.-H. (2000). The multivariate $L_1$-median and associated data depth. 
\textit{Proc. Natl. Acad. Sci}, {\bf 97(4)},1423--1426.

\bibitem[Wang et al.(2017)]{Wang2017} Wang, X., H. Zhu, and A. D. N. Initiative (2017). Generalized scalar--on--image regression models via total variation. {\it Journal of the American Statistical Association}, {\bf 112 (519)}, 1156--1168.

\bibitem[Xie et al.(2023)]{Xie2023} Xie, J., Shi, E., Sang, P., Shang, Z., Jiang, B., and Kong, L. (2023). Scalable inference in functional linear regression with streaming data. arXiv preprint arXiv:2302.02457

\bibitem[Yang et al.(2019)]{Yang2019}Yang, H., Baladandayuthapani, V., Rao, A. U., and  Morris, J. S. (2019). Quantile function on scalar regression analysis for distributional data. {\it Journal of the American Statistical Association}.

\bibitem[Zhu and Dong(2021)]{Zhu2021}
Zhu, Y. and Dong, J. (2021). On constructing confidence region for model parameters in stochastic gradient descent via batch means. {\it 2021 Winter Simulation Conference} {\bf 1}, 1--12.

\bibitem[Zhang et al.(2022)]{Zhang2022}Zhang, Z., Wang, X., Kong, L. and Zhu, H. (2022). High-dimensional spatial quantile function-on-scalar regression. {\it Journal of the American Statistical Association},{\bf 117}, 1563--1578

\bibitem[Zhang et al.(2017)]{Zhang2017}Zhang, S., Guo, B., Dong, A., He, J., Xu, Z. and Chen, S. X. Cautionary tales on air-quality improvement in Beijing. {\it Proceedings of the Royal Society A: Mathematical,
Physical and Engineering Sciences} 473 (2205), 20170457.

\end{thebibliography}
\bibliographystyle{plainnat}

\end{document}